\tikzstyle{red} = [rectangle, rounded corners, minimum height=1cm,text centered, text width=6cm, draw=black, fill=red!30]
\tikzstyle{gray} = [rectangle, rounded corners, minimum height=1cm,text centered, text width=6cm, draw=black, fill=gray!30]
\tikzstyle{arrow} = [thick,->,>=stealth]
\newcommand{\distas}[1]{\mathbin{\overset{#1}{\kern\z@\sim}}}%
\newcommand{\bm}[1]{\mathbf{#1}}
\newsavebox{\mybox}\newsavebox{\mysim}
\newcommand{\distras}[1]{%
  \savebox{\mybox}{\hbox{\kern3pt$\scriptstyle#1$\kern3pt}}%
  \savebox{\mysim}{\hbox{$\sim$}}%
  \mathbin{\overset{#1}{\kern\z@\resizebox{\wd\mybox}{\ht\mysim}{$\sim$}}}%
}
\newtheorem{theorem}{Theorem}
\newtheorem{definition}{Definition}
\newtheorem{lemma}{Lemma}
\newtheorem{proposition}{Proposition}
\newcolumntype{C}[1]{>{\centering\let\newline\\\arraybackslash\hspace{0pt}}m{#1}}
\newcommand{\be}{\begin{equation}}
\newcommand{\ee}{\end{equation}}
\newcommand{\bi}{\begin{itemize}}
\newcommand{\ei}{\end{itemize}}
\newcommand{\ben}{\begin{enumerate}}
\newcommand{\een}{\end{enumerate}}
\newcommand{\stb}{\State $\bullet$ \;}
\newcommand{\crd}[1]{{\color{red}{#1}}}
\newcommand{\cblb}[1]{{\color{blue}{\textbf{#1}}}}
\newcommand{\ubar}[1]{\underaccent{\bar}{#1}}
\DeclareMathOperator*{\argmin}{\arg\!\min}
\let\oldbibliography\thebibliography
\renewcommand{\thebibliography}[1]{\oldbibliography{#1}
\setlength{\itemsep}{0pt}} 
\newcommand{\blind}{1}
\patchcmd{\footnotemark}{\stepcounter{footnote}}{\refstepcounter{footnote}}{}{}
\newcounter{savecntr}
\newcounter{restorecntr}
\author{
\textsc{Simon Mak}, \textsc{V. Roshan Joseph}\\[2mm] 
\normalsize Georgia Institute of Technology \\ 
\vspace{-5mm}
}
\title{Projected support points: a new method for high-dimensional data reduction}
\begin{document}
\def\spacingset#1{\renewcommand{\baselinestretch}%
{#1}\small\normalsize} \spacingset{1}

\if1\blind
{
  \title{\bf Projected support points: a new method for high-dimensional data reduction}
  \small
  \author{Simon Mak\setcounter{savecntr}{\value{footnote}}\thanks{Stewart School of Industrial and Systems Engineering, Georgia Institute of Technology, Atlanta, GA}, \; V. Roshan Joseph\setcounter{restorecntr}{\value{footnote}}%
  \setcounter{footnote}{\value{savecntr}}\footnotemark
  \setcounter{footnote}{\value{restorecntr}}
\footnote{Corresponding author} 
\thanks{This work is supported by the U. S. Army Research Office grant W911NF-17-1-0007, and by the NSF DMS grant 1712642.}
}
  \maketitle
} \fi

\if0\blind
{
  \bigskip
  \bigskip
  \bigskip
  \begin{center}
    {\LARGE\bf Projected support points: a new method for high-dimensional data reduction}
\end{center}
  \medskip
} \fi

\bigskip

\begin{abstract}
In an era where big and high-dimensional data is readily available, data scientists are inevitably faced with the challenge of reducing this data for expensive downstream computation or analysis. To this end, we present here a new method for reducing high-dimensional big data into a representative point set, called projected support points (PSPs). A key ingredient in our method is the so-called sparsity-inducing (SpIn) kernel, which encourages the preservation of low-dimensional features when reducing high-dimensional data. We begin by introducing a unifying theoretical framework for data reduction, connecting PSPs with fundamental sampling principles from experimental design and Quasi-Monte Carlo. Through this framework, we then derive sparsity conditions under which the curse-of-dimensionality in data reduction can be lifted for our method. Next, we propose two algorithms for one-shot and sequential reduction via PSPs, both of which exploit big data subsampling and majorization-minimization for efficient optimization. Finally, we demonstrate the practical usefulness of PSPs in two real-world applications, the first for data reduction in kernel learning, and the second for reducing Markov Chain Monte Carlo (MCMC) chains.

\end{abstract}

\noindent
{\it Keywords:} Data reduction, high-dimensional statistics, experimental design, Quasi-Monte Carlo, kernel learning, MCMC reduction.
\vfill

\newpage
\spacingset{1.45} 

\section{Introduction}

In an era with remarkable advancements in computer engineering, computational algorithms and mathematical modeling, statisticians and data scientists are inevitably faced with the challenge of working with \textit{big} and \textit{high-dimensional} data. For many such applications, data reduction -- the reduction of big data (assumed here to be on $\mathbb{R}^p$) to a smaller, representative dataset -- is a necessary first step. The reason for this is two-fold. First, the interest often lies not in the big data itself, but the propagation of this data via some downstream computation, which we denote by function $g: \mathbb{R}^p \rightarrow \mathbb{R}$. When $g$ is \textit{costly} to evaluate (e.g., time-consuming or expensive), this propagation can be performed only for a small fraction of the big data. Second, the manipulation of big data (e.g., for inference or prediction) can demand massive \textit{memory} and \textit{storage} costs. Such costs often exceed the resources available in standard computers, and data reduction is a necessary step to achieve any analysis. Both problems are further compounded in high dimensions (i.e., $p \gg 1$), and careful analysis of $g$ is needed to understand the specific sparsity structure needed to ensure a meaningful reduction. To this end, we present a novel methodology for reducing high-dimensional data into a representative dataset, called \textit{projected support points} (PSPs), which preserves low-dimensional attributes via a new kernel function.

With the increasing prevalence of big and high-dimensional data, our reduction method can be used in a broad range of real-world statistical applications. One such application is for speeding up kernel methods \citep{Fea2001} in statistical learning, which are computationally expensive for large training datasets \citep{RW2006}. By reducing big data into a smaller, representative dataset, the proposed PSPs can allow for effective learning given a computation budget. Another important application is for reducing Markov Chain Monte Carlo (MCMC) samples in Bayesian computation \citep{Gea1995}. Our reduction method can be particularly effective for Bayesian modeling of engineering problems, where high-dimensional parameters from MCMC chains often need to be pushed forward via expensive simulations \citep{Mea2017}. Other applications, among many, include computer experiment design \citep{Sea2013}, uncertainty quantification \citep{Smi2013}, and scenario reduction in stochastic programming \citep{Dea2003}.


The problem of data reduction is an active area of research among statisticians and computer scientists, and much progress has been made in recent years. Some notable work (among many) include \cite{HK2005}, \cite{Fea2011} and \cite{Hea2016}, who proposed reduction methods for $k$-means clustering, Gaussian mixture model fitting, and Bayesian logistic regression, respectively. Another popular approach to data reduction is via \textit{mean-matching} (see \citealp{Gea2009}) -- the idea is to have the reduced data well-approximate the full data, by ensuring their downstream sample means are close over a class of ``reasonable'' downstream maps $g$. Recent work on this includes (a) \textit{kernel herding} \citep{Cea2012}, which using a kernel $\gamma$, generates a point sequence to successively match sample means over a function space for $g$, and (b) support points (SPs, \citealp{MJ2016b}), which employ parallelized convex programming and pairwise distances to efficiently perform reduction. One disadvantage of these two methods is that it assumes the downstream map $g$ is active in \textit{all} $p$ variables; in high dimensions ($p \gg 1$), it is much more likely that $g$ is active for only a \textit{small fraction} of these variables. By neglecting low-dimensional structure in $g$, existing reduction methods can experience a so-called \textit{curse-of-dimensionality}, in that they yield poor reduction of high-dimensional data. Our methodology addresses this, by (a) introducing a new \textit{Sparsity-Inducing} (SpIn) kernel for targeting low-dimensional features in reduction, and (b) investigating the sparsity structure on $g$ needed to theoretically lift this curse-of-dimensionality.

The notion of low-dimensional structure in high-dimensional functions has been explored in both the numerical integration (Quasi-Monte Carlo, or QMC) and experimental design communities. In QMC, this began with the idea of \textit{effective dimension} \citep{Cea1997,SW1998}, which quantified the belief that certain dimensions of an integrand are more important than others. In recent years, this has culminated into a body of work investigating the \textit{tractability} conditions of integration on the uniform unit hypercube \citep{KS2005, Dea2013}. Simply put, these conditions provide the sparsity structure needed to lift the curse-of-dimensionality for high-dimensional integration \citep{NW2008}. This attention to low-dimensional structure is mirrored in experimental design. Indeed, the principles of effect sparsity, hierarchy and heredity \citep{BH1961, HW1992, WH2011} -- fundamental principles for designing and analyzing experimental data -- can be seen as low-dimensional guiding rules for learning high-dimensional functions. Recently, these principles were further developed in \cite{Jea2015}, who proposed a new experimental design with good space-filling properties on projections of a uniform design space. The above literature, however, investigates sampling strategies for functions with an underlying \textit{uniform} measure on its domain, and are therefore not directly applicable for big data reduction (since data is almost \textit{never} uniformly distributed in practice). To this end, our approach provides a unifying framework \textit{extending} this body of work for data reduction, from which novel theoretical insights and practical algorithms can be derived for high-dimensional reduction.

This paper is organized as follows. Section \ref{sec:prob} presents the data reduction framework, and introduces the proposed PSP method. Section \ref{sec:theory} investigates its theoretical properties, and establishes sparsity conditions for lifting the curse-of-dimensionality. Section \ref{sec:bayesian} provides guidelines on the specification of the SpIn kernel. Section \ref{sec:algo} presents two algorithms for one-shot and sequential reduction. Section \ref{sec:app} demonstrates the effectiveness of PSPs in simulations and in two real-world applications. Finally, Section \ref{sec:concl} concludes with thoughts on future work. For brevity, all proofs of technical results are deferred to the Appendix.

\section{Problem framework}
\label{sec:prob}

\begin{figure}[t]
\centering
\includegraphics[width=0.85\textwidth]{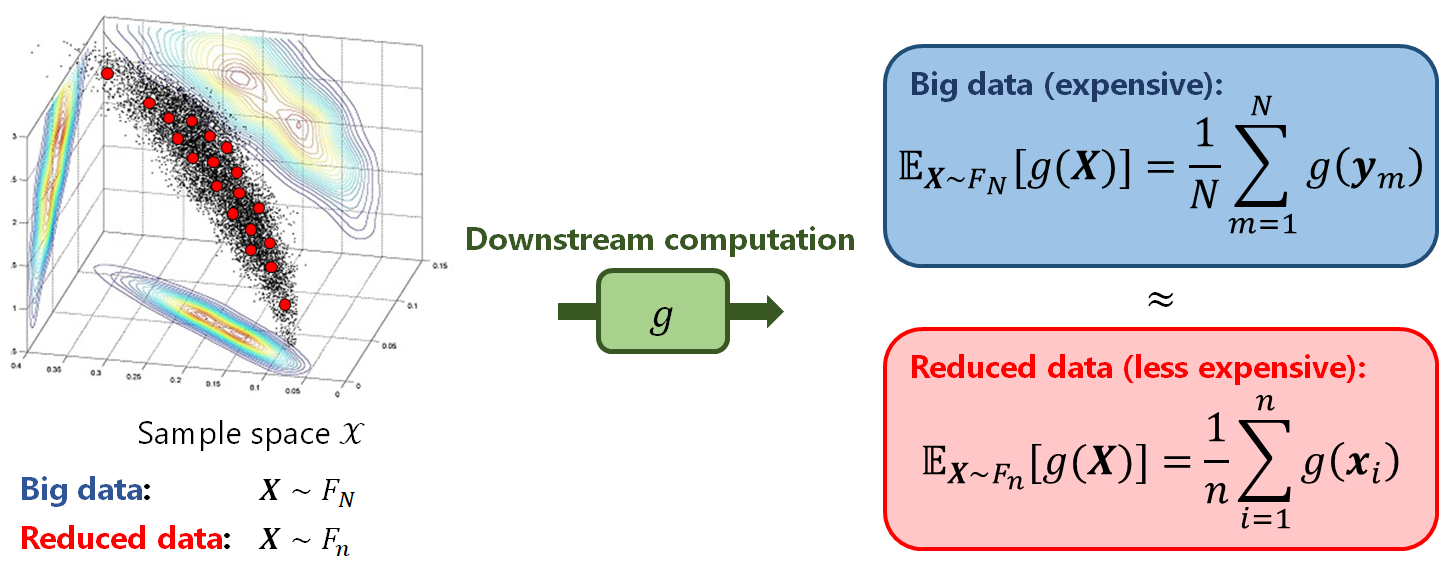}
\caption{A visualization of the considered framework for data reduction.}
\label{fig:framework}
\end{figure}

In this section, we first introduce the considered data reduction framework and the proposed PSP method, then illustrate via a motivating example why PSPs can improve upon existing methods for high-dim. reduction.

\subsection{Data reduction}

We begin by highlighting in Figure \ref{fig:framework} the three components of our data reduction framework. First, on the left, the black points represent the big data $\{\bm{y}_m\}_{m=1}^N \subseteq \mathcal{X} \subseteq \mathbb{R}^p$ to be reduced, with empirical distribution (e.d.f.) $F_N$, where $N \gg 1$ is the size of the big data. The red points show one reduction of this big data to a smaller point set $\mathcal{D} := \{\bm{x}_i\}_{i=1}^n$, where $n \ll N$ is the reduced sample size. Next, in the middle, the green arrow shows the downstream computation $g: \mathcal{X} \rightarrow \mathbb{R}$. The implicit assumption here is that $g$ is black-box and \textit{costly} to evaluate, otherwise there would be no need for data reduction. Given the costly nature of $g$, we assume the sample size $n$ is set to maximize the number of evaluations given a practical cost budget (e.g., on computational resources). Finally, on the right, our goal is to perform reduction in such a way that the ``reduced'' sample mean $\mathbb{E}_{\bm{X} \sim F_n}[ g(\bm{X}) ]$ well-approximates the ``full'' sample mean $\mathbb{E}_{\bm{X} \sim F_N}[ g(\bm{X}) ]$, over a large class of ``reasonable'' functions for $g$. This mean-matching approach is applicable to a wide range of statistical applications (see Section \ref{sec:app}).

As typical in machine learning (see, e.g., \citealp{Bot2010}), we assume that big data is drawn from an underlying distribution function (d.f.) $F$; this allows for more amenable analysis and clearer exposition of ideas. In particular, in order to derive meaningful theoretical insights in the next two sections, we adopt the limiting view that the \textit{distribution} $F$ is the big data to reduce; this is akin to having an \textit{infinite} amount of data. We return to the finite view of big data when presenting the proposed algorithms in Section \ref{sec:algo}. 

Of course, in practice the downstream computation $g$ is never known. One solution (which we adopt) is to first assume $g$ comes from a space (call this $\mathcal{H}$) of ``reasonable'' functions, then perform data reduction by ensuring the mean-matching goal is satisfied for \textit{all} $g \in \mathcal{H}$. We briefly review two important ingredients: the reproducing kernel Hilbert space and the kernel discrepancy, then show how these can be connected for data reduction.

The first ingredient is the reproducing kernel Hilbert space (RKHS), defined below:
\begin{definition}[RKHS; \citealp{Aro1950}]
Let $\gamma : \mathcal{X} \times \mathcal{X} \rightarrow \mathbb{R}$ be a symmetric, positive-definite (p.d.) kernel. The \textup{reproducing kernel Hilbert space (RKHS)} $(\mathcal{H}_\gamma, \langle \cdot,\cdot \rangle_{\gamma})$ for kernel $\gamma$ is comprised of the function space:
\begin{equation}
\mathcal{H}_{\gamma} := {\textup{span}\{ \gamma(\cdot,\bm{x}) : \bm{x} \in \mathcal{X} \}},
\label{eq:rkhs}
\end{equation}
endowed with the inner product:
\begin{equation}
\langle f , g \rangle_{\gamma} := \sum_{j=1}^s \sum_{j'=1}^{s'} \alpha_j \beta_{j'} \gamma(\bm{x}_j, \bm{x}_{j'}), \; f(\bm{x}) = \sum_{j=1}^s \alpha_j \gamma(\bm{x},\bm{x}_j), \; g(\bm{x}) = \sum_{j'=1}^{s'} \beta_{j'} \gamma(\bm{x},\bm{x}_{j'}).
\label{eq:inprod}
\end{equation}
\vspace{-0.2cm}
\label{def:rkhs}
\end{definition}
\noindent In words, given kernel $\gamma$, its RKHS $\mathcal{H}_\gamma$ can be constructed by taking the span of the reproducing kernel feature map $\gamma(\cdot, \bm{x})$, over domain $\mathcal{X}$. We will use the RKHS to model the space of ``reasonable'' downstream maps $g$, as it offers nice theoretical properties.

The second ingredient is the kernel discrepancy, defined below:
\begin{definition}[Kernel discrepancy; \citealp{Hic1998}]
Let $F$ be a d.f. on $\mathcal{X} \subseteq \mathbb{R}^p$, and let $F_n$ be the e.d.f. of a point set $\{\bm{x}_i\}_{i=1}^n \subseteq \mathcal{X}$. For a symmetric, p.d. kernel $\gamma$, the \textup{kernel discrepancy} between $F$ and $F_n$ is defined as:
\begin{equation}
D_{\gamma}(F,F_n) := \sqrt{\int_{\mathcal{X}} \int_{\mathcal{X}} \gamma(\bm{x},\bm{y})\; d[F-F_n](\bm{x}) \; d[F-F_n](\bm{y}) }.
\label{eq:kerdis}
\end{equation}
\vspace{-0.4cm}
\label{def:kerdis}
\end{definition}
\noindent In words, the kernel discrepancy $D_{\gamma}(F,F_n)$ measures how different $F$ and $F_n$ are, by weighing the difference in measure $F-F_n$ with kernel $\gamma$. A larger discrepancy suggests the point set $\{\bm{x}_i\}_{i=1}^n$ differs greatly from $F$, whereas a smaller discrepancy suggests the point set is quite similar to $F$. This kernel discrepancy is also known as the \textit{maximum mean discrepancy} in the machine learning literature, where it has been successfully applied for goodness-of-fit testing and neural network training (see, e.g., \citealp{Gea2012}).

These two ingredients can then be linked via the following upper bound:
\begin{lemma}[Koksma-Hlawka; \citealp{Hic1998}]
Let $\gamma$ be a symmetric, p.d. kernel on $\mathcal{X}$, and let $F$ and $F_n$ be as in Definition \ref{def:kerdis}. With $F_n$ approximating $F$, the \textup{integration error} of $g \in \mathcal{H}_\gamma$, defined as:
\begin{equation}
I(g;F,F_n) := \left| \int_{\mathcal{X}} g(\bm{x}) dF(\bm{x}) - \frac{1}{n}\sum_{i=1}^n g(\bm{x}_i) \right|,
\label{eq:interr}
\end{equation}
can be \textup{uniformly bounded} as:
\begin{equation}
\sup_{g \in \mathcal{H}_\gamma, \|g\|_{\gamma} \leq 1} I(g;F,F_n) = D_{\gamma}(F,F_n).
\label{eq:kh}
\end{equation}
\vspace{-0.8cm}
\label{lem:kh}
\end{lemma}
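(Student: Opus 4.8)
The plan is to exploit the reproducing property of $\mathcal{H}_\gamma$ to convert the integration error into a Hilbert-space inner product, and then apply Cauchy--Schwarz. First I would rewrite the error in \eqref{eq:interr} as a single integral against the signed measure $F - F_n$: since $\frac{1}{n}\sum_{i=1}^n g(\bm{x}_i) = \int_{\mathcal{X}} g\, dF_n$, we have
\[
I(g;F,F_n) = \left| \int_{\mathcal{X}} g(\bm{x})\, d[F-F_n](\bm{x}) \right|.
\]
Next, using the reproducing property $g(\bm{x}) = \langle g, \gamma(\cdot,\bm{x}) \rangle_\gamma$ (valid for every $g \in \mathcal{H}_\gamma$ and $\bm{x} \in \mathcal{X}$ by Definition \ref{def:rkhs}), I would introduce the representer
\[
\xi := \int_{\mathcal{X}} \gamma(\cdot,\bm{x})\, d[F-F_n](\bm{x}) \in \mathcal{H}_\gamma,
\]
and establish the key identity $\int_{\mathcal{X}} g\, d[F-F_n] = \langle g, \xi \rangle_\gamma$ by pulling the continuous linear functional $\langle g, \cdot\rangle_\gamma$ inside the integral.

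With the error now written as $I(g;F,F_n) = |\langle g, \xi \rangle_\gamma|$, Cauchy--Schwarz gives $|\langle g, \xi \rangle_\gamma| \le \|g\|_\gamma \|\xi\|_\gamma \le \|\xi\|_\gamma$ whenever $\|g\|_\gamma \le 1$, furnishing the upper bound. For the matching lower bound I would check that equality is attained by the normalized representer $g^\star = \xi / \|\xi\|_\gamma$ (when $\xi \neq 0$; the case $\xi = 0$ is trivial since both sides vanish), so the supremum equals $\|\xi\|_\gamma$ exactly. Finally I would evaluate this norm using bilinearity of $\langle \cdot,\cdot\rangle_\gamma$ together with the reproducing identity $\langle \gamma(\cdot,\bm{x}), \gamma(\cdot,\bm{y})\rangle_\gamma = \gamma(\bm{x},\bm{y})$, obtaining
\[
\|\xi\|_\gamma^2 = \int_{\mathcal{X}}\int_{\mathcal{X}} \gamma(\bm{x},\bm{y})\, d[F-F_n](\bm{x})\, d[F-F_n](\bm{y}) = D_\gamma(F,F_n)^2,
\]
which is exactly \eqref{eq:kerdis} and completes the identity \eqref{eq:kh}.

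The main obstacle is rigorously justifying the two interchanges of integration and inner product: that the Bochner integral defining $\xi$ exists as a genuine element of $\mathcal{H}_\gamma$, and that $\langle g, \int \gamma(\cdot,\bm{x})\, d[F-F_n](\bm{x}) \rangle_\gamma = \int \langle g, \gamma(\cdot,\bm{x}) \rangle_\gamma\, d[F-F_n](\bm{x})$. This rests on an integrability condition such as $\int_{\mathcal{X}} \sqrt{\gamma(\bm{x},\bm{x})}\, d|F-F_n|(\bm{x}) < \infty$; since $\|\gamma(\cdot,\bm{x})\|_\gamma = \sqrt{\gamma(\bm{x},\bm{x})}$, this makes the feature map $\bm{x} \mapsto \gamma(\cdot,\bm{x})$ Bochner integrable into $\mathcal{H}_\gamma$, after which continuity of the bounded functional $\langle g,\cdot\rangle_\gamma$ lets it commute with the integral. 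Once these measure-theoretic technicalities are dispatched, the remaining steps are routine Hilbert-space manipulations, and the same Bochner argument applied twice also validates the final double-integral computation of $\|\xi\|_\gamma^2$.
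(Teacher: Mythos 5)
Your proof is correct and is precisely the standard argument behind this result: rewrite the error as $\langle g,\xi\rangle_\gamma$ via the reproducing property and the mean embedding $\xi$ of the signed measure $F-F_n$, apply Cauchy--Schwarz, and note that the normalized representer attains equality so the supremum equals $\|\xi\|_\gamma = D_\gamma(F,F_n)$. The paper does not include its own proof (it cites \citealp{Hic1998}), but your argument, including the Bochner-integrability caveat needed to justify swapping the integral with the inner product, is exactly the one given in that reference and in the maximum-mean-discrepancy literature.
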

\noindent This lemma shows, for $g$ in the unit ball $\mathcal{B}_\gamma = \{g: \|g\|_{\gamma} \leq 1 \}$, the worst-case integration error for a point set $\{\bm{x}_i\}_{i=1}^n$ approximating $F$ is precisely the kernel discrepancy $D_{\gamma}(F,F_n)$. Viewed another way, by finding a reduction $\{\bm{x}_i\}_{i=1}^n$ which well-matches the big data $F$ by minimizing discrepancy $D_{\gamma}(F,F_n)$, we ensure its downstream computation also well-matches that for the full data, over all possible downstream computations $g \in \mathcal{B}_\gamma$. Recall from the Introduction that, for high-dim. problems (i.e., $p$ large), the downstream map $g$ is typically active for only a small subset of the $p$ variables. Our strategy is to incorporate this low-dim. prior belief into a new kernel for $\gamma$, called the \textit{sparsity-inducing kernel}, so that the unit ball $\mathcal{B}_\gamma$ consists of high-dim. functions with this desired low-dim. structure.


\subsection{Existing reduction methods}
\label{sec:exi}
Before presenting this new kernel, we provide a brief overview of two related methods in the literature: kernel herding \citep{Cea2012}, and support points \citep{MJ2016b}. First, given a symmetric p.d. kernel $\gamma$, kernel herding (or simply herding) generates the reduced dataset $\{\bm{x}_i\}_{i=1}^n$ via the sequential optimization scheme: 
\begin{equation}
\bm{x}_{n+1} \leftarrow \underset{\bm{x} \in \mathcal{X}}{\textup{Argmax}} \left\{ \mathbb{E}_{\bm{Y} \sim F}[\gamma(\bm{x},\bm{Y})] - \frac{1}{n+1} \sum_{i=1}^n \gamma(\bm{x},\bm{x}_i) \right\}.
\label{eq:herd}
\end{equation}
Here, $\mathbb{E}_{\bm{Y} \sim F}[\gamma(\bm{x},\bm{Y})]$ is typically approximated by the (finite) big data mean $\mathbb{E}_{\bm{Y} \sim F_N}[\gamma(\bm{x},\bm{Y})]$. By expanding \eqref{eq:kerdis}, the above scheme can be shown to be a greedy, point-by-point minimization of discrepancy $D_{\gamma}(F,F_n)$. In practice, standard kernels are used for $\gamma$, the most popular being the standard Gaussian kernel $\exp\{ -\sum_{l=1}^p (x_l - y_l)^2 \}$. For simplicity, we refer to herding as the point sequence in \eqref{eq:herd} with $\gamma$ as the standard Gaussian kernel.

The support points (SPs) in \cite{MJ2016b} also minimize the discrepancy $D_{\gamma}(F,F_n)$, with kernel $\gamma(\bm{x},\bm{y}) = -\|\bm{x}-\bm{y}\|_2$. The resulting discrepancy with this distance kernel is known as the \textit{energy distance} \citep{SZ2013}, a popular non-parametric test statistic for goodness-of-fit. \cite{MJ2016b} showed that SPs converge in distribution to the desired measure $F$, and enjoy improved performance (in terms of integration rate) over Monte Carlo sampling over a large function class. This distance kernel also allows for efficient data reduction via parallelized difference-of-convex programming.

\begin{figure}
\centering
\includegraphics[width=\textwidth]{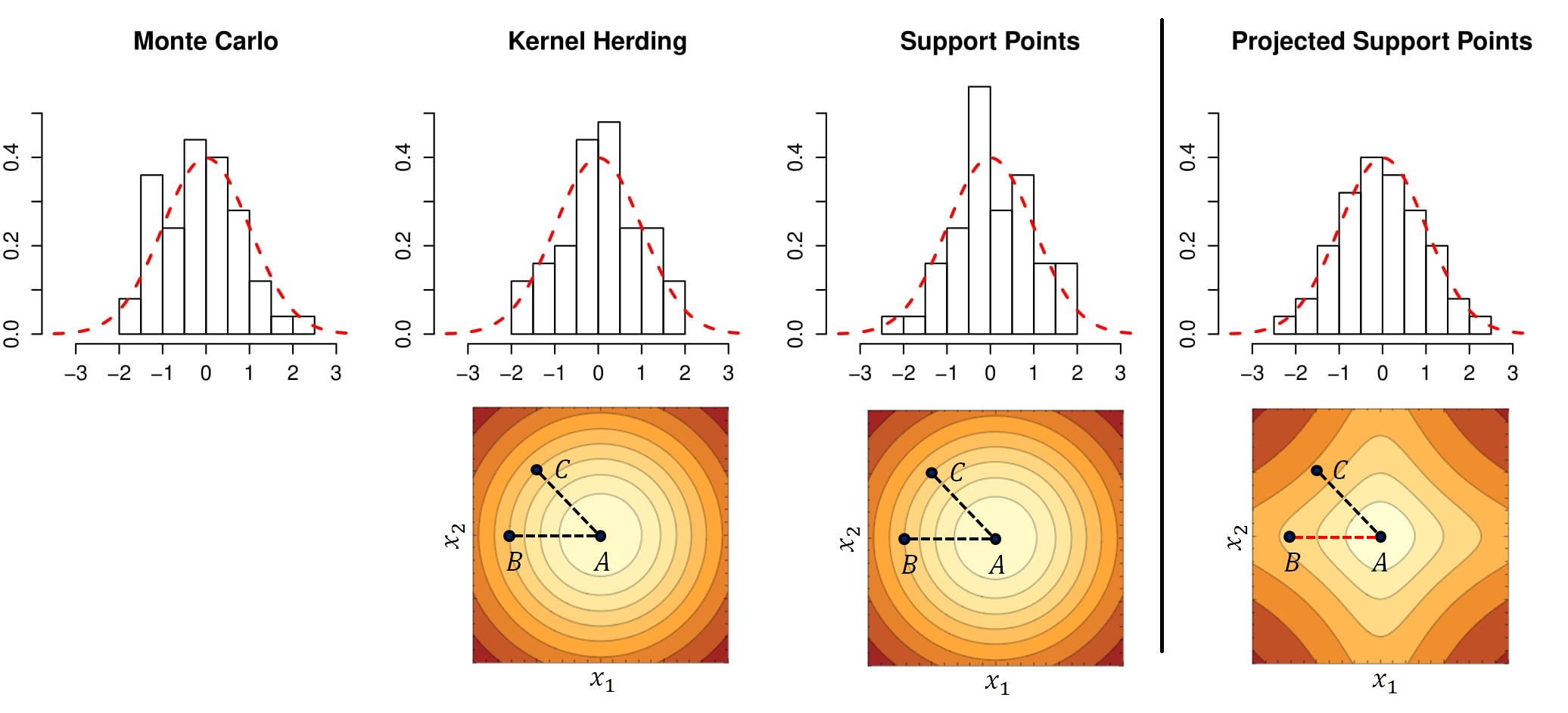}
\caption{(Top) 1-d projections of the reduced $n=50$ point sets for $F = $ 10-d i.i.d. $\mathcal{N}(0,1)$. (Bottom) Contours of the kernel $\gamma$ used for reduction.}
\label{fig:projill}
\end{figure}

For the desired goal of data reduction, however, both herding and SPs have a key disadvantage: they can yield poor reduction of low-dim. features in high-dim. data. To see this, take the following example. Suppose big data is generated from $F = $ 10-d i.i.d. $\mathcal{N}(0,1)$ distribution, and consider its reduction into $n=50$ points. The three plots in Figure \ref{fig:projill} (top left) show the 1-d projections of the reduced point sets from Monte Carlo, herding (using the std. Gaussian kernel) and SPs, with the true 1-d marginal density in red. Note that, while herding and SPs provide an optimized reduction of $F$ in the \textit{full} 10-d space, both methods give a poor reduction (even worse than Monte Carlo!) of the 1-d marginal distribution. For downstream maps $g$ depending on only this one variable (in general, $g$ with low-dim. structure), Lemma \ref{lem:kh} suggests these optimized methods can perform \textit{worse} than random sampling.

One reason for the poor performance of herding and SPs in high-dimensions (a so-called \textit{curse-of-dimensionality}) is the choice of kernel $\gamma$, namely, the std. Gaussian and $-\| \cdot\|_2$ kernels. Figure \ref{fig:projill} (bottom left) shows the contours for these two kernels. Viewing kernel $\gamma$ as a similarity measure, with larger $\gamma(\bm{x},\bm{y})$ indicating greater similarity between points $\bm{x}$ and $\bm{y}$, note that points $(A,B)$ and $(A,C)$ are assigned the same similarity by both kernels, since the Euclidean distance is the same for both point pairs. However, suppose one knows $g$ is active in only one of the two variables, say, $x_2$ (in general, in a low-dim. projection). In this case, one should assign \textit{greater} similarity to $(A,B)$, since these points have the \textit{same} $x_2$ coordinate. We propose below a new kernel which captures this desired low-dim. structure.

\subsection{SpIn kernel and PSPs}
For the sparsity-inducing kernel, we begin with the general Gaussian kernel:
\begin{equation}
\gamma_{\boldsymbol{\theta}}(\bm{x},\bm{y}) := \exp\left\{ - \sum_{\varnothing \neq \bm{u} \subseteq [p]} \theta_{\bm{u}} \|\bm{x}_{\bm{u}}-\bm{y}_{\bm{u}}\|_2^2 \right\}, \quad [p] := \{1, \cdots, p\},
\label{eq:gamma}
\end{equation}
with scale parameters $\boldsymbol{\theta} = (\theta_{\bm{u}})_{|\bm{u}|=1}^p \geq 0$ following the so-called \textit{product-and-order} (POD; \citealp{Kea2012}) form:
\begin{equation}
\theta_{\bm{u}} = \Gamma_{|\bm{u}|}^{(\theta)} \prod_{l \in \bm{u}} \theta_l.
\label{eq:podtheta}
\end{equation}
The key intuition here is that a larger scale parameter $\theta_{\bm{u}}$ indicates a greater importance of subspace $\bm{u} \subseteq [p]$. As we show in later sections, this provides a flexible framework for encoding the desired sparsity structure for data reduction. We do not provide here a full justification for the POD form in \eqref{eq:podtheta} (this is given later in Sections \ref{sec:theory} and \ref{sec:bayesian}), other than to mention that the \textit{product} weights $(\theta_l)_{l=1}^p$ quantify variable importance, and the \textit{order} weights $(\Gamma_{|\bm{u}|}^{(\theta)})_{|\bm{u}|=1}^\infty$ quantify order importance. We also note that, while the Gaussian kernel in \eqref{eq:gamma} allows for insightful theoretical analysis in Section \ref{sec:theory}, our method can be extended for any scale-parametrized kernel in practice.

Prior to observing data on the black-box $g$, one typically has no information on which variables are important and which are not. In high dimensions, however, we do know that $g$ is likely to be \textit{sparse}, in that it is active for only \textit{some} of the $p$ variables. One way to incorporate this sparsity within kernel $\gamma_{\boldsymbol{\theta}}(\bm{x},\bm{y})$ in \eqref{eq:gamma} is to assume a prior distribution $\pi$ on product weights $(\theta_l)_{l=1}^p$, which quantify variable importance. From intuition, this prior should assign high probability to $\boldsymbol{\theta}$ with large values in a small subset of its entries. Given such a prior $\pi$, the \textit{sparsity-inducing} (SpIn) kernel is defined as follows:
\begin{definition}[SpIn kernel]
Let $\pi$ be a (proper) prior on product weights $(\theta_l)_{l=1}^p$, and suppose the order weights $(\Gamma_{|\bm{u}|}^{(\theta)})_{|\bm{u}|=1}^\infty$ are fixed. The \textup{sparsity-inducing (SpIn)} kernel under prior $\pi$ is:
\begin{equation}
\gamma_{\boldsymbol{\theta} \sim \pi}(\bm{x},\bm{y}) := \mathbb{E}_{\boldsymbol{\theta} \sim \pi} [ \gamma_{\boldsymbol{\theta}}(\bm{x},\bm{y}) ].
\end{equation}
\end{definition}
\noindent In words, the SpIn kernel $\gamma_{\boldsymbol{\theta} \sim \pi}$ can be seen as an \textit{averaged} similarity measure between two points, under the prior assumption (from $\pi$) that only a subset of variables are important. 

We can now define the proposed \textit{projected support points} (PSPs) for high-dim. reduction:
\begin{definition}[PSPs]
Let $F$ be a d.f. on $\mathcal{X} \subseteq \mathbb{R}^p$. 
\bi
\item Suppose the weights $\boldsymbol{\theta} = (\theta_{\bm{u}})_{|\bm{u}|=1}^p$ are fixed. Then the \textup{$\boldsymbol{\theta}$-weighted PSPs} of $F$ are:
\begin{equation}
\argmin_{\bm{x}_1, \cdots, \bm{x}_n} D_{\gamma_{\boldsymbol{\theta}}}(F,F_n).
\label{eq:thetapsp}
\end{equation}
\item Suppose $\boldsymbol{\theta}$ follows a (proper) prior $\pi$, and let $\gamma_{\boldsymbol{\theta} \sim \pi}$ be the SpIn kernel under $\pi$. The \textup{$\pi$-expected PSPs} of $F$ are:
\begin{equation}
\argmin_{\bm{x}_1, \cdots, \bm{x}_n} D_{\gamma_{\boldsymbol{\theta} \sim \pi}}(F,F_n).
\label{eq:pipsp}
\end{equation}
\ei
\end{definition}
\noindent In words, the $\pi$-expected (or $\boldsymbol{\theta}$-weighted) PSPs minimize the discrepancy $D_{\gamma}(F,F_n)$ with the SpIn kernel $\gamma = \gamma_{\boldsymbol{\theta} \sim \pi}$ (or the $\boldsymbol{\theta}$-weighted kernel $\gamma = \gamma_{\boldsymbol{\theta}}$). The $\boldsymbol{\theta}$-weighted PSPs will be used in Section \ref{sec:theory} for theoretical analysis, while the $\pi$-expected PSPs will be used in practice for data reduction. The rationale for PSPs is that, by minimizing discrepancy with the SpIn kernel, the resulting reduction gives a better reduction of low-dim. features in high-dim. data. By Lemma \ref{lem:kh}, this then yields improved estimation of low-dim. downstream quantities, which is the desired goal.

To illustrate this intuition, consider again the earlier example of reducing big data from $F = $ 10-d i.i.d. $\mathcal{N}(0,1)$ distribution into $n=50$ points. Here, we assume a simple form of the SpIn kernel, with (a) $\theta_{l} \distas{i.i.d.} \text{Gamma}(0.1,0.01)$, and (b) $\Gamma_{1}^{(\theta)} = 1$ and $\Gamma_{k}^{(\theta)} = 0$ for $k > 1$. This reduces to an anisotropic Gaussian kernel, averaged over i.i.d. $\text{Gamma}(0.1,0.01)$ priors on scale parameters. Note that these i.i.d. priors provide one way of quantifying sparsity, since only a small number of product weights $(\theta_{l} )_{l=1}^p$ will be large with high probability. Figure \ref{fig:projill} (top right) shows the 1-d projections of the $n=50$ PSPs. We see that PSPs enjoy a noticeable improvement over the other three methods, yielding a near-perfect reduction of the true marginal distribution (in red) using only $n=50$ points. For downstream computations $g$ with low-dim. structure, PSPs can therefore offer improved performance over both Monte Carlo and existing reduction methods.

An inspection of the SpIn kernel contours (Figure \ref{fig:projill}, bottom right) shows why this is the case. Recall for the earlier std. Gaussian and distance kernels, points $(A,B)$ and $(A,C)$ are assigned the same similarity, despite the former having the same $x_2$ coordinate and the latter having different $x_2$ coordinates. The SpIn kernel, on the other hand, accounts for the fact that $(A,B)$ are close in a projected subspace (the $x_2$-axis), by assigning greater similarity to $(A,B)$ than $(A,C)$. By using a kernel which factors in low-dim. similarities, PSPs can yield an effective reduction of low-dim. features in data, as in this toy example.

With this, we now explore the low-dim. structure on $g$ imposed by this new kernel, to better understand the sparsity conditions required for effective, high-dim. data reduction.

\section{Theoretical analysis}
\label{sec:theory}

In this section, we first provide a brief summary of two well-established views on low-dim. structure in functions:  (a) the three effect principles from experimental design \citep{WH2011}, and (b) the notion of tractability from QMC \citep{NW2008, Dea2013}. Using these two views, we then derive the RKHS of $\gamma_{\boldsymbol{\theta}}$, and give some insight on the sparsity structure on $g$ needed for effective high-dim. reduction via PSPs. To allow for meaningful theoretical analysis, we assume in this section a simpler, anisotropic form for the kernel $\gamma_{\boldsymbol{\theta}}$ from \eqref{eq:gamma}:
\begin{equation}
\gamma_{\boldsymbol{\theta}} = \exp\left\{ - \sum_{l=1}^p \theta_l ({x}_l-{y}_l)^2 \right\},
\label{eq:aniso}
\end{equation}
with the full form in \eqref{eq:gamma} considered in later sections.

\subsection{Three effect principles and tractability}

In experimental design, a key challenge is learning functions in high-dimensions, using limited data from expensive experiments. Over the years, three fundamental principles (see \citealp{WH2011}) have been successfully applied, all of which exploit low-dim. structure on $g$. The first principle, called \textit{effect sparsity}, states that $g$ is likely comprised of a \textit{small} number of important effects. For example, in a function with $p=10$ variables, it is likely that only a subset of these 10 variables are truly active. The second, called \textit{effect hierarchy}, states that lower-order effects are more likely active than higher-order effects. For example, the main effect of variable $x_1$ (a first-order effect) is more likely active than the interaction effect of $x_1$ and $x_2$ (a second-order effect). The last principle, called (strong) \textit{effect heredity}, states that higher-order effects are active \textit{only when} all lower-order components are active. For example, an interaction effect of $x_1$ and $x_2$ is active only when the main effects of $x_1$ and $x_2$ are also active. Together, these principles give a flexible framework for learning low-dim. structure in high-dim. functions.

This attention to low-dim. structure has been mirrored in QMC, beginning with the idea of \textit{effective dimension} \citep{Cea1997,SW1998}: the belief that certain variables in an integrand $g$ are more important than others. This motivated a recent body of work investigating the \textit{tractability} of the integration problem on the uniform unit hypercube \citep{NW2008}, i.e., how difficult integration becomes as dimension $p$ increases. This study of tractability is important, as it provides insight on what sparsity structure is needed on $g$ to break the \textit{curse-of-dimensionality} for high-dim. integration. Of particular interest to us is the work on a \textit{dimension-free} integration rate \citep{KS2005, Dea2013}, which investigated sparsity conditions on $g$ to achieve an integration error rate which does not depend on dimension $p$. This dimension-free rate can be viewed as a \textit{strong} condition on tractability, since it requires the problem to not grow in difficulty as dimension $p$ increases.

Our theoretical analysis below makes use of both views to better understand the tractability of the proposed data reduction framework. We will first construct the RKHS of $\gamma_{\boldsymbol{\theta}}$, then derive the low-dim. structure on $g$ (via the three effect principles) required for achieving a dimension-free error rate $I(g;F,F_n)$ for data reduction.


\subsection{Dimension-free error rate}
We first give an explicit construction of the RKHS for kernel $\gamma_{\boldsymbol{\theta}}$ in \eqref{eq:aniso}:

\begin{theorem}[RKHS of $\gamma_{\boldsymbol{\theta}}$]
Let $(\mathcal{H}_{\gamma_{\boldsymbol{\theta}}}, \langle \cdot, \cdot \rangle_{\gamma_{\boldsymbol{\theta}}})$ be the RKHS for kernel $\gamma_{\boldsymbol{\theta}}$ in \eqref{eq:aniso}. Then:
\begin{equation}
\mathcal{H}_{\gamma_{\boldsymbol{\theta}}} = \left\{g: \mathbb{R}^p \rightarrow \mathbb{R} \; \Bigg| \; g(\bm{x}) = \exp(-\|\bm{x}\|_{\boldsymbol{\theta}}^2)\sum_{|{\boldsymbol{\alpha}}|=0}^\infty w_{\boldsymbol{\alpha}} \bm{x}^{\boldsymbol{\alpha}}, \; \|g\|_{\gamma_{\boldsymbol{\theta}}} < \infty\right\},
\label{eq:rkhs}
\end{equation}
with inner product given by:
\begin{equation}
\langle f, g \rangle_{\gamma_{\boldsymbol{\theta}}} = \sum_{k=0}^\infty \frac{k!}{2^k} \sum_{|{\boldsymbol{\alpha}}|=k} \frac{v_{\boldsymbol{\alpha}} w_{\boldsymbol{\alpha}}}{C_{\boldsymbol{\alpha}}^k {\boldsymbol{\theta}}^{\boldsymbol{\alpha}}}, \quad f(\bm{x}) = \exp(-\|\bm{x}\|_\theta^2)\sum_{|{\boldsymbol{\alpha}}|=0}^\infty v_{\boldsymbol{\alpha}} \bm{x}^{\boldsymbol{\alpha}}.
\label{eq:inner}
\end{equation}
Here, $\boldsymbol{\alpha}=(\alpha_1, \cdots, \alpha_p)$ with $|\boldsymbol{\alpha}|=\sum_{l=1}^p \alpha_l$, $\{w_{\boldsymbol{\alpha}}\}_{|\boldsymbol{\alpha}|=0}^\infty, \{v_{\boldsymbol{\alpha}}\}_{|\boldsymbol{\alpha}|=0}^\infty \subseteq \mathbb{R}$ are coefficients, $\bm{x}^{\boldsymbol{\alpha}} = \prod_{l=1}^p x_l^{\alpha_l}$ (similarly for $\boldsymbol{\theta}^{\boldsymbol{\alpha}}$) and $C_{\boldsymbol{\alpha}}^k = k!/({{\alpha}}_1! \cdots {{\alpha}}_p!)$ is the multinomial coefficient.
\label{thm:rkhs}
\end{theorem}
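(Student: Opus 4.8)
The plan is to exploit the product structure of $\gamma_{\boldsymbol{\theta}}$ to obtain an explicit Mercer-type feature expansion, read off the candidate space and inner product, and then confirm the identification through the reproducing property together with the uniqueness of the RKHS. First I would factor $\gamma_{\boldsymbol{\theta}}(\bm{x},\bm{y}) = \prod_{l=1}^p \exp\{-\theta_l(x_l-y_l)^2\}$ and expand each univariate factor as $\exp(-\theta_l x_l^2)\exp(-\theta_l y_l^2)\sum_{\alpha_l\geq 0}\frac{(2\theta_l)^{\alpha_l}}{\alpha_l!}x_l^{\alpha_l}y_l^{\alpha_l}$. Collecting over coordinates into multi-indices $\boldsymbol{\alpha}$ (writing $\boldsymbol{\alpha}!=\prod_l\alpha_l!$ and $(2\boldsymbol{\theta})^{\boldsymbol{\alpha}}=\prod_l(2\theta_l)^{\alpha_l}$) yields the absolutely convergent expansion
\begin{equation*}
\gamma_{\boldsymbol{\theta}}(\bm{x},\bm{y}) = \sum_{|\boldsymbol{\alpha}|=0}^\infty \Psi_{\boldsymbol{\alpha}}(\bm{x})\Psi_{\boldsymbol{\alpha}}(\bm{y}), \qquad \Psi_{\boldsymbol{\alpha}}(\bm{x}) := \exp(-\|\bm{x}\|_{\boldsymbol{\theta}}^2)\sqrt{\tfrac{(2\boldsymbol{\theta})^{\boldsymbol{\alpha}}}{\boldsymbol{\alpha}!}}\,\bm{x}^{\boldsymbol{\alpha}},
\end{equation*}
which is the engine of the proof; the weight $\boldsymbol{\alpha}!/(2\boldsymbol{\theta})^{\boldsymbol{\alpha}}$ arising here is exactly the reindexed form of the coefficient $k!/(2^k C_{\boldsymbol{\alpha}}^k {\boldsymbol{\theta}}^{\boldsymbol{\alpha}})$ in \eqref{eq:inner}, since $2^k=\prod_l 2^{\alpha_l}$ and $C_{\boldsymbol{\alpha}}^k=k!/\boldsymbol{\alpha}!$.

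Next I would introduce the candidate space $\mathcal{H}$ of functions $g(\bm{x})=\exp(-\|\bm{x}\|_{\boldsymbol{\theta}}^2)\sum_{\boldsymbol{\alpha}}w_{\boldsymbol{\alpha}}\bm{x}^{\boldsymbol{\alpha}}$ equipped with the inner product \eqref{eq:inner}, and verify it is a genuine Hilbert space of functions. Uniqueness of the coefficients $\{w_{\boldsymbol{\alpha}}\}$, hence well-definedness of the inner product, follows from the linear independence of monomials once the nowhere-vanishing factor $\exp(-\|\bm{x}\|_{\boldsymbol{\theta}}^2)$ is divided out. Finiteness of $\|g\|_{\gamma_{\boldsymbol{\theta}}}^2=\sum_{\boldsymbol{\alpha}}\frac{\boldsymbol{\alpha}!}{(2\boldsymbol{\theta})^{\boldsymbol{\alpha}}}w_{\boldsymbol{\alpha}}^2$ together with Cauchy--Schwarz gives the pointwise bound $|\sum_{\boldsymbol{\alpha}}w_{\boldsymbol{\alpha}}\bm{x}^{\boldsymbol{\alpha}}|\leq \|g\|_{\gamma_{\boldsymbol{\theta}}}\exp(\|\bm{x}\|_{\boldsymbol{\theta}}^2)$, using the identity $\sum_{\boldsymbol{\alpha}}\frac{(2\boldsymbol{\theta})^{\boldsymbol{\alpha}}}{\boldsymbol{\alpha}!}\bm{x}^{2\boldsymbol{\alpha}}=\exp(2\|\bm{x}\|_{\boldsymbol{\theta}}^2)$. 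Thus each series converges pointwise and the evaluation functionals are bounded; completeness is then immediate because $g\mapsto\{w_{\boldsymbol{\alpha}}\}$ is an isometry onto the weighted sequence space $\ell^2\big(\boldsymbol{\alpha}!/(2\boldsymbol{\theta})^{\boldsymbol{\alpha}}\big)$.

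Finally I would identify $\mathcal{H}$ with $\mathcal{H}_{\gamma_{\boldsymbol{\theta}}}$. Reading coefficients off the feature expansion shows $\gamma_{\boldsymbol{\theta}}(\cdot,\bm{y})\in\mathcal{H}$ with $w_{\boldsymbol{\alpha}}(\bm{y})=\exp(-\|\bm{y}\|_{\boldsymbol{\theta}}^2)\frac{(2\boldsymbol{\theta})^{\boldsymbol{\alpha}}}{\boldsymbol{\alpha}!}\bm{y}^{\boldsymbol{\alpha}}$, and a direct computation with \eqref{eq:inner} gives $\langle g,\gamma_{\boldsymbol{\theta}}(\cdot,\bm{y})\rangle_{\gamma_{\boldsymbol{\theta}}}=\exp(-\|\bm{y}\|_{\boldsymbol{\theta}}^2)\sum_{\boldsymbol{\alpha}}w_{\boldsymbol{\alpha}}\bm{y}^{\boldsymbol{\alpha}}=g(\bm{y})$, the reproducing property. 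Since any $g\in\mathcal{H}$ orthogonal to every $\gamma_{\boldsymbol{\theta}}(\cdot,\bm{y})$ satisfies $g(\bm{y})=\langle g,\gamma_{\boldsymbol{\theta}}(\cdot,\bm{y})\rangle_{\gamma_{\boldsymbol{\theta}}}=0$ and hence vanishes identically, the span of the feature maps is dense, so by the uniqueness part of Moore--Aronszajn the candidate $\mathcal{H}$ is precisely the RKHS, establishing \eqref{eq:rkhs} and \eqref{eq:inner}. I expect the main obstacle to be not a single deep step but the rigorous bookkeeping of the infinite multivariate series --- justifying absolute convergence and the interchange of summation over the full multi-index set in the feature expansion, and proving the isometry and completeness claims --- together with lifting the classical one-dimensional description of Gaussian RBF RKHSs to a tensor product across the $p$ coordinates.
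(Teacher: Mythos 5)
Your proposal is correct and follows essentially the same route as the paper: both rest on the same multi-index expansion $\gamma_{\boldsymbol{\theta}}(\bm{x},\bm{y})=\exp(-\|\bm{x}\|_{\boldsymbol{\theta}}^2)\exp(-\|\bm{y}\|_{\boldsymbol{\theta}}^2)\sum_{\boldsymbol{\alpha}}\frac{(2\boldsymbol{\theta})^{\boldsymbol{\alpha}}}{\boldsymbol{\alpha}!}\bm{x}^{\boldsymbol{\alpha}}\bm{y}^{\boldsymbol{\alpha}}$ and the same orthonormal system (your $\Psi_{\boldsymbol{\alpha}}$ is exactly the paper's $\phi_{\boldsymbol{\alpha}}$). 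The only cosmetic differences are that you verify the reproducing property and density by hand and spell out completeness via the isometry with a weighted $\ell^2$ space, whereas the paper invokes Aronszajn's orthonormal-basis characterization of an RKHS and defers completeness to Minh (2010).
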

\noindent In other words, the RKHS $\mathcal{H}_{\gamma_{\boldsymbol{\theta}}}$ consists of functions spanned by the orthogonal basis functions $\{\exp(-\|\bm{x}\|_{\boldsymbol{\theta}}^2)\bm{x}^{\boldsymbol{\alpha}}\}_{|\boldsymbol{\alpha}|=0}^\infty$ -- the collection of monomials $\bm{x}^{\boldsymbol{\alpha}}$ multiplied by an exponential decay term $\exp(-\|\bm{x}\|_{\boldsymbol{\theta}}^2)$. The coefficients $\{w_{\boldsymbol{\alpha}}\}_{|\boldsymbol{\alpha}|=0}^\infty$ can be seen as ``ANOVA-like'' coefficients, quantifying the importance of each basis in $\{\exp(-\|\bm{x}\|_{\boldsymbol{\theta}}^2)\bm{x}^{\boldsymbol{\alpha}}\}_{|\boldsymbol{\alpha}|=0}^\infty$ for $g \in \mathcal{H}_{\gamma_{\boldsymbol{\theta}}}$; a larger ANOVA coefficient $w_{\boldsymbol{\alpha}}$ suggests greater importance of $\exp(-\|\bm{x}\|_{\boldsymbol{\theta}}^2)\bm{x}^{\boldsymbol{\alpha}}$ in $g$, and vice versa. This intuitive decomposition of $\mathcal{H}_{\gamma_{\boldsymbol{\theta}}}$ into ANOVA-like effects allows us to impose an interpretable low-dim. structure via the three effect principles.

Consider next a POD-like parametrization \citep{Kea2012} of the coefficients in \eqref{eq:rkhs}:
\begin{equation}
w_{\boldsymbol{\alpha}} = \mathrm{T}_{|\boldsymbol{\alpha}|}^{(w)}\prod_{l=1}^p w_l^{\alpha_l}.
\label{eq:pod}
\end{equation}
Here, the \textit{product} weights $(w_l)_{l=1}^\infty$ quantify the importance of each variable $x_l$ in $g$, with a larger value of $w_l$ suggesting greater importance for variable $x_l$. Similarly, the \textit{order} weights $(\mathrm{T}_{|\boldsymbol{\alpha}|}^{(w)})_{|\boldsymbol{\alpha}|=1}^\infty$ quantify the importance of effects of different orders in $g$, with a larger value of $\mathrm{T}_k^{(w)}$ indicating greater importance of $k$-th order effects. The key appeal of the parametrization in \eqref{eq:pod} is that it gives an intuitive way to quantify the three effect principles. In particular, \textit{effect sparsity} -- the belief that $g$ is comprised of a small number of effects -- can be imposed by enforcing a bounded condition on either product or order weights, thereby restricting the number of active effects in high dimensions. Similarly, \textit{effect hierarchy} -- the belief that lower-order effects dominate higher-order ones -- can be imposed by setting a decreasing sequence for order weights $(\mathrm{T}_{|\boldsymbol{\alpha}|}^{(w)})_{|\boldsymbol{\alpha}|=1}^\infty$. Finally, \textit{effect heredity} -- the belief that higher-order effects are active only when all lower-order components are active -- is implicitly imposed via the product structure of \eqref{eq:pod}. For example, the weight for the interaction effect of $x_1$ and $x_2$ depends on the product term $w_1 w_2$, which is large only when both $w_1$ and $w_2$ (corresponding to main effects of $x_1$ and $x_2$) are large as well. A similar framework was also considered by \cite{Jos2006} in the context of experimental design.

We note that the existing literature on POD weights, beginning with \cite{Kea2012} and developed in subsequent papers, focuses largely on solving complex systems of partial differential equations. To contrast, our motivation for the POD form \eqref{eq:pod} is to embed the effect principles from experimental design within the ANOVA-like decomposition of $\mathcal{H}_{\gamma_{\boldsymbol{\theta}}}$. To our knowledge, our work is the first to unify these two fundamental sampling ideas from experimental design and QMC. This unified framework then allows us to derive the sparsity structure on $g$ needed to lift the curse-of-dimensionality for data reduction.

Using the RKHS in \eqref{eq:rkhs} along with Lemma \ref{lem:kh}, the following theorem provides sufficient conditions on $g \in \mathcal{H}_{\gamma_{\boldsymbol{\theta}}}$ required for a dimension-free error rate using PSPs:
\begin{theorem}[Dimension-free error rate]
Assume the ANOVA-like coefficients for $g \in \mathcal{H}_{\gamma_{\boldsymbol{\theta}}}$ follow the POD form \eqref{eq:pod}, with fixed product weights $(w_l)_{l=1}^\infty$ and order weights $(\mathrm{T}_k^{(w)})_{k=1}^\infty$. For fixed $\boldsymbol{\theta} = (\theta_l)_{l=1}^p$, let $F_n$ be the e.d.f. of the $\boldsymbol{\theta}$-weighted PSPs under $\gamma_{\boldsymbol{\theta}}$ in \eqref{eq:aniso}. If:
\begin{equation}
\mathrm{T}_{|\boldsymbol{\alpha}|}^{(w)} = \mathcal{O}\left\{p^{-1/4}\left(|\boldsymbol{\alpha}|!\right)^{-1/2}\right\} \quad \text{and} \quad \sum_{l=1}^\infty w_l^4/\theta_l^2 < 4,
\label{eq:cond}
\end{equation}
then $I(g;F,F_n) \leq {C}/{\sqrt{n}}$ for some constant $C > 0$ not depending on $p$.
\label{thm:dim}
\end{theorem}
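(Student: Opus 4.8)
The plan is to combine the Koksma--Hlawka bound of Lemma \ref{lem:kh} with the explicit RKHS norm of Theorem \ref{thm:rkhs}, splitting the argument into a dimension-free bound on the kernel discrepancy of the PSPs and a dimension-free bound on $\|g\|_{\gamma_{\boldsymbol{\theta}}}$. Since Lemma \ref{lem:kh} gives $\sup_{\|h\|_{\gamma_{\boldsymbol{\theta}}}\le 1} I(h;F,F_n)=D_{\gamma_{\boldsymbol{\theta}}}(F,F_n)$, rescaling by $\|g\|_{\gamma_{\boldsymbol{\theta}}}$ yields $I(g;F,F_n)\le \|g\|_{\gamma_{\boldsymbol{\theta}}}\,D_{\gamma_{\boldsymbol{\theta}}}(F,F_n)$ for every $g\in\mathcal{H}_{\gamma_{\boldsymbol{\theta}}}$, so it suffices to control the two factors separately and then multiply.

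First I would handle the discrepancy. Because the PSPs minimize $D_{\gamma_{\boldsymbol{\theta}}}(F,F_n)$ over all $n$-point sets, it is enough to exhibit one competing point set meeting the target rate, and a Monte Carlo comparison does this. Expanding \eqref{eq:kerdis} for $n$ i.i.d.\ draws from $F$ and taking expectations, the cross terms collapse and one is left with $\mathbb{E}\,D^2_{\gamma_{\boldsymbol{\theta}}}(F,F_n)=\{\mathbb{E}_{\bm{X}}\gamma_{\boldsymbol{\theta}}(\bm{X},\bm{X})-\mathbb{E}_{\bm{X},\bm{X}'}\gamma_{\boldsymbol{\theta}}(\bm{X},\bm{X}')\}/n$. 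Crucially, the Gaussian form \eqref{eq:aniso} gives $\gamma_{\boldsymbol{\theta}}(\bm{x},\bm{x})=1$ and $\gamma_{\boldsymbol{\theta}}\ge 0$, so this expectation is at most $1/n$ regardless of $p$. By the probabilistic method some realization attains $D_{\gamma_{\boldsymbol{\theta}}}\le 1/\sqrt{n}$, and the minimizing PSPs can only do better; hence $D_{\gamma_{\boldsymbol{\theta}}}(F,F_n)\le 1/\sqrt{n}$ with a dimension-free constant.

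The crux is the second factor: bounding $\|g\|_{\gamma_{\boldsymbol{\theta}}}$ by a constant independent of $p$. Substituting the POD form \eqref{eq:pod} into the inner product \eqref{eq:inner} and using $k!/C_{\boldsymbol{\alpha}}^k=\prod_l\alpha_l!$, the squared norm collapses to $\|g\|_{\gamma_{\boldsymbol{\theta}}}^2=\sum_{k=0}^\infty (\mathrm{T}_k^{(w)})^2\,R_k$, where $R_k:=2^{-k}\sum_{|\boldsymbol{\alpha}|=k}(\prod_l\alpha_l!)\prod_l(w_l^2/\theta_l)^{\alpha_l}$ is the within-order sum. I would bound $R_k$ by relating $\prod_l\alpha_l!$ to the total-degree factorial through the multinomial coefficient and estimating the remaining monomial sum by the $\ell^2$-type quantity $\sum_l w_l^4/\theta_l^2$ appearing in \eqref{eq:cond}; the hypothesis $\sum_l w_l^4/\theta_l^2<4$ is exactly what is meant to keep this within-order contribution geometric in $k$, while the dimension dependence is absorbed into a factor that the $p^{-1/4}$ in the order-weight condition is designed to cancel. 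Feeding the resulting estimate back, the order-weight decay $\mathrm{T}_k^{(w)}=\mathcal{O}\{p^{-1/4}(k!)^{-1/2}\}$ tames the factorial growth in the total degree $k$, leaving a convergent geometric series whose sum is free of $p$. Combining this with the discrepancy bound then gives $I(g;F,F_n)\le C/\sqrt{n}$.

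I expect the combinatorial estimate of $R_k$ to be the main obstacle. The difficulty is that the order weights couple all coordinates through the total degree $k$, so the sum does not factor across $l$ as it would for pure product weights; one must simultaneously control the super-exponential $\prod_l\alpha_l!$ growth along concentrated multi-indices and the proliferation of spread-out multi-indices as $p$ grows. Getting these two effects to balance---so that the $\ell^2$ product-weight condition and the $p^{-1/4}$ order-weight factor conspire to produce a genuinely $p$-free constant---is the delicate heart of the argument, and is precisely where the specific exponents in \eqref{eq:cond} are forced.
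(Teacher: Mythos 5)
Your overall architecture is exactly the paper's: the bound $I(g;F,F_n)\le \|g\|_{\gamma_{\boldsymbol{\theta}}}\,D_{\gamma_{\boldsymbol{\theta}}}(F,F_n)$ from Lemma \ref{lem:kh}, a Monte Carlo averaging argument plus the minimizing property of PSPs to get $D_{\gamma_{\boldsymbol{\theta}}}(F,F_n)\le 1/\sqrt{n}$ with no $p$-dependence, and a separate dimension-free bound on the norm. The discrepancy half is complete and matches the paper's computation (the expectation collapses to $n^{-1}[1-\mathbb{E}\gamma_{\boldsymbol{\theta}}(\bm{Y},\bm{Y}')]\le 1/n$). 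Your reduction of the squared norm to $\sum_k (\mathrm{T}_k^{(w)})^2 R_k$ with $R_k = 2^{-k}\sum_{|\boldsymbol{\alpha}|=k}(\prod_l\alpha_l!)\prod_l(w_l^2/\theta_l)^{\alpha_l}$ is also correct algebra.

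However, the step you yourself flag as "the delicate heart of the argument" is precisely the step you do not carry out, and it is the content of the theorem: without it you have not shown why the exponent $4$ appears in $\sum_l w_l^4/\theta_l^2<4$ or why $p^{-1/4}$ is the right power. The paper closes this gap with two concrete devices you would need to supply. First, a Cauchy--Schwarz split of the within-order sum, writing $\frac{w_{\boldsymbol{\alpha}}^2}{C_{\boldsymbol{\alpha}}^k\boldsymbol{\theta}^{\boldsymbol{\alpha}}} = (C_{\boldsymbol{\alpha}}^k)^{-3/2}\cdot (C_{\boldsymbol{\alpha}}^k)^{1/2}w_{\boldsymbol{\alpha}}^2/\boldsymbol{\theta}^{\boldsymbol{\alpha}}$ and bounding the sum by $\bigl(\sum_{|\boldsymbol{\alpha}|=k}(C_{\boldsymbol{\alpha}}^k)^{-3}\bigr)^{1/2}\bigl(\sum_{|\boldsymbol{\alpha}|=k}C_{\boldsymbol{\alpha}}^k\,w_{\boldsymbol{\alpha}}^4/\boldsymbol{\theta}^{2\boldsymbol{\alpha}}\bigr)^{1/2}$; this is where the fourth powers come from, and the multinomial theorem then collapses the second factor to $(\sum_l w_l^4/\theta_l^2)^{k/2}$, which together with the $2^{-k}$ gives a geometric series with ratio below $1$ exactly when $\sum_l w_l^4/\theta_l^2<4$. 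Second, the combinatorial fact (Lemma \ref{lem:combn} in the paper, proved separately) that $\sum_{|\boldsymbol{\alpha}|=k}1/C_{\boldsymbol{\alpha}}^k\to p$ as $k\to\infty$: this identifies the residual dimension dependence of the first Cauchy--Schwarz factor as $\mathcal{O}(\sqrt{p})$, which is what the $(p^{-1/4})^2=p^{-1/2}$ from squaring the order weights is calibrated to cancel, while the $(k!)^{-1/2}$ decay in $\mathrm{T}_k^{(w)}$ cancels the explicit $k!$ in the inner product \eqref{eq:inner}. Your narrative correctly predicts that these two effects must balance, but stating that the hypotheses "are designed to" make it work is not a proof; the specific split and the limit $\sum_{|\boldsymbol{\alpha}|=k}1/C_{\boldsymbol{\alpha}}^k\to p$ are the missing ideas.
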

\noindent Recall that a dimension-free rate provides a strong form of tractability. Viewed this way, the two conditions in \eqref{eq:cond} shed light on what sparsity structure is needed on the downstream map $g$ for effective high-dim. reduction. Consider first the condition $\mathrm{T}_{|\boldsymbol{\alpha}|}^{(w)} = \mathcal{O}\left\{p^{-1/4} \left( |\boldsymbol{\alpha}| ! \right)^{-1/2}\right\}$ in \eqref{eq:cond}. From the earlier connection between POD weights and the effect principles, this can be broken down as (a) $\mathcal{O}(p^{-1/4})$ -- an \textit{effect sparsity} rate for order importance, controlling the number of active orders in $g$, and (b) $\mathcal{O}\{\left( |\boldsymbol{\alpha}| ! \right)^{-1/2}\}$  -- an \textit{effect hierarchy} rate, dictating the decaying rate of order importance in $g$. (A similar factorial order decay also arises in the dimension-free rate of component-by-component lattice rules, see pg. 76 of \citealp{Dea2013}). Consider next the condition $\sum_{l=1}^\infty w_l^4/\theta_l^2 < 4$ in \eqref{eq:cond}, which can be seen as an \textit{effect sparsity} rate for variable importance. To see this, suppose the simple case of $\theta_l =1$ for all $l$. The resulting constraint $\sum_{l=1}^\infty w_l^4 < 4$ then limits the number of active variables in $g$ (since product weights $(w_l)_{l=1}^\infty$ measure variable importance) -- this is precisely effect sparsity. Moreover, one can counteract an influential variable $l$, i.e., with product weight $w_l \gg 0$, by setting a sufficiently large scale parameter $\theta_l$. This provides a theoretical justification for the earlier observation in Section \ref{sec:prob}, that larger $\theta_l$'s impose greater importance on variable $l$ in reduction.

It is worth emphasizing that, while the rate in $n$ for Theorem \ref{thm:dim} is only the Monte Carlo rate of $\mathcal{O}(n^{-1/2})$\footnote{Technically, this is slightly better than the \textit{almost-sure} Monte Carlo rate of $\mathcal{O}(n^{-1/2}\sqrt{\log \log n})$; see \cite{Kie1961}.}, the importance of this theorem is that it sheds light on the sparsity structure of $g$ (via the effect principles), needed to provide relief from the curse-of-dimensionality for data reduction. As in QMC, this relief is achieved here via the strong tractability requirement of a dimension-free error rate.

\subsection{Error rate for fixed dimension $p$}
\label{sec:conv}

For fixed $p$, the theorem below shows PSPs enjoy an improved rate in $n$ over Monte Carlo:

\begin{theorem}[Fixed-dimension error rate]
Let $\mathcal{A} \subseteq [p]$ be an active set, and let $\boldsymbol{\theta}$ satisfy $\theta_l > 0$ for $l \in \mathcal{A}$ and $\theta_l = 0$ otherwise. Suppose $\mathcal{X} \subseteq \mathbb{R}^p$ is measurable with positive Lebesgue measure, with $F$ satisfying the mild moment condition:
\begin{equation}
\exists \beta > 0, C\geq 0 \text{ s.t. }\limsup_{r \rightarrow \infty} r^\beta \int_{\mathcal{X} \setminus B_r(\bm{y})} \mathbb{E}_{\bm{Y} \sim F}[\gamma_{\boldsymbol{\theta}}(\bm{x},\bm{Y})] \; dF(\bm{x}) \leq C, \;  \text{ for all $\bm{y} \in \mathcal{X}$}.
\label{eq:moment}
\end{equation}
Then, with $\zeta = \beta/(\beta+1)$ and $F_n$ as defined in Theorem \ref{thm:dim}, it follows that for any $\nu \in (0,\zeta)$:
\begin{equation}
\sup_{g \in \mathcal{H}_{\gamma_{\boldsymbol{\theta}}}, \|g\|_{\gamma_{\boldsymbol{\theta}} \leq 1}}I(g;F,F_n) \leq \mathcal{O}\left\{ n^{-1/2}(\log n)^{-(\zeta-\nu)/(2|\mathcal{A}|)} \right\},
\label{eq:fixedrate}
\end{equation}
where constants may depend on $p$ and $\nu$.
\label{thm:pspconv}
\end{theorem}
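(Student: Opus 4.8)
The plan is to first convert the worst-case integration error into a pure discrepancy statement, then exploit the sparsity of $\boldsymbol{\theta}$ to collapse the problem to effective dimension $|\mathcal{A}|$, and finally establish a fixed-dimension convergence rate for the minimal-discrepancy point set. By Lemma~\ref{lem:kh} (Koksma--Hlawka), the left-hand side of \eqref{eq:fixedrate} is exactly the kernel discrepancy $D_{\gamma_{\boldsymbol{\theta}}}(F,F_n)$ of the $\boldsymbol{\theta}$-weighted PSPs, so it suffices to upper-bound this discrepancy at the stated order. First I would write $D_{\gamma_{\boldsymbol{\theta}}}^2(F,F_n)$ in its expanded ``energy'' form from \eqref{eq:kerdis}, namely $\iint \gamma_{\boldsymbol{\theta}}\,dF\,dF - \tfrac{2}{n}\sum_i \mathbb{E}_{\bm{Y}\sim F}[\gamma_{\boldsymbol{\theta}}(\bm{x}_i,\bm{Y})] + \tfrac{1}{n^2}\sum_{i,j}\gamma_{\boldsymbol{\theta}}(\bm{x}_i,\bm{x}_j)$, which is precisely the object the moment condition \eqref{eq:moment} is designed to control.

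The key structural step is the reduction to effective dimension. Since $\theta_l=0$ for $l\notin\mathcal{A}$, the anisotropic kernel \eqref{eq:aniso} satisfies $\gamma_{\boldsymbol{\theta}}(\bm{x},\bm{y}) = \exp\{-\sum_{l\in\mathcal{A}}\theta_l(x_l-y_l)^2\}$ and hence depends on $\bm{x},\bm{y}$ only through their restrictions to $\mathcal{A}$. Writing $F_{\mathcal{A}}$ for the $|\mathcal{A}|$-dimensional marginal of $F$ on the active coordinates, I would show that $D_{\gamma_{\boldsymbol{\theta}}}(F,F_n)=D_{\gamma_{\boldsymbol{\theta}}}(F_{\mathcal{A}},(F_n)_{\mathcal{A}})$; consequently the off-$\mathcal{A}$ coordinates of the PSPs are irrelevant to the objective, and the $\mathcal{A}$-projections of the $\boldsymbol{\theta}$-weighted PSPs are a minimal-discrepancy point set for $F_{\mathcal{A}}$ in dimension $d:=|\mathcal{A}|$. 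I would also verify that the moment condition \eqref{eq:moment}, being phrased through $\mathbb{E}_{\bm{Y}\sim F}[\gamma_{\boldsymbol{\theta}}(\cdot,\bm{Y})]$ and Euclidean balls, descends to $F_{\mathcal{A}}$ with the same exponent $\beta$, since the balls $B_r(\bm{y})$ restrict to $d$-dimensional balls and the integrand is constant along inactive coordinates. This turns the claim into a purely $d$-dimensional convergence statement with unchanged constants.

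It then remains to establish, in fixed dimension $d$, the rate $D_{\gamma_{\boldsymbol{\theta}}}(F_{\mathcal{A}},(F_n)_{\mathcal{A}}) \le \mathcal{O}\{n^{-1/2}(\log n)^{-(\zeta-\nu)/(2d)}\}$. The approach is to upper-bound the minimal discrepancy by that of an explicit comparison configuration, using that PSPs minimize $D_{\gamma_{\boldsymbol{\theta}}}$. I would truncate at a radius $r$, cover the ball $B_r$ by $\mathcal{O}((r/\delta)^d)$ cells of diameter $\delta$, and allocate points to cells in proportion to their $F_{\mathcal{A}}$-mass. This decomposes $D^2$ into a \emph{tail} term, which the moment condition \eqref{eq:moment} bounds by $\mathcal{O}(r^{-\beta})$, and a \emph{bulk} term, controlled through the local smoothness $1-\gamma_{\boldsymbol{\theta}}(\bm{x},\bm{y})=\mathcal{O}(\delta^2)$ of the Gaussian kernel on each cell together with the count constraint that the number of occupied cells not exceed $\mathcal{O}(n)$. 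Balancing the tail rate $r^{-\beta}$ against the bulk rate (whose resolution is limited by having only $n$ points to spread over a $d$-dimensional region growing with $r$) and optimizing the truncation radius $r=r_n$ produces the exponent $\zeta=\beta/(\beta+1)$, with the residual $(\log n)$ factor carrying the $1/d$ exponent inherited from the $d$-dimensional covering; the slack $\nu\in(0,\zeta)$ absorbs the lower-order logarithmic corrections of this optimization.

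The main obstacle will be the effective-dimension reduction carried out with full rigor, rather than the fixed-dimension rate. Specifically, I must argue that minimizing the $p$-dimensional discrepancy is genuinely equivalent to minimizing the $d$-dimensional discrepancy of the marginal, being careful that the inactive coordinates of the PSPs are unconstrained and do not affect $D_{\gamma_{\boldsymbol{\theta}}}$, and that the domain hypotheses (measurability and positive Lebesgue measure of $\mathcal{X}$) and the moment condition transfer to $F_{\mathcal{A}}$ without degrading $\beta$. A secondary delicate point is pinning down the exact $\zeta=\beta/(\beta+1)$ scaling from the tail/bulk trade-off: the bulk term must be bounded for a \emph{general} $F_{\mathcal{A}}$ with no density regularity, so the comparison cells should be built from a covering of $B_r$ (which controls cell diameters irrespective of the local density) rather than from equal-probability strata, and the ensuing optimization over $r_n$ must be executed so that only a logarithmic improvement over the Monte Carlo rate $n^{-1/2}$ is claimed, matching \eqref{eq:fixedrate}.
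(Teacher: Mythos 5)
The paper does not actually carry out this argument: its entire proof of Theorem \ref{thm:pspconv} is the single sentence that the result ``follows by a direct extension of Theorems 4 and 5 in \cite{MJ2016b}.'' Your sketch is, in effect, a reconstruction of what that extension would look like, and its overall architecture is the right one: reduce the worst-case error to the discrepancy via Lemma \ref{lem:kh}, observe that with $\theta_l=0$ off $\mathcal{A}$ the kernel \eqref{eq:aniso} factors through the active coordinates so the problem collapses to dimension $d=|\mathcal{A}|$, and then invoke (or re-derive) a fixed-dimension rate for the minimal-discrepancy set built from a truncation-plus-stratification comparison configuration whose tail is controlled by \eqref{eq:moment}. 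The dimension-reduction step --- which is the only genuinely new content relative to \cite{MJ2016b} --- you handle correctly, including the one subtlety that matters: since $\|\bm{x}_{\mathcal{A}}-\bm{y}_{\mathcal{A}}\|>r$ implies $\|\bm{x}-\bm{y}\|>r$ and the integrand in \eqref{eq:moment} is nonnegative, the moment condition does descend to the marginal $F_{\mathcal{A}}$ with the same $\beta$.

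Where your plan is still short of a proof is exactly where the paper outsources the work. The claim that balancing the $r^{-\beta}$ tail against the bulk term of a proportional-allocation covering yields precisely a $(\log n)^{-(\zeta-\nu)/(2d)}$ gain over $n^{-1/2}$ is asserted, not derived; a naive stratification bound of the form ``cell diameter squared times $1/n$'' tends to give either no improvement or a polynomial one, and extracting the specific logarithmic form (and the role of the slack $\nu$) requires the detailed averaging construction of Theorem 4 of \cite{MJ2016b}. So either cite that theorem for the $d$-dimensional rate --- in which case your argument becomes a clean and complete version of what the paper merely gestures at --- or be prepared to reproduce its comparison-point-set construction in full, since that optimization is the technical heart of the statement.
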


\noindent We make two remarks here. First, when $g$ is active in all dimensions (i.e., $\mathcal{A}=[p]$) and $F$ is not too heavy-tailed (i.e., it satisfies \eqref{eq:moment}), PSPs enjoy a faster error rate to Monte Carlo by at least the log-factor $(\log n)^{-1/(2p)}$. While this yields a slight theoretical improvement, simulations and applications later on suggest a quicker error rate for PSPs. Indeed, a key gap for kernel sampling methods (see, e.g., \citealp{Cea2012, Bea2012,Bea2015}) is that (a) theory guarantees only a $\mathcal{O}(n^{-1/2})$ rate for infinite-dim. kernels, but (b) empirical performance suggests a $\mathcal{O}(n^{-1})$ rate in practice. Viewed this way, Theorem \ref{thm:pspconv} provides a slight improvement for PSPs over existing rates, and for fixed $p$, the same empirical rate of $\mathcal{O}(n^{-1})$ is observed for PSPs in simulations (see Section \ref{sec:sim}). We therefore use this $\mathcal{O}(n^{-1})$ rate for practical cost comparisons later on.

Second, when PSPs are constructed on $\mathcal{A} \subseteq [p]$ (the active dimensions of $g$), the log-factor in Theorem \ref{thm:pspconv} becomes $(\log n)^{-1/(2|\mathcal{A}|)}$. This improves upon Theorem 5 of \cite{MJ2016b}, in terms of convergence rate in $n$ for high-dim. integration. In practice, however, $\mathcal{A}$ is typically not known a priori. An adaptive scheme can fully exploit the result in Theorem \ref{thm:pspconv}, by iteratively (a) identifying active dimensions, then (b) sequentially targeting these dimensions in reduction. Given the scope of the current paper, we defer this to future work.




\section{SpIn kernel specification}
\label{sec:bayesian}
With this framework in hand, we now investigate the specification for the SpIn kernel $\gamma_{\boldsymbol{\theta} \sim \pi}$ in \eqref{eq:gamma} and \eqref{eq:podtheta}, in terms of its product weights $(\theta_l)_{l=1}^p$ and order weights $(\Gamma_{|\bm{u}|}^{(\theta)})_{|\bm{u}|=1}^\infty$. We first examine a good prior choice $\pi$ for product weights, appealing to an interesting connection to experimental design, then give a brief discussion on order weights.

\subsection{Product weights}
\label{sec:prior}
Consider first the product weights $(\theta_l)_{l=1}^p$ in the POD form \eqref{eq:podtheta}, which, as shown in Section \ref{sec:theory}, quantify variable importance for data reduction. Prior to data, the only knowledge we have on $g$ is that, in high-dimensions, it is likely active for only a small fraction of variables. As seen earlier, one way to quantify this sparsity is via i.i.d. priors on $(\theta_l)_{l=1}^p$. To this end, we consider the following Gamma priors:
\begin{equation}
\theta_l \distas{i.i.d.} \text{Gamma}(\nu, \lambda), \quad \text{i.e.,} \quad \pi(\{\theta_l\}_{l=1}^p) = \prod_{l=1}^p \left\{ \frac{\lambda^{\nu}}{\Gamma(\nu)} \theta_l^{\nu-1} \exp\left(-\lambda \theta_l \right) \right\}.
\label{eq:prior}
\end{equation}
This choice of i.i.d. Gamma priors offers two appealing properties for the SpIn kernel $\gamma_{\boldsymbol{\theta} \sim \pi}$. First, the shape hyperparameter controls the concentration of prior $\pi$ around $\bm{0}$; a smaller choice of $\nu \in (0,1)$ pushes all product weights $(\theta_l)_{l=1}^p$ closer to zero, which results in a more stringent sparsity assumption for the SpIn kernel. Second, this Gamma specification gives a closed-form expression for $\gamma_{\boldsymbol{\theta} \sim \pi} = \mathbb{E}_{\boldsymbol{\theta} \sim \pi} [ \gamma_{\boldsymbol{\theta}} ]$, which is valuable for two reasons: (a) it reveals the role of the scale hyperparameter $\lambda$ for data reduction, and (b) it provides an insightful connection between the proposed PSPs and recent work in experimental design. While the following discussion focuses on the Gamma priors in \eqref{eq:prior}, our algorithm in Section \ref{sec:algo} can be used for any prior $\pi$ which can be efficiently sampled.


Under prior $\pi$ in \eqref{eq:prior}, the (simplified) SpIn kernel has the following closed form:
\begin{proposition}[Closed form SpIn kernel]
Let $\gamma_{\boldsymbol{\theta}}$ be the simplified kernel in \eqref{eq:aniso}. Under prior $\pi$ in \eqref{eq:prior}, its rescaled SpIn kernel becomes:
\begin{align}
\begin{split}
\frac{\gamma_{\boldsymbol{\theta} \sim \pi}(\bm{x},\bm{y})}{\lambda^{\nu p}} = \left( \prod_{l=1}^p \frac{1}{(x_l - y_l)^2 + \lambda} \right)^{\nu}.
\label{eq:closed}
\end{split}
\end{align}
\label{prop:closed}
\vspace{-0.4cm}
\end{proposition}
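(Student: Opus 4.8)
The plan is to exploit the i.i.d.\ structure of the prior $\pi$ in \eqref{eq:prior} to reduce the $p$-dimensional expectation defining $\gamma_{\boldsymbol{\theta} \sim \pi}$ into a product of $p$ identical one-dimensional Gamma integrals. The starting observation is that the simplified kernel in \eqref{eq:aniso} factorizes across coordinates, namely $\gamma_{\boldsymbol{\theta}}(\bm{x},\bm{y}) = \prod_{l=1}^p \exp\{-\theta_l (x_l - y_l)^2\}$. Since the $(\theta_l)_{l=1}^p$ are independent under $\pi$ and each factor is a nonnegative (in fact bounded by $1$) measurable function of its own $\theta_l$, Tonelli's theorem lets me pull the expectation inside the product:
\begin{equation*}
\gamma_{\boldsymbol{\theta} \sim \pi}(\bm{x},\bm{y}) = \mathbb{E}_{\boldsymbol{\theta} \sim \pi}\left[ \prod_{l=1}^p \exp\{-\theta_l (x_l-y_l)^2\} \right] = \prod_{l=1}^p \mathbb{E}_{\theta_l \sim \text{Gamma}(\nu,\lambda)}\left[ \exp\{-\theta_l (x_l - y_l)^2\} \right].
\end{equation*}

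Next I would evaluate a single one-dimensional factor. Setting $a_l := (x_l - y_l)^2 + \lambda$ and substituting the Gamma density from \eqref{eq:prior}, the $l$-th expectation becomes $\frac{\lambda^\nu}{\Gamma(\nu)} \int_0^\infty \theta_l^{\nu-1} \exp(-a_l \theta_l)\, d\theta_l$, which by the standard Gamma integral $\int_0^\infty \theta^{\nu-1} e^{-a\theta}\, d\theta = \Gamma(\nu)/a^\nu$ (valid for any $a > 0$) equals $\lambda^\nu / [ (x_l-y_l)^2 + \lambda ]^\nu$. Equivalently, this is the moment generating function of a $\text{Gamma}(\nu,\lambda)$ variable evaluated at $-(x_l - y_l)^2$. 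Assembling the product over $l$ and factoring out the constant $\lambda^{\nu p}$ then gives exactly \eqref{eq:closed}.

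There is no substantive obstacle here — the result is essentially a clean computation — but two small points deserve care. First, the factorization step must be justified by independence together with nonnegativity (so that no integrability hypothesis is needed to interchange expectation and product); since each factor lies in $(0,1]$, this is immediate. Second, one must check that the Gamma integral converges for \emph{all} $\bm{x}, \bm{y}$: because the exponent $-(x_l - y_l)^2 \le 0$, the effective rate $a_l = (x_l-y_l)^2 + \lambda$ is strictly positive whenever $\lambda > 0$, so the usual $t < \lambda$ restriction on the moment generating function never binds. These observations confirm that \eqref{eq:closed} holds for every $\bm{x}, \bm{y} \in \mathcal{X}$.
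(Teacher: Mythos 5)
Your proof is correct and follows essentially the same route as the paper's: factor the kernel across coordinates, use the i.i.d.\ structure of $\pi$ to turn the expectation into a product of one-dimensional Gamma integrals, and evaluate each via $\int_0^\infty \theta^{\nu-1}e^{-a\theta}\,d\theta = \Gamma(\nu)/a^\nu$. The extra remarks on Tonelli and convergence are fine but not needed beyond what the paper already does implicitly.
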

\noindent The closed-form SpIn kernel in \eqref{eq:closed} can be seen as a product of univariate, inverse multi-quadric kernels \citep{Mic1984}. A closer inspection of \eqref{eq:closed} reveals why this is indeed effective for high-dim. reduction. Let $\bm{x}$ and $\bm{y}$ be two points with (a) the same coordinate $x_l = y_l$ for some variable $l$, and (b) different coordinates for other variables. Plugging these points into \eqref{eq:closed} with $\lambda$ small, the $l$-th inverse-distance term in \eqref{eq:closed} becomes very large, which results in a large value for $\gamma_{\boldsymbol{\theta} \sim \pi}(\bm{x},\bm{y})$. Put another way, the SpIn kernel in \eqref{eq:closed} provides a measure of low-dim. similarity for high-dim. points -- it assigns \textit{high} similarity to two points close in some coordinate, despite other coordinates being vastly different. This property is not enjoyed by radial basis kernels, which may explain the poor performance of herding and SPs for high-dim. reduction (see Figure \ref{fig:projill}).

This closed form also sheds light on the role of hyperparameter $\lambda$ for data reduction. For small $\lambda > 0$, it is clear from \eqref{eq:closed} that the SpIn kernel $\gamma_{\boldsymbol{\theta} \sim \pi}$ magnifies low-dim. similarities between $\bm{x}$ and $\bm{y}$. However, if $\lambda$ is set too close to 0, the resulting kernel becomes nearly singular, which incurs numerical instabilities in optimization. Similar instabilities arise when the hyperparameter $\nu$ (which controls sparsity, see earlier) is set too small. To this end, we found that the hyperparameter setting $(\nu, \lambda) = (0.1,0.01)$ works well for the numerical examples in Section \ref{sec:app}, given that the underlying big data is scaled to zero mean and unit variance for each variable.

Further insight can be gained by plugging the SpIn kernel $\gamma_{\boldsymbol{\theta}}$ in \eqref{eq:closed} into kernel discrepancy \eqref{eq:kerdis}. Setting $\nu =1$, and expanding the terms in \eqref{eq:kerdis}, we get:
\begin{equation}
D^2_{\gamma_{\boldsymbol{\theta} \sim \pi}}(F,F_n) = C + \frac{1}{n^2} \sum_{i=1}^n \sum_{j=1}^n \left( \prod_{l=1}^p \frac{1}{(x_{i,l} - x_{j,l})^2 + \lambda} \right) - \frac{2}{n} \sum_{i=1}^n \mathbb{E}_{\bm{Y} \sim F}\left( \prod_{l=1}^p \frac{1}{(x_{i,l} - Y_l)^2 + \lambda} \right),
\label{eq:mpdisc}
\end{equation}
where $C$ is a constant with respect to $\mathcal{D} := \{\bm{x}_i\}_{i=1}^n$. Recall that the PSPs under $\pi$ minimize discrepancy $D_{\gamma_{\boldsymbol{\theta} \sim \pi}}(F,F_n)$, meaning it jointly minimizes the middle term in \eqref{eq:mpdisc} and maximizes the last term. Setting $\lambda = 0$, the middle term in \eqref{eq:mpdisc} reduces to the maximum projection (MaxPro) criterion in \cite{Jea2015}. By minimizing this criterion, the resulting designs can be shown to be ``space-filling'' on projections, in that no two points are too close to each other on any coordinate projections. In this sense, PSPs can be seen as a novel extension of MaxPro designs for data reduction -- it allows us to \textit{design} a reduced dataset, which captures low-dim. features in high-dim. and non-uniform big data.


\subsection{Order weights}
\label{sec:order}
Consider next the order weights $(\Gamma^{(\theta)}_{|\bm{u}|})_{|\bm{u}|=1}^\infty$ in the POD form \eqref{eq:podtheta}. By effect hierarchy, lower-order interactions are more significant than higher-order ones, so $(\Gamma^{(\theta)}_{|\bm{u}|})_{|\bm{u}|=1}^\infty$ should form a decreasing sequence in $|\bm{u}|$. From the dimension-free rate in Theorem \ref{thm:dim}, we know that a factorial order decay in $g$ offers relief from the curse-of-dimensionality, so the order weights here should decay at least factorially to yield effective high-dim. reduction. We found that an exponential decay $\Gamma^{(\theta)}_{|\bm{u}|} = \exp\{ - |\bm{u}|\}$ gives good performance in practice. Of course, the algorithm in the next section can be used for any choice of order weights.


\subsection{Visualization}
\begin{figure}[t]
\centering
\includegraphics[width=0.8\textwidth]{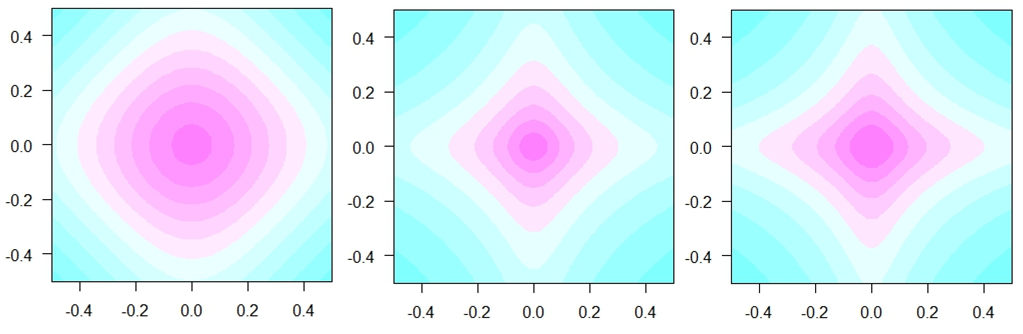}
\caption{Contours of the SpIn kernel with \textup{(left)} $(\nu,\lambda) = (0.1,0.1)$, $(\Gamma^{(\theta)}_1,\Gamma^{(\theta)}_2) = (1,0.5)$, \textup{(middle)} $(\nu,\lambda) = (0.1,0.01)$, $(\Gamma^{(\theta)}_1,\Gamma^{(\theta)}_2) = (1,0.5)$, and \textup{(right)} $(\nu,\lambda) = (0.1,0.01)$, $(\Gamma^{(\theta)}_1,\Gamma^{(\theta)}_2) = (1,0.01)$.}
\label{fig:kp}
\end{figure}
We provide a simple visualization to motivate the above SpIn kernel specification. Figure \ref{fig:kp} (left, middle, right) shows the contours of the SpIn kernel with $(\nu,\lambda) = (0.1,0.1)$ and $(\Gamma^{(\theta)}_1,\Gamma^{(\theta)}_2) = (1,0.5)$, $(\nu,\lambda) = (0.1,0.01)$ and $(\Gamma^{(\theta)}_1,\Gamma^{(\theta)}_2) = (1,0.5)$, and $(\nu,\lambda) = (0.1,0.01)$ and $(\Gamma^{(\theta)}_1,\Gamma^{(\theta)}_2) = (1,0.01)$, respectively. First, by changing the hyperparameter $\lambda$ from 0.1 to 0.01 (left to middle), we see that the SpIn kernel places greater emphasis on lower-dim. features. Next, by changing the second-order weight $\Gamma^{(\theta)}_2$ from 0.1 to 0.01, the resulting SpIn kernel becomes more aggressive in pursuing coordinate-wise similarities. Both observations are consistent with earlier insights.


\section{Optimization algorithms}
\label{sec:algo}

We now present two algorithms for optimizing the PSPs in \eqref{eq:pipsp} for data reduction. We begin by outlining key components of these algorithms, then provide a practical discussion on running time. For brevity and clarity, we have moved technical details and derivations to the Appendix, and instead focus on important ideas.


\subsection{Algorithm sketch}


\subsubsection{\texttt{psp.mm} -- One-shot reduction}

We first sketch out the key steps behind \texttt{psp.mm}, a one-shot reduction algorithm for optimizing the ($\pi$-expected) PSPs in \eqref{eq:pipsp}. Given a specification of choice for (a) order weights $(\Gamma_{|\bm{u}|}^{(\theta)})_{|\bm{u}|=1}^\infty$ and (b) prior $\pi$ for product weights $(\theta_l)_{l=1}^p$, the desired optimization problem (from \eqref{eq:pipsp} and \eqref{eq:kerdis}) can be restated as:
\begin{equation}
\argmin_{\mathcal{D} = \{\bm{x}_1, \cdots, \bm{x}_n\}} \left[  \frac{1}{n^2} \sum_{i=1}^n \sum_{j=1}^n \mathbb{E}_{\boldsymbol{\theta} \sim \pi}\left\{ \gamma_{\boldsymbol{\theta}} (\bm{x}_i,\bm{x}_j) \right\} - \frac{2}{n} \sum_{i=1}^n \mathbb{E}_{\bm{Y} \sim F_N,\boldsymbol{\theta} \sim \pi}\left\{ \gamma_{\boldsymbol{\theta}}(\bm{x}_i,\bm{Y}) \right\} \right],
\label{eq:pspopt}
\end{equation}
\noindent where $\mathcal{D}= \{\bm{x}_i\}_{i=1}^n$ is the reduced point set. Note that, in \eqref{eq:pspopt}, the data generating distribution $F$ (which was used for theoretical analysis in previous sections) is replaced by the big data realization $F_N$ (which is available in practice).

There are two challenges for optimizing \eqref{eq:pspopt}. First, evaluating the objective is computationally expensive, because (a) the expectation over $\bm{Y} \sim F_N$ involves a massive summation over every big data point, and (b) for a \textit{general} POD specification, there is no closed form for the SpIn kernel $\mathbb{E}_{\boldsymbol{\theta} \sim \pi}\left\{ \gamma_{\boldsymbol{\theta}}\right\}$. Second, the optimization in \eqref{eq:pspopt} is not only high-dim., but also non-convex as well. To tackle this, we will first optimize each point $\bm{x}_i$ (given remaining points) using the following two-step procedure, then cycle this procedure over all $n$ points until the point set convergences.

\begin{figure}[!t]
\begin{minipage}{0.49\textwidth}
\begin{algorithm}[H]
\small
\caption{\texttt{psp.mm}: One-shot PSPs}
\label{alg:pspsccp}
\begin{algorithmic}
\stb Warm-start the initial point set $\mathcal{D}^{[0]} \leftarrow \{\bm{x}_i^{[0]}\}_{i=1}^n$ using SPs. Set $l=0$.
\stb \textbf{Repeat} until convergence of $\mathcal{D}^{[l]}$:
\bi[leftmargin=30pt]
\item \textbf{For} $i = 1, \cdots, n$:
\bi[leftmargin=10pt]
\item Resample $\mathcal{Y} \distas{i.i.d.} F_N$ and $\vartheta \distas{i.i.d.} \pi$.
\item Set $\bm{x}_i^{[l+1]} \leftarrow \mathcal{M}_i(\bm{x}_i^{[l]}; \mathcal{Y}, \vartheta, \mathcal{D}^{[l]}_{-i})$, with $\mathcal{M}_i$ defined in \eqref{eq:closedform2}.
\item Update $\mathcal{D}^{[l]}_{i} \leftarrow \bm{x}_i^{[l+1]}$.
\ei
\item Update $\mathcal{D}^{[l+1]} \leftarrow \{\bm{x}_i^{[l+1]}\}_{i=1}^n$, and set $l \leftarrow l + 1$.
\ei
\stb Return the point set $\mathcal{D}^{[\infty]}$.
\end{algorithmic}
\end{algorithm}
\end{minipage}
\hfill
\begin{minipage}{0.49\textwidth}
\begin{algorithm}[H]
\small
\caption{\texttt{psp.mm.seq}: Sequential PSPs}
\label{alg:pspsccpseq}
\begin{algorithmic}
\stb Initialize first point $\bm{x}_1 \distas{i.i.d.} F_N$, $\mathcal{D} \leftarrow \{\bm{x}_1\}$.
\stb For $i = 2, \cdots, n$:
\bi[leftmargin=30pt]
\item Set $l=0$ and $\bm{x}_i^{[0]} \distas{i.i.d.} F_N$.
\item \textbf{Repeat} until convergence of $\bm{x}_i^{[l]}$:
\bi[leftmargin=30pt]
\item Resample $\mathcal{Y} \distas{i.i.d.} F_N$ and $\vartheta \distas{i.i.d.} \pi$.
\item Set $\bm{x}_i^{[l+1]} \leftarrow \mathcal{M}_i(\bm{x}_i^{[l]}; \mathcal{Y}, \vartheta, \mathcal{D}_{-i})$, with $\mathcal{M}_i$ defined in \eqref{eq:closedform2}.
\item Set $l \leftarrow l+1$.
\ei
\item Update $\mathcal{D} \leftarrow \mathcal{D} \cup \{\bm{x}_i^{[\infty]}\}$.
\ei
\stb Return the point set $\mathcal{D}$.
\end{algorithmic}
\end{algorithm}
\end{minipage}
\end{figure}

Consider first the optimization of point $\bm{x}_i$ given remaining points $\mathcal{D}_{-i}$. The first step in \texttt{psp.mm} is to take an \textit{unbiased} estimate of the objective in \eqref{eq:pspopt}, using small subsamples from both big data $F_N$ and prior $\pi$. Denoting these subsamples as $\mathcal{Y} = \{\bm{y}_m'\}_{m=1}^{N_s} \distas{i.i.d.} F_N$, $N_s \ll N$ and $\vartheta = \{\boldsymbol{\theta}_r\}_{r=1}^R \distas{i.i.d.} \pi$, this estimate becomes:
\begin{equation}
\frac{1}{n^2R} \sum_{i=1}^n \sum_{j=1}^n \sum_{r=1}^R \gamma_{\boldsymbol{\theta}_r} (\bm{x}_i,\bm{x}_j) - \frac{2}{nN_sR} \sum_{i=1}^n \sum_{m=1}^{N_s} \sum_{r=1}^R \gamma_{\boldsymbol{\theta}_r}(\bm{x}_i,\bm{y}_m').
\label{eq:pspoptest}
\end{equation}
In contrast to \eqref{eq:pspopt}, the estimated objective \eqref{eq:pspoptest} can now be efficiently evaluated and optimized. This subsampling step is motivated by the success of stochastic algorithms (e.g., \citealp{Bot2010}) for speeding up large-scale machine learning optimization problems.

The second step in \texttt{psp.mm} is to minimize the unbiased objective estimate \eqref{eq:pspoptest} for point $\bm{x}_i$, given fixed $\mathcal{D}_{-i}$. To do this, we make use of a non-linear optimization method called \textit{majorization-minimization} (MM) \citep{Lan2016}, which is widely used in statistical learning. The intuition is as follows: to minimize a non-linear function $f$, MM iteratively minimizes a surrogate function $h$ which lies above $f$ ($h$ is known as a \textit{majorizer}). One appealing feature of MM is the so-called \textit{descent property} \citep{Lan2016}, which ensures solution iterates are always decreasing for $f$, the desired function to minimize. The key to computational efficiency is to choose the majorizer $h$ so that it admits an easy-to-evaluate solution, which then speeds up MM iterations. For \eqref{eq:pspoptest}, a nice quadratic majorizer can be derived by exploiting the Gaussian kernel, yielding a closed-form map $\mathcal{M}_i(\cdot; \mathcal{Y},\vartheta, \mathcal{D}_{-i})$ for minimizing \eqref{eq:pspoptest} (the exact form of $\mathcal{M}_i$ is tedious, and is provided in \eqref{eq:closedform2} of the Appendix). In our experience, MM works much better than gradient descent methods here. One reason is that the former guarantees descent to a good solution in practical time, whereas the latter requires many objective and gradient evaluations, which is time-consuming with big data.

Algorithm \ref{alg:pspsccp} summarizes the above two-step procedure for \texttt{psp.mm}. For each point $\bm{x}_i$, we first subsample the big data $F_N$ and prior $\pi$, then update $\bm{x}_i$ by applying the closed-form optimization map $\mathcal{M}_i$ once. These two steps are then cycled over all $n$ points (a technique known as \textit{blockwise coordinate descent}; \citealp{Tse2001}) until the point set $\mathcal{D}$ converges. This blockwise descent allows us to exploit the pointwise optimization structure in \eqref{eq:pspopt} for efficient reduction. The following theorem gives a convergence guarantee for \texttt{psp.mm}:
\begin{theorem}[Convergence of \texttt{psp.mm}]
Suppose $\mathcal{X}$ is convex and compact. For any initial point set $\mathcal{D}^{[0]} \subseteq \mathcal{X}$, the sequence $(\mathcal{D}^{[l]})_{l=1}^\infty$ from \texttt{\textup{psp.mm}} converges almost surely to a stationary limiting point set $\mathcal{D}^{[\infty]}$ for \eqref{eq:pspopt}.
\label{thm:algconv}
\end{theorem}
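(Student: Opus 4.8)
The plan is to view \texttt{psp.mm} as a \emph{stochastic block majorization-minimization} scheme and to prove convergence in three layers: (i) a deterministic descent property for the MM surrogate in each coordinate block, (ii) a conditional-unbiasedness (martingale) structure linking the subsampled updates \eqref{eq:pspoptest} to the true objective \eqref{eq:pspopt}, and (iii) a supermartingale argument upgrading descent-in-expectation to almost-sure convergence of both the objective and the iterates, followed by a compactness argument identifying the limit as stationary.

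First I would make the majorization step explicit. Writing $E(\mathcal{D})$ for the objective in \eqref{eq:pspopt}, I would exhibit, for each block $i$, a surrogate $h_i(\bm{x};\bm{x}_i^{[l]},\mathcal{D}_{-i})$ that is quadratic in $\bm{x}$, dominates $E$ as a function of $\bm{x}_i$ alone, and is tangent at the current iterate $\bm{x}_i^{[l]}$. Such a surrogate exists because $\gamma_{\boldsymbol{\theta}}$ is smooth with uniformly bounded block curvature over the compact set $\mathcal{X}$, so the descent lemma furnishes a quadratic upper bound whose unique minimizer is exactly the closed-form map $\mathcal{M}_i$ \citep{Lan2016}. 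The crucial observation is that the block gradient of $E$ is an expectation over $\bm{Y}\sim F_N$ and $\boldsymbol{\theta}\sim\pi$; hence the subsampled update using $\mathcal{Y},\vartheta$ is a majorized gradient-type step driven by an \emph{unbiased} estimate of that gradient, giving $\mathbb{E}[\mathcal{M}_i(\bm{x};\mathcal{Y},\vartheta,\mathcal{D}_{-i})\mid\mathcal{F}_l]=\bar{\mathcal{M}}_i(\bm{x})$, where $\bar{\mathcal{M}}_i$ is the deterministic MM map minimizing the true surrogate $h_i$ and $\mathcal{F}_l$ is the natural filtration. By the standard MM inequality, applying $\bar{\mathcal{M}}_i$ yields pathwise descent of $E$ along block $i$.

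Next I would convert this into almost-sure convergence. Taking conditional expectations over the fresh subsamples and summing the tangency/majorization inequalities across a full sweep over the $n$ blocks (blockwise coordinate descent, in the sense of \citealp{Tse2001}), I would obtain a relation of the form $\mathbb{E}[E(\mathcal{D}^{[l+1]})\mid\mathcal{F}_l]\le E(\mathcal{D}^{[l]})-\rho_l+\xi_l$, where $\rho_l\ge 0$ is a first-order stationarity measure (a sum of squared surrogate-gap quantities over the blocks) and $\xi_l$ is a variance/remainder term. Compactness of $\mathcal{X}$ and continuity of $\gamma_{\boldsymbol{\theta}}$ bound $E$ below and bound all estimator variances uniformly, so the Robbins--Siegmund supermartingale convergence theorem yields that $E(\mathcal{D}^{[l]})$ converges almost surely and that $\sum_l\rho_l<\infty$ almost surely, whence $\rho_l\to 0$.

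Finally I would extract stationarity of the limit: by compactness every sample path admits convergent subsequences of $(\mathcal{D}^{[l]})$, and $\rho_l\to 0$ together with continuity of the block gradients forces any such limit to satisfy the coordinatewise first-order (KKT) conditions for \eqref{eq:pspopt}, i.e.\ to be a stationary point set $\mathcal{D}^{[\infty]}$. The main obstacle will be step (iii): with \emph{fixed} subsample sizes the noise $\xi_l$ does not vanish deterministically, so I must show that its accumulated effect is controlled -- either by establishing that the martingale part has summable conditional variance (using boundedness and Lipschitz continuity of the quadratic surrogate on compact $\mathcal{X}$) or by an averaging argument that makes the partial sums of the noise a convergent martingale. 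This is precisely where the conditional-unbiasedness identity $\mathbb{E}[\mathcal{M}_i\mid\mathcal{F}_l]=\bar{\mathcal{M}}_i$ must be used most carefully, since it is what reconciles the pathwise MM descent on the \emph{subsampled} objective with genuine descent of the \emph{true} objective in the limit.
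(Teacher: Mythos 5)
Your overall strategy---casting \texttt{psp.mm} as a stochastic block majorization--minimization scheme built on the quadratic majorizer of the Gaussian kernel, and then upgrading descent-in-expectation to almost-sure convergence via a supermartingale argument---is the same route the paper takes: its proof consists of combining Lemma \ref{lem:majmin} (the majorizer $h_i$ and its closed-form minimizer $\mathcal{M}_i$ in \eqref{eq:closedform2}) with Proposition 3.4 of \cite{Mai2013} on stochastic MM, with the regularity conditions supplied by convexity and compactness of $\mathcal{X}$ and differentiability of $\gamma_{\boldsymbol{\theta}}$. You have correctly identified the architecture; the paper simply outsources your steps (ii) and (iii) to the cited result.

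Two points in your write-up would not survive if carried out literally. First, the identity $\mathbb{E}[\mathcal{M}_i(\bm{x};\mathcal{Y},\vartheta,\mathcal{D}_{-i})\mid\mathcal{F}_l]=\bar{\mathcal{M}}_i(\bm{x})$ is false: $\mathcal{M}_i$ in \eqref{eq:closedform2} is the solution of a linear system whose matrix and right-hand side are both sample averages over $\mathcal{Y}$ and $\vartheta$, and the expectation of the solution of a random linear system is not the solution of the expected system (expectation and $\argmin$ do not commute). What is conditionally unbiased is the subsampled \emph{surrogate} $h_i$ (and its gradient) evaluated at any fixed point, not its minimizer; the stochastic MM framework is built on exactly this distinction, so your martingale decomposition must be anchored to surrogate values rather than to the maps $\mathcal{M}_i$. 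Second, you rightly flag step (iii) as the obstacle, but your first proposed escape---summable conditional variance---cannot work: with $N_s$ and $R$ held fixed the per-iteration noise $\xi_l$ is bounded away from zero in conditional variance, so $\sum_l \xi_l$ diverges and Robbins--Siegmund does not apply to the raw objective sequence. The resolution in \cite{Mai2013} is to run a quasi-martingale argument on an \emph{aggregated} surrogate formed with decaying weights (of order $1/l$), which is what converts bounded per-step noise into a convergent perturbation series; a self-contained proof of this theorem must either import that aggregation device or let the subsample sizes grow with $l$. As written, step (iii) is a genuine gap, and it is precisely the part of the argument that the paper's citation is doing the work for.
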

\noindent In other words, when the sample space is convex and compact, \texttt{psp.mm} converges to a stationary solution for the desired problem \eqref{eq:pspopt}. For \textit{finite} big data, this compactness condition is trivially satisfied. For the broader problem of compacting a \textit{distribution} $F$, this algorithm appears to work well even when the compactness condition is violated for $\mathcal{X}$.

\subsubsection{\texttt{psp.mm.seq} -- Sequential reduction}
In practice, the one-shot algorithm \texttt{psp.mm} works well for a small sample size $n$. For larger $n$, however, a greedy, sequential optimization of \eqref{eq:pspopt} can be more computationally efficient (at the cost of losing the optimality guarantee in Theorem \ref{thm:algconv}). This sequential reduction has two additional advantages: \textit{monotonicity} -- a more monotone error decay as $n$ increases, and \textit{extensibility} -- the ability to reuse prior data if more points are added later on. The latter is particularly important when downstream computations are expensive, since it avoids having to re-evaluate $g$ when additional points are needed. To this end, Algorithm \ref{alg:pspsccpseq} presents a greedy PSP reduction algorithm, called  \texttt{psp.mm.seq}, which is essentially a sequential implementation of the earlier two-step resampling-descent procedure. \texttt{psp.mm.seq} can also be viewed as a specific instance of the general herding scheme in \eqref{eq:herd}:
\begin{equation}
\bm{x}_{n+1} \leftarrow \underset{\bm{x} \in \mathcal{X}}{\textup{Argmax}} \left\{ \mathbb{E}_{\bm{Y} \sim F}[\gamma_{\boldsymbol{\theta}\sim \pi}(\bm{x},\bm{Y})] - \frac{1}{n+1} \sum_{i=1}^n \gamma_{\boldsymbol{\theta}\sim \pi}(\bm{x},\bm{x}_i) \right\},
\label{eq:herdpsp}
\end{equation}
with the novelty here being the use of the new SpIn kernel $\gamma = \gamma_{\boldsymbol{\theta}\sim \pi}$ for capturing low-dim. features. From a running time perspective (see below), we recommend the sequential method \texttt{psp.mm.seq} over the one-shot method \texttt{psp.mm} when $n \geq 1,000$ or $p \geq 100$.

\subsection{Algorithm running time}
Next, we investigate the running times for the one-shot and sequential PSP algorithms (\texttt{psp.mm} and \texttt{psp.mm.seq}), and provide insight on when each method should be used in practice. This requires the analysis of two steps: (a) the computation of scale parameters $\boldsymbol{\theta} = (\theta_{\bm{u}})_{|\bm{u}|=1}^p$ from product weights $(\theta_l)_{l=1}^p$ and order weights $(\Gamma_{|\bm{u}|}^{(\theta)})_{|\bm{u}|=1}^\infty$, and (b) the evaluation of the MM map $\mathcal{M}_i$. For (a), a brute-force evaluation of $\boldsymbol{\theta}$ requires $\mathcal{O}(2^p)$ work, which is clearly infeasible for moderate $p$. Motivated by the recursive construction of POD-weighted shifted lattice rules (Section 5.6 in \citealp{Dea2013}), we implement here a similar recursive procedure, which computes necessary information on $\boldsymbol{\theta}$ for PSP optimization using $\mathcal{O}(p^2)$ work; details on this in Appendix \ref{app:rec}. For (b), the evaluation of $\mathcal{M}_i$ (see \eqref{eq:closedform2} in Appendix) requires $\mathcal{O}(np)$ work, assuming subsample sizes $N$ and $R$ are independent of $n$ and $p$. From this, it follows that (a) the running time for one loop iteration\footnote{To achieve some optimality gap $\epsilon$, the number of loop iterations may also depend on $n$ and $p$, but this dependence is difficult to establish for nonlinear optimization problems \citep{NW2006}. In our numerical examples, a fixed number of iterations (say, 200), along with convergence checks on $\mathcal{D}$, works quite well; we therefore use this per-iteration cost for analyzing running time.} of \texttt{psp.mm} is $\mathcal{O}\{n(np+p^2)\}$, and (b) the running time of \texttt{psp.mm.seq} for a new $i$-th point (again, for one loop iteration) is $\mathcal{O}(ip+p^2)$. In practice, the sequential algorithm \texttt{psp.mm.seq} is much faster to perform than the one-shot algorithm \texttt{psp.mm}, and allows for more efficient reduction in problems with large $n$ or $p$.

These running times also offer a more complete view on the advantages and limitations of PSPs, compared to random sampling of big data. We analyze this via the following marginal cost trade-off, i.e., by comparing (a) the marginal cost required for computing an additional PSP point, with (b) its corresponding marginal cost savings for downstream computations. For (a), we know (from the discussion above) that the marginal cost for computing an $n$-th PSP point is $\mathcal{O}(n)$ and $\mathcal{O}(n^2)$, for the sequential and one-shot algorithms. For (b), first let $C_n$ be the marginal cost increase for downstream computations with an additional $n$-th point (see the kernel learning application in Section \ref{sec:app}, for an example where $C_n$ is not constant). For fixed dimension $p$, the practical error gain of PSPs over Monte Carlo is $\mathcal{O}(n^{1/2})$ (see Section \ref{sec:conv}), so the marginal cost savings of an additional $n$-th PSP point is $\mathcal{O}(C_n n^{1/2})$ in practice. Comparing the marginal costs in (a) and (b), it follows that for expensive downstream computations (i.e., when $C_n \geq K n^{1/2}$ (or $C_n \geq K n^{3/2}$) for the sequential (or one-shot) approach, where $K$ is a constant), PSPs can yield improved reduction over random sampling in terms of the marginal cost trade-off.

We emphasize here that, as described in Section \ref{sec:prob}, an implicit assumption for data reduction is that downstream computations are indeed expensive, and \textit{cannot} be performed for the full data within a practical budget (e.g., on time or resources). This is an often-encountered scenario in real-world statistical problems: (a) in engineering statistics, the propagation of input parameters through complex simulation models can require weeks or even months of computation \citep{Mea2017,Yea2018}; (b) in machine learning, model fitting can incur high computation and memory costs for large datasets \citep{Fea2001}. As we show next, it is within this context where PSPs can yield improved reduction over both random sampling and existing methods, particularly for high-dim. data.

\section{Numerical examples}

\label{sec:app}

With this in hand, we now investigate the performance of PSPs in numerical examples. We begin by presenting some simulation studies, then explore the effectiveness of PSPs in two real-world data reduction applications for kernel learning and MCMC reduction.

%

\subsection{Simulations}
\label{sec:sim}

We first motivate the numerical effectiveness of PSPs via a toy example. Setting $F$ as the 2-d i.i.d. $\text{Beta}(2,4)$ distribution, Figure \ref{fig:pairs} plots the $n=25$-point SPs and PSPs for reducing the data-generating measure $F$. While SPs provide a (visually) good summary of $F$ in the full 2-d space, it yields a lackluster representation of its two marginal distributions. The proposed PSPs, on the other hand, offer a good representation of \textit{both} the full 2-d distribution as well as its marginals, which shows the effectiveness of the sparsity-inducing kernel $\gamma_{\boldsymbol{\theta} \sim \pi}$ in incorporating sparsity for data reduction. In problems where $g$ has only 1 of 2 variables (in general, a small fraction of $p$ variables) active, Figure \ref{fig:pairs} shows how the proposed PSPs can offer improved reduction over existing methods.

\begin{figure}[t]
\centering
\includegraphics[width=0.65\textwidth]{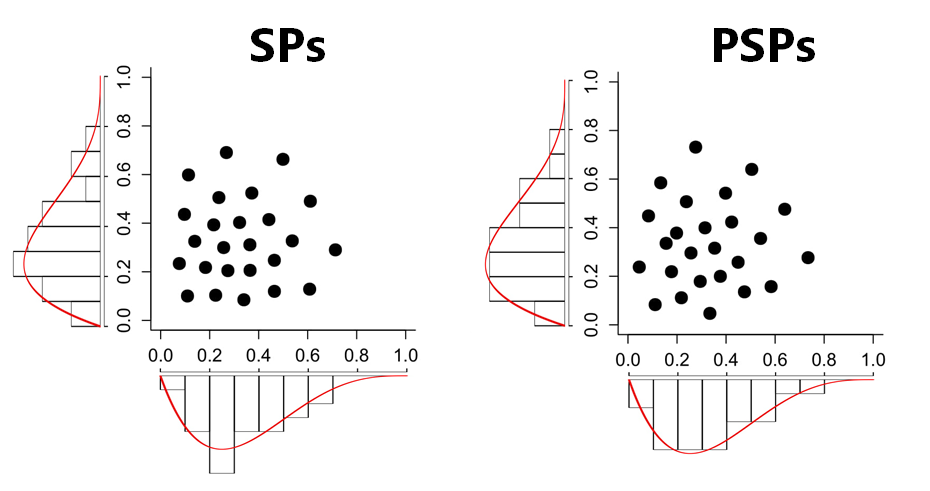}
\caption{Scatterplots and marginal histograms of $n=25$-point SPs and PSPs for $F = $ 2-d i.i.d. $Beta(2,4)$ distribution. True marginal densities are marked in \textbf{\crd{red}}.}
\label{fig:pairs}
\end{figure}



Next, we compare the performance of PSPs with random sampling (Monte Carlo), and herding (using $\gamma$ as the std. Gaussian kernel), SPs, and a QMC sampling method called inverse-Sobol' points. The latter is generated by first sampling from scrambled (or randomized) Sobol' sequence \citep{Sob1967, Owe1998}, then mapping these points via the inverse-transform of $F$. Note that these inverse-Sobol' points are \textit{not} obtainable in real-world reduction problems, where one has only finite realizations of big data, not the underlying measure $F$ itself. In this sense, these inverse-Sobol' points provide a good benchmark for both one-shot and sequential PSPs, showing how a well-established QMC method performs if the underlying $F$ were indeed known. Here, we tested three choices of $F$: the i.i.d. $\mathcal{N}(0,1)$, the i.i.d. $\text{Exp}(1)$ and the i.i.d. $\text{Beta}(2,4)$ distributions, with dimension $p$ from $5$ to $100$. For the downstream map $g$, we employed two well-known test functions: the Gaussian peak function (GAPK, \citealp{Gen1984}) $g(\bm{x}) = \exp\left\{- \sum_{l=1}^p \alpha_l^2 (x_l - u_l)^2\right\}$ and the additive Gaussian function (ADD) $g(\bm{x}) =  \exp\left\{- \sum_{l=1}^p \beta_l x_l\right\}$, where $u_l$ is the marginal mean of $F_l$. To incorporate low-dim. structure, a fraction $q$ of the $p$ variables are set as active, with $\alpha_l = \beta_l=0.25/(qp)$ for active variables, and 0 otherwise. These functions are denoted as GAPK[$q$] and ADD[$q$], respectively.

\begin{figure}[t]
\centering
\includegraphics[width=0.9\textwidth]{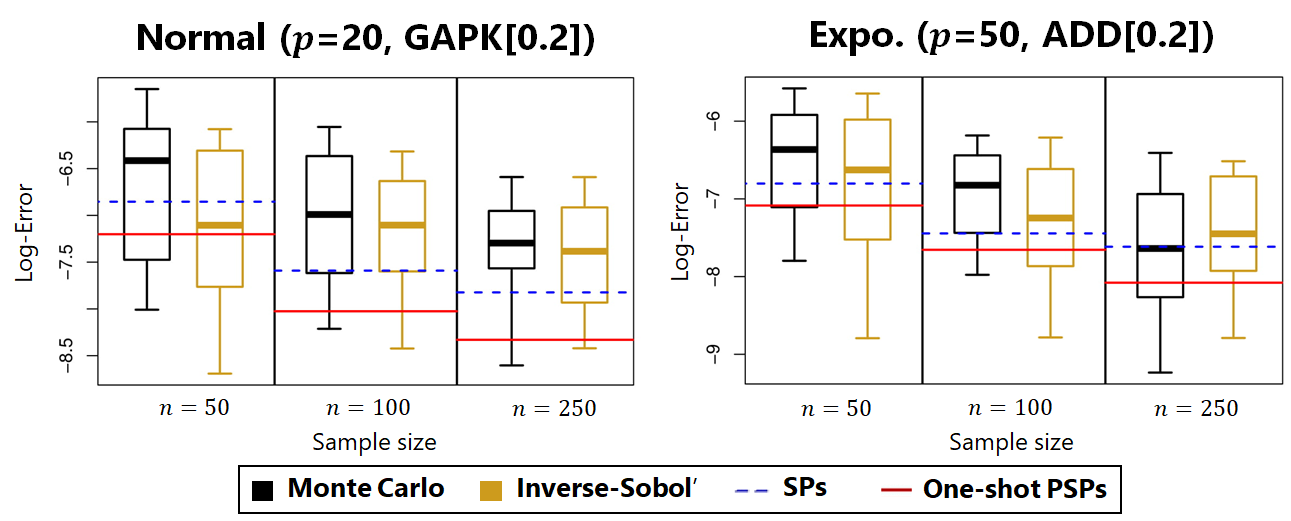}
\caption{Log-errors for \textup{GAPK[0.2]} and \textup{ADD[0.2]} under $F = $ 20-d i.i.d. $\mathcal{N}(0,1)$ and 50-d i.i.d. $\text{Exp}(1)$ distribution. Lines denote log-avg. errors; shaded bands mark 25-th / 75-th quantiles.}
\label{fig:integos}
\end{figure}
\begin{figure}[t]
\centering
\includegraphics[width=0.9\textwidth]{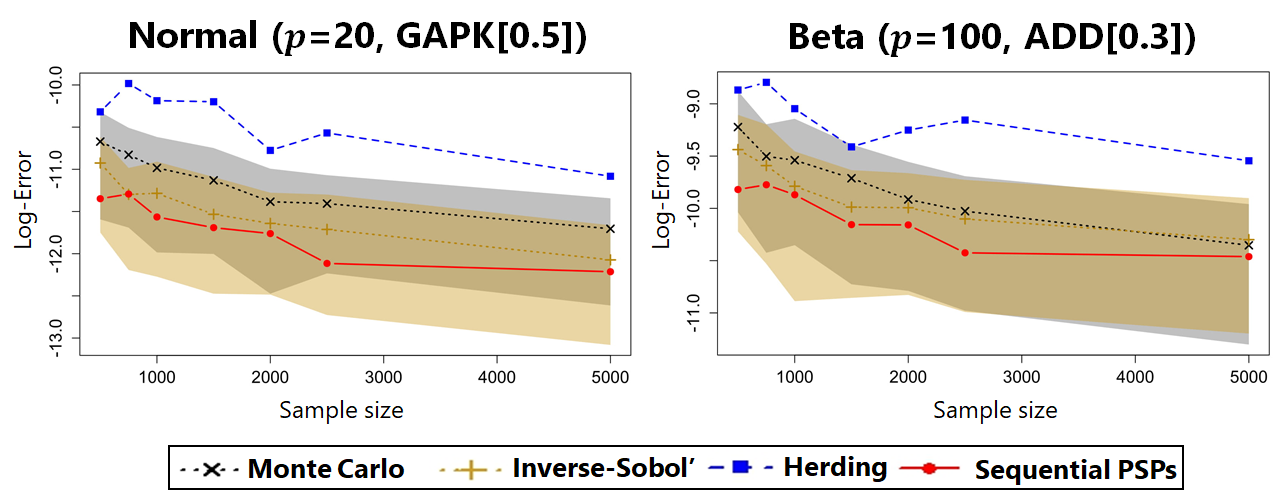}
\caption{Log-errors for \textup{GAPK[0.5]} and \textup{ADD[0.3]} under $F = $ 20-d i.i.d. $\mathcal{N}(0,1)$ and 100-d i.i.d. $\text{Beta}(2,4)$ distribution. Lines denote log-avg. errors; shaded bands mark 25-th / 75-th quantiles.}
\label{fig:integseq}
\end{figure}

We compare first the performance of one-shot methods, with Figure \ref{fig:integos} showing the log-errors $\log I(g;F,F_n)$ in \eqref{eq:interr} for $n=50$, $100$ and $250$ reduced points. In the left figure ($F = 20$-d i.i.d. $\mathcal{N}(0,1)$), the one-shot PSPs (generated with \texttt{psp.mm}) yield noticeably lower errors for GAPK[0.2], compared to random sampling, SPs, and even inverse-Sobol' points (which has access to the true $F$). This shows that the proposed method, in using the new SpIn kernel, allows for effective learning of low-dim. downstream quantities, using only big data sampled from $F$. In the right figure ($F = 50$-d i.i.d. $\text{Exp}(1)$), the one-shot PSPs again give the lowest errors for ADD[0.2]. Similar results hold for other $F$ and $p$, and are omitted for brevity.

We compare next the performance of sequential methods, with Figure \ref{fig:integseq} showing the log-errors for $n = 500 - 5,000$ points. In the left figure ($F = 20$-d i.i.d. $\mathcal{N}(0,1)$), the sequential PSPs (generated with \texttt{psp.mm.seq}) yield considerably lower errors for GAPK[0.5], compared to random sampling, herding and inverse-Sobol' (which has access to $F$). In the right figure ($F = 100$-d i.i.d. $\text{Beta}(2,4)$), the sequential PSPs again give the lowest errors for ADD[0.3]. The comparable performance of PSPs to inverse-Sobol' points supports the claim in Section \ref{sec:conv}, that PSPs indeed enjoy a $\mathcal{O}(n^{-1})$ error rate in practice. Interestingly, herding performs worse than even Monte Carlo, which is not too surprising, since the std. Gaussian kernel measures similarities based on distances in the full $p$-dim. space. To contrast, by using the new SpIn kernel (which accounts for similarities in low-dim. projections, see Figure \ref{fig:projill}), the sequential PSPs yield improved performance to existing methods. This nicely demonstrates the \textit{curse-of-dimensionality} effect for high-dim. reduction if the kernel $\gamma$ is not carefully chosen, and how the proposed SpIn kernel offers practical relief from this curse.

\subsection{Application: Data reduction for kernel ridge regression}
\label{sec:ridge}

\begin{figure}[t]
\centering
\includegraphics[width=0.9\textwidth]{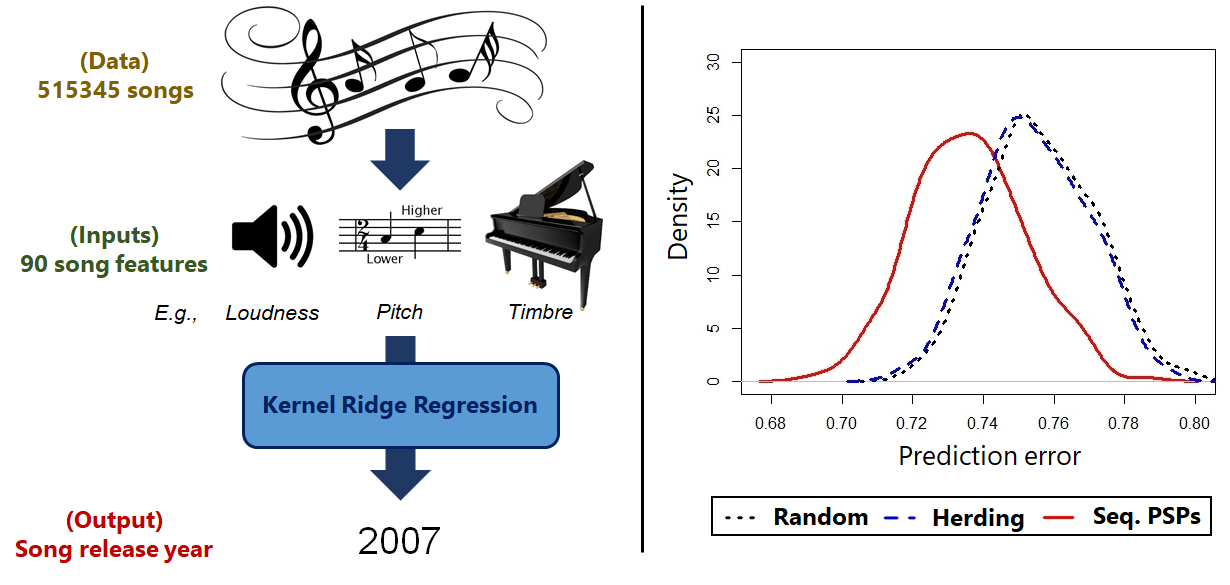}
\caption{\textup{(Left)} A visualization of KRR for predicting song release year. \textup{(Right)} Prediction error densities for 250 randomly chosen songs in testing data.}
\label{fig:msy}
\end{figure}

We now explore the effectiveness of PSPs for data reduction in kernel learning. Kernel learning methods make use of a kernel $k$ to provide effective, non-linear learning for both supervised and unsupervised problems (\citealp{Fea2001,Vap2013}). However, with $N$ denoting the training dataset size, one key bottleneck for kernel methods is that it involves the inverse of an $N \times N$ matrix, which requires $\mathcal{O}(N^3)$ work and $\mathcal{O}(N^2)$ storage. For $N$ large, this computation becomes very time- and memory-consuming (e.g., with $N > 5,000$, this becomes infeasible on many desktop computers). This problem is further compounded when training data is high-dimensional (i.e., $p \gg 1$), since a larger sample size $N$ is required. Here, PSPs can reduce the large, high-dim. training dataset to a smaller dataset which retains low-dim. features for modeling.\footnote{Of course, there have been many methods proposed for big-data kernel learning, most involving some form of Nystr{\"o}m approximation \citep{WS2001}. It is not our aim to compare with all methods in the literature, but to highlight the effectiveness of PSPs as a \textit{data reduction} tool for kernel learning.}

We illustrate this using a well-known machine learning dataset, the Million Song Dataset (MSD; \citealp{Bea2011}). MSD is a collection of audio features, extracted from a million music tracks released in the years 1922 -- 2011. We consider a subset of this data (515,345 songs) from the UCI Machine Learning Repository, with $N = 463,715$ songs for training and the remainder for testing (this split is recommended by data publishers).  In total, $p=90$ song features (continuous) are extracted, including the loudness, pitch, and timbre of each track. The goal is to fit a predictive model using training data, then use this to predict the release year (treated as continuous) for a new song in the testing data. 

To build this predictive model, we employ a kernel method called \textit{kernel ridge regression} (KRR; \citealp{Fea2001}). Given (a) a kernel of choice $k$, and (b) training song features $\{\bm{f}_m\}_{m=1}^N$ (inputs, normalized to zero mean and unit variance) and release years $\{y_m\}_{m=1}^N$ (output, normalized), KRR fits the following non-linear smoother $\hat{h}$:
\begin{equation}
\hat{h} \leftarrow \underset{h \in \mathcal{H}_k}{\textup{Argmin}} \left[ \frac{1}{N} \sum_{m=1}^N \left\{ y_m - h(\bm{f}_m) \right\}^2 + \lambda \|h\|_{k}^2 \right],
\label{eq:krr}
\end{equation}
where $\mathcal{H}_k$ is the RKHS of $k$. The smoother $\hat{h}$ can be viewed as the function in $\mathcal{H}_k$ which best fits the training data, subject to a regularization penalty $\lambda \|h\|_{k}^2$. With this fit, one can then use $\hat{h}(\bm{f}_{new})$ to predict the release year of a new song with features $\bm{f}_{new}$. As typical in statistical problems, the penalty $\lambda$ is tuned via cross-validation \citep{Fea2001}.

The fit in \eqref{eq:krr}, however, requires the inverse of the matrix ${[k(\bm{f}_m,\bm{f}_{m'})]_{m=1}^N}_{m'=1}^N$ \citep{Fea2001}, which incurs $\mathcal{O}(N^3)$ work. For large $N$ (e.g., $N = 463,715$ in MSD), this becomes computationally infeasible! To this end, let $n \ll N$, and consider the \textit{reduced} fit:
\begin{equation}
\hat{h}' \leftarrow \underset{h \in \mathcal{H}_k}{\textup{Argmin}} \left[ \frac{1}{n} \sum_{i=1}^n \left\{ y_i' - h(\bm{f}_i') \right\}^2 + \lambda \|h\|_{k}^2 \right],
\label{eq:krrred}
\end{equation}
where $\mathcal{T}_n' := \{(\bm{f}_i',y_i')_{i=1}^n\}$ is a reduced subset of the full data $\mathcal{T}_N := \{(\bm{f}_m,y_m)_{m=1}^N\}$. Using \eqref{eq:krrred}, the computation time of $\hat{h}'$ reduces from $\mathcal{O}(N^3)$ to $\mathcal{O}(n^3)$. The goal then is to choose a good reduction $\mathcal{T}_n'$, so that the objective in \eqref{eq:krrred} well-approximates that in \eqref{eq:krr}. With $F_N$ denoting the e.d.f. of $\mathcal{T}_N$, this is akin to finding a reduced dataset $\mathcal{T}_n' \subseteq \mathcal{T}_N$ (with e.d.f. $F_n$) such that $\mathbb{E}_{\bm{X} \sim F_n} [g(\bm{X})] \approx \mathbb{E}_{\bm{X} \sim F_N} [g(\bm{X})]$, where $g(\bm{f},y) = \{y - h(\bm{f})\}^2$. It is also highly unlikely that \textit{all} $p=90$ song features are useful for prediction (e.g., from intuition, song pitch and its interaction effects should not be important predictors for release year), which suggests that the desired function $h$ (and hence $g$) is low-dimensional. Viewed this way, the PSPs of $F_N$ should provide a good reduced dataset to use in \eqref{eq:krrred}.

We compare three reduction methods: (a) sequential PSPs on $F_N$ (rounded to closest point in $\mathcal{T}_N$), (b) herding points on $F_N$ (rounded to closest point in $\mathcal{T}_N$), and (c) random subsampling on $\mathcal{T}_N$. All three employ a reduced sample size of $n=4,000$ points (this is set from computation constraints on a standard desktop computer), and are judged on out-of-sample prediction errors for 250 random songs in the testing set; this randomization is then repeated 250 times to measure error variability.

Figure \ref{fig:msy} plots the prediction error densities for the three methods. Two observations are of interest. First, herding provides very little error reduction over random sampling, which is again not too surprising, since the std. Gaussian kernel does not account for low-dim. similarities between points. Second, PSPs offer noticeably better predictive performance over both random sampling and herding, which demonstrates the effectiveness of the SpIn kernel in capturing low-dim. features for predictive modeling.

Lastly, we compare the running times of these methods (both reduction and the KRR computation in \eqref{eq:krrred}, with $\lambda$ tuned via cross-validation), with the hypothetical running time of the full KRR in \eqref{eq:krr} without data reduction. Not surprisingly, random sampling is the quickest method, requiring 1,583 seconds of computation on a single-core 3.4 Ghz processor, while kernel herding and PSPs require 3,423 and 3,965 seconds, respectively. To contrast, the full KRR in \eqref{eq:krr} (with no reduction) has a hypothetical running time of $\mathcal{O}(N^3) = 1,583 \cdot N^3 / n^3\text{ seconds} \approx 78$ years, and requires $\mathcal{O}(N^2) \approx 1,720$ gigabytes of memory, which is clearly infeasible to tackle directly. Given this practical cost constraint, our reduction method offers the best predictive performance of the three methods tested.

\subsection{Application: Reduction of MCMC chains}
\label{sec:mcmc}
Next, we apply PSPs to the important problem of reducing MCMC chains in Bayesian computation. For Bayesian modeling, parameters are learned by sampling from a posterior distribution, with this sampling typically performed via MCMC methods \citep{Gea1995}. In practice, Bayesian practitioners perform a post-processing step called \textit{thinning}, which discards all-but-every $k$-th sample from the MCMC sample chain $\{\Theta_m\}_{m=1}^N$. Thinning is done for three reasons \citep{LE2012}: it reduces high sample autocorrelations, lowers storage requirements, and reduces computation time for downstream computations. One key weakness of thinning is that it is quite wasteful, since valuable information from posterior samples are thrown away. Here, PSPs can offer an improved alternative to thinning, by using the \text{full} MCMC chain to train a good representative point set. Our approach can be particularly effective for Bayesian modeling of engineering problems, where downstream posterior computations often involve expensive, time-consuming experiments.

\begin{figure}[t]
\centering
\includegraphics[width=0.90\textwidth]{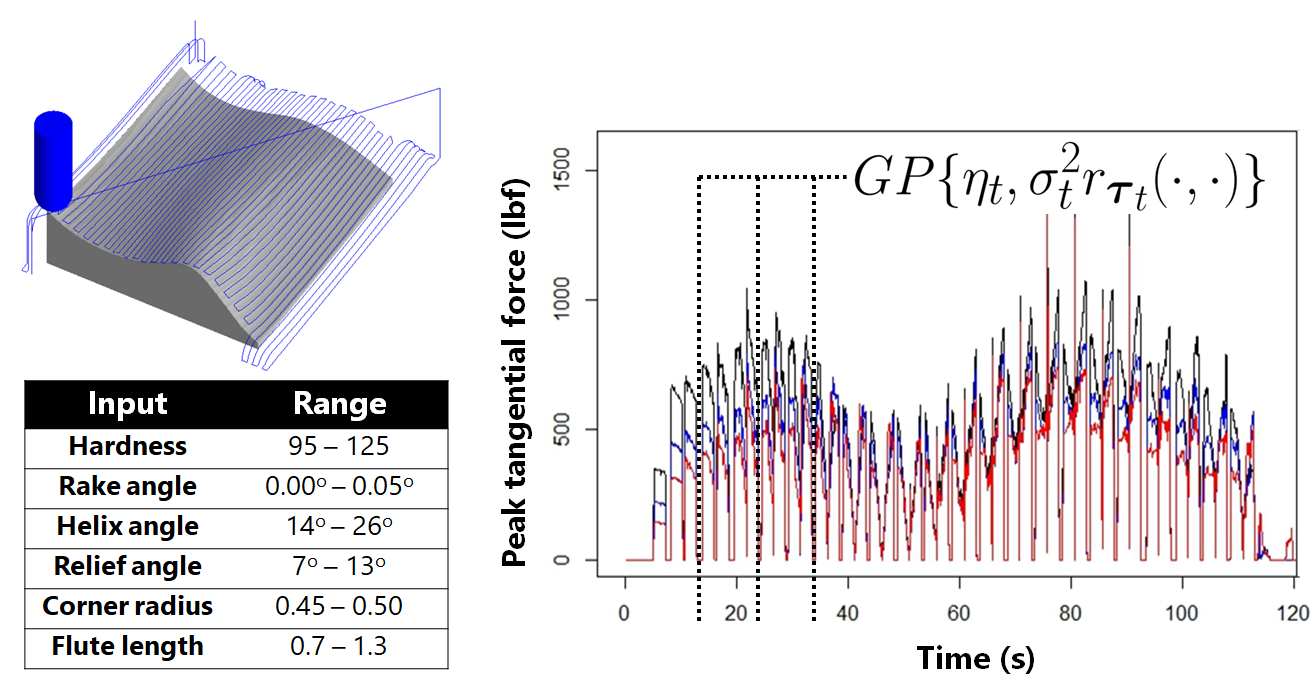}
\caption{\textup{(Top left)} A visualization of the solid end milling procedure. \textup{(Bottom left)} Design inputs and design ranges. \textup{(Right)} Peak tangential force over time for different input settings.}
\label{fig:mill}
\end{figure}


We illustrate this via an application in solid end milling, a cutting process used for precise part machining in the aerospace industry. This is visualized in Figure \ref{fig:mill} (top left): a cutting tool (in blue) is applied at a force to the workpiece (in gray), then moved along the surface on the blue lines, stripping away material as it passes. Figure \ref{fig:mill} (bottom left) gives the desired design region for the six process inputs -- five for the cutting tool, and one for material hardness. Figure \ref{fig:mill} (right) shows, for different input settings, the peak tangential forces over time ($T=3,373$ times in total) -- a key output of the milling process. These forces are simulated via complex computer models on the Production Module software\footnote{https://www.thirdwavesys.com/production-module/}, which is time-consuming to run. Because of the costly nature of simulation experiments, the strategy is to first run a small number of experiments, then use this data to build an \textit{emulator} which efficiently predicts forces at an untested input.


We use the following Gaussian process (GP) emulator model (see \citealp{Sea2013}). For fixed inputs $\bm{c} \in \mathbb{R}^6$, let $f_t(\bm{c})$ be the tangential force at time $t$. Our model assumes:
\begin{equation}
f_t(\bm{c}) \sim \text{GP}\{\eta_t, \sigma^2_t r(\cdot,\cdot; \boldsymbol{\tau}_t)\}, \quad f_t(\bm{c}) \perp f_{t'}(\bm{c}), \quad t \neq t'.
\label{eq:emu}
\end{equation}
In other words, for each time slice, the forces over input space follow independent GPs, with time-dependent mean $\eta_t$, variance $\sigma^2_t$, and length-scale parameters $\boldsymbol{\tau}_t \in \mathbb{R}^6_+$ for the Gaussian correlation $r$. Figure \ref{fig:mill} (right) visualizes this emulation model. Suppose computer experiments are conducted at inputs $\{\bm{c}_d\}_{d=1}^D$, yielding simulated forces $\{f_1(\bm{c}_d), \cdots, f_T(\bm{c}_d)\}_{d=1}^D$. For \textit{fixed} parameters $\Theta_t = \{\eta_t,\sigma^2_t,\boldsymbol{\tau}_t\}$, the model in \eqref{eq:emu} yields (a) a closed-form predictor $\hat{f}_t(\bm{c}_{new}; \Theta_t) = \mathbb{E}\{f_t(\bm{c}_{new}) | \text{Data}\}$ for forces at a new input $\bm{c}_{new}$, and (b) a closed-form uncertainty quantification (UQ) $V_t(\bm{c}_{new}; \Theta_t) = \text{Var}\{f_t(\bm{c}_{new}) | \text{Data}\}$ for this prediction. The full equations for $\hat{f}_t(\bm{c}_{new}; \Theta_t)$ and $V_t(\bm{c}_{new}; \Theta_t)$ can be found in \cite{Sea2013}.

From a Bayesian view, we are interested in the posterior means $\mathbb{E}_{\Theta_t|\text{Data}}[ \hat{f}_t(\bm{c}_{new}; \Theta_t) ]$ and $\mathbb{E}_{\Theta_t|\text{Data}}[ V_t(\bm{c}_{new}; \Theta_t) ]$, since parameters $\Theta_t$ are not known in practice. Using posterior samples $\Theta_t^{(1)}, \cdots ,\Theta_t^{(N)} \sim [\Theta_t|\text{Data}]$, these two quantities can be estimated via:
\begin{equation}
\frac{1}{N} \sum_{m=1}^N \hat{f}_t(\bm{c}_{new}; \Theta_t^{(m)}) \quad \text{and} \quad \frac{1}{N} \sum_{m=1}^N V_t(\bm{c}_{new}; \Theta_t^{(m)}), \quad t = 1, \cdots, T.
\label{eq:emuest}
\end{equation}
The bottleneck is now apparent: every evaluation of $\hat{f}_t$ and $V_t$ requires $\mathcal{O}(D^3)$ work (see \citealp{Sea2013}), meaning the estimators in \eqref{eq:emuest} require $\mathcal{O}(NTD^3)$ work to compute. As total time steps $T$ and design size $D$ grow large, \eqref{eq:emuest} becomes intractable to compute for the full chain $\{\Theta_t^{(1)}, \cdots ,\Theta_t^{(N)}\}_{t=1}^T$, and some reduction of this chain is necessary.

Our set-up is as follows. First, $D=30$ experiments are conducted using a MaxPro design \citep{Jea2015}. Next, for each time step $t$, we obtain $N=50,000$ MCMC samples from the posterior $[\Theta_t|\text{Data}]$, and reduce this down to $n \ll N$ points. Finally, with the new input $\bm{c}_{new}$ chosen as the center of the design region (Figure \ref{fig:mill}, bottom left), prediction and UQ are then performed via \eqref{eq:emuest} using the reduced sample. Four methods are tested: thinning, herding, SPs, and PSPs, with these methods judged on how well they estimate the desired posterior quantities $\mathbb{E}_{\Theta_t|\text{Data}}[ \hat{f}_t(\bm{c}_{new}; \Theta_t) ]$ and $\mathbb{E}_{\Theta_t|\text{Data}}[ V_t(\bm{c}_{new}; \Theta_t) ]$ (true quantities are estimated via a longer MCMC chain with 200,000 samples).

\begin{table}[t]
\centering
\begin{tabular}{c c c c | c c c}
\toprule
 & \multicolumn{3}{c}{\textbf{One-shot} ($n=100$)} & \multicolumn{3}{c}{\textbf{Sequential} ($n=1,000$)}\\
 & Thinning & SPs & PSPs & Thinning & Herding & PSPs\\
 \toprule
 \textit{Prediction} & 0.094 & 0.072 & \cblb{0.069} & 0.073 & 0.080 & \cblb{0.068}\\
 \textit{UQ} & 0.297 & 0.221 & \cblb{0.203} & 0.214 & 1.680 & \cblb{0.201}\\
\hline
\textit{Running time} & 1,648 & 1,746 & 1,926 & 15,972 & 16,876 & 17,392 \\
\toprule
\end{tabular}
\caption{\textup{(Top)} Mean-squared, time-avg. errors for posterior prediction $\mathbb{E}_{\Theta_t|\text{Data}}[ \hat{f}_t(\bm{c}_{new}; \Theta_t) ]$ and UQ $\mathbb{E}_{\Theta_t|\text{Data}}[ V_t(\bm{c}_{new}; \Theta_t) ]$. The method with lowest error is highlighted in \cblb{blue}. \textup{(Bottom)} Running time (in sec.) for reduction, prediction and UQ.}
\label{tbl:emures}
\end{table}

Table \ref{tbl:emures} (top) summarizes the resulting time-averaged errors, split by one-shot and sequential methods. For one-shot methods ($n=100$), PSPs provide the smallest errors for both prediction and UQ, followed closely by SPs, with thinning yielding the largest errors. The poor performance of thinning is not surprising, since it throws away valuable information from the full MCMC chain. For sequential methods, PSPs again offer smaller errors to thinning and herding. As in simulations, herding performs noticeably worse than thinning. From an engineering view, one reason is that not all design inputs (and certainty not all of its interactions) are useful for predicting tangential force. By accounting for this expected sparsity structure in MCMC reduction, the proposed PSPs can yield improved estimation of downstream posterior quantities, as demonstrated here.

Lastly, Table \ref{tbl:emures} (bottom) summarizes the running time of these methods (reduction, prediction and UQ) on a single-core 3.4 Ghz processor. For one-shot (sequential) reduction, thinning is the quickest method, followed closely by SPs (herding) and PSPs, with all methods requiring less than 1 hour (6 hours) of running time. To contrast, the prediction and UQ using the full $N=50,000$ MCMC chain requires $798,600$ seconds ($\approx 9$ days). This is clearly impractical, since a \textit{new} simulation can be performed within the time needed for prediction, thereby \textit{defeating} the purpose of emulation in the first place! Given this need for efficient prediction, PSPs offer the best performance of the reduction methods tested.

\section{Conclusion}
\label{sec:concl}

In this paper, we propose a new method for reducing high-dimensional big data into a representative dataset, called projected support points (PSPs). The key novelty here is the sparsity-inducing (SpIn) kernel, which encourages the preservation of low-dimensional features in high-dimensional data. We first present a theoretical framework for understanding the sparsity conditions needed to lift the curse-of-dimensionality for data reduction, reconciling fundamental ideas from experimental design and Quasi-Monte Carlo (QMC). We then provide practical guidelines on the SpIn kernel specification, and propose two algorithms for efficiently computing PSPs. Finally, we demonstrate the effectiveness of PSPs in simulations, and illustrate its applicability in solving two real-world problems, the first for kernel learning and the second for MCMC reduction.

Looking forward, there are many interesting avenues for future work. One such direction is in speeding up the computation of PSPs for large $n$. Another direction is in exploring an adaptive modification of the PSP methodology, which iteratively incorporates posterior learning on $\boldsymbol{\theta}$ to target active dimensions in data reduction.

\textbf{Supplementary materials}: This paper is accompanied by a supplementary file, containing (a) proofs for technical results and (b) implementation details for algorithms.

{\fontsize{11.5}{14}\selectfont
\spacingset{1.28} 
\bibliography{references}}



\pagebreak

\begin{appendices}
\setcounter{equation}{0}
\renewcommand\theequation{A.\arabic{equation}}

\section{Further details on \texttt{psp.mm} and \texttt{psp.mm.seq}}
\label{app:det}

\subsection{Majorization-minimization}
\label{app:mm}
Here, we give a brief overview of the employed optimization technique \textit{majorization-minimization} (MM), following \cite{Lan2016}. Consider first the definition of a \textit{majorization function}:
\begin{definition}
Let $f:\mathbb{R}^s \rightarrow \mathbb{R}$ be an objective to be minimized. A function $h(\bm{z}|\bm{z}')$ \textit{majorizes} $f(\bm{z})$ at $\bm{z}' \in \mathbb{R}^s$ if $h(\bm{z}'|\bm{z}') = f(\bm{z}')$ and $h(\bm{z}|\bm{z}') \geq f(\bm{z})$ for all $\bm{z} \neq \bm{z}'$.
\end{definition}

\noindent Starting at an initial point $\bm{z}^{[0]} \in\mathbb{R}^s$, MM first minimizes the majorization function $h(\cdot | \bm{z}^{[0]})$ in place of the true objective $f$, then iterates the updates $\bm{z}^{[l+1]} \leftarrow \argmin_{\bm{z}} h(\bm{z}|\bm{z}^{[l]})$ until convergence. The point sequence from this update scheme can be shown to have the \textit{descent property} $f(\bm{z}^{[l+1]}) \leq h(\bm{z}^{[l+1]}|\bm{z}^{[l]}) \leq h(\bm{z}^{[l]}|\bm{z}^{[l]}) = f(\bm{z}^{[l]})$, which ensures the sequence of objective values $( f(\bm{z}^{[l]}))_{l=1}^\infty$ is monotonically decreasing. In this sense, MM guarantees better quality solutions as the number of iterations increases -- a desirable property for optimization. The key to computational efficiency for MM is to ``design'' a surrogate function $g$ which not only majorizes $f$, but also admits an easy-to-compute closed-form minimizer.\\

For the problem at hand, we will establish a quadratic majorizer for the blockwise objective in \eqref{eq:pspoptest}, which then admits an efficient iterative map $\mathcal{M}_i$ for optimization. We begin by showing that the kernel $\gamma_{\boldsymbol{\theta}}$ can be both majorized and minorized by appropriately-chosen paraboloids:

\begin{lemma}
Let $\gamma_{\boldsymbol{\theta}}(\bm{z})$ be the shift-invariant form of kernel $\gamma_{\boldsymbol{\theta}}$ in \eqref{eq:gamma} under the POD weights \eqref{eq:podtheta}. For any $\bm{z}' \in \mathbb{R}^p$, $\gamma_{\boldsymbol{\theta}}(\bm{z})$ is majorized at $\bm{z}'$ by the paraboloid:
\begin{equation}
\bar{Q}_{\boldsymbol{\theta}}(\bm{z}|\bm{z}') := \gamma_{\boldsymbol{\theta}}(\bm{z}') - 2 [\gamma_{\boldsymbol{\theta}}(\bm{z}') \Omega_{\boldsymbol{\theta}} \bm{z}' ]^T (\bm{z} - \bm{z}') + 2 (\bm{z} - \bm{z}')^T\nabla^2_{\boldsymbol{\theta}}(\bm{z}')(\bm{z} - \bm{z}'),
\label{eq:major}
\end{equation}
and minorized at $\bm{z}'$ by the paraboloid:
\begin{equation}
\ubar{Q}_{\boldsymbol{\theta}}(\bm{z}|\bm{z}') := \gamma_{\boldsymbol{\theta}}(\bm{z}') \left[ 1 + \bm{z}' \Omega_{\boldsymbol{\theta}} \bm{z}' \right] - \gamma_{\boldsymbol{\theta}}(\bm{z}') \bm{z}^T \Omega_{\boldsymbol{\theta}} \bm{z},
\label{eq:minor}
\end{equation}
where
\[\Omega_{\boldsymbol{\theta}} = \underset{i=1, \cdots, p}{\textup{diag}} \left\{ \sum_{i \in \bm{u} \subseteq [p]} \Gamma_{|\bm{u}|}\prod_{l\in \bm{u}} \theta_l \right\} \quad \text{and} \quad \nabla^2_{\boldsymbol{\theta}}(\bm{z}):= \frac{4}{e \|\Omega_{\boldsymbol{\theta}} \bm{z}\|_2^2} \left(\Omega_{\boldsymbol{\theta}} \bm{z} \right) \left(\Omega_{\boldsymbol{\theta}} \bm{z} \right)^T.\]
\label{lem:majminexp}
\end{lemma}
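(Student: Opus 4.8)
The plan is to first collapse the combinatorial POD sum defining $\gamma_{\boldsymbol{\theta}}$ into a single quadratic form, and then to treat the two paraboloid bounds separately, since they rest on different elementary inequalities. For the reduction, observe that in the shift-invariant variable $\bm{z} = \bm{x}-\bm{y}$ the exponent separates coordinate-wise: $\sum_{\varnothing \neq \bm{u} \subseteq [p]} \theta_{\bm{u}} \|\bm{z}_{\bm{u}}\|_2^2 = \sum_{i=1}^p z_i^2 \sum_{\bm{u} \ni i} \theta_{\bm{u}} = \bm{z}^T \Omega_{\boldsymbol{\theta}} \bm{z}$, where $\Omega_{\boldsymbol{\theta}}$ is exactly the diagonal matrix in the statement (its $i$-th entry collects all subsets containing $i$). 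Hence $\gamma_{\boldsymbol{\theta}}(\bm{z}) = \exp(-\bm{z}^T \Omega_{\boldsymbol{\theta}} \bm{z})$, with gradient $\nabla \gamma_{\boldsymbol{\theta}}(\bm{z}) = -2\gamma_{\boldsymbol{\theta}}(\bm{z})\, \Omega_{\boldsymbol{\theta}}\bm{z}$ and Hessian $\nabla^2 \gamma_{\boldsymbol{\theta}}(\bm{z}) = \gamma_{\boldsymbol{\theta}}(\bm{z})\big[4(\Omega_{\boldsymbol{\theta}}\bm{z})(\Omega_{\boldsymbol{\theta}}\bm{z})^T - 2\Omega_{\boldsymbol{\theta}}\big]$. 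This step explains why $\Omega_{\boldsymbol{\theta}}\bm{z}'$ (the gradient direction) appears in both paraboloids, and reduces everything to bounding a single anisotropic Gaussian.

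The minorizer is the easy half and I would dispatch it first. Writing $t := \bm{z}^T\Omega_{\boldsymbol{\theta}}\bm{z} - \bm{z}'^T\Omega_{\boldsymbol{\theta}}\bm{z}'$, the elementary inequality $e^{-t} \geq 1-t$ (valid for all $t \in \mathbb{R}$), multiplied through by $\gamma_{\boldsymbol{\theta}}(\bm{z}') = \exp(-\bm{z}'^T\Omega_{\boldsymbol{\theta}}\bm{z}')$, yields exactly $\gamma_{\boldsymbol{\theta}}(\bm{z}) \geq \gamma_{\boldsymbol{\theta}}(\bm{z}')\big[1 + \bm{z}'^T\Omega_{\boldsymbol{\theta}}\bm{z}' - \bm{z}^T\Omega_{\boldsymbol{\theta}}\bm{z}\big] = \ubar{Q}_{\boldsymbol{\theta}}(\bm{z}|\bm{z}')$, with equality at $t=0$, i.e. $\bm{z}=\bm{z}'$. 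This is precisely the concave paraboloid in \eqref{eq:minor}, and no further work is needed.

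For the majorizer I would match the paraboloid to $\gamma_{\boldsymbol{\theta}}$ to first order at $\bm{z}'$ and then control the curvature. The constant and linear coefficients in \eqref{eq:major} are $\gamma_{\boldsymbol{\theta}}(\bm{z}')$ and $\nabla\gamma_{\boldsymbol{\theta}}(\bm{z}') = -2\gamma_{\boldsymbol{\theta}}(\bm{z}')\Omega_{\boldsymbol{\theta}}\bm{z}'$, so $\bar{Q}_{\boldsymbol{\theta}}(\cdot|\bm{z}')$ agrees with $\gamma_{\boldsymbol{\theta}}$ in value and gradient at $\bm{z}'$; it remains to show the quadratic remainder dominates. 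The only part of the Hessian that can increase $\gamma_{\boldsymbol{\theta}}$ is the rank-one positive term $4\gamma_{\boldsymbol{\theta}}(\cdot)(\Omega_{\boldsymbol{\theta}}\cdot)(\Omega_{\boldsymbol{\theta}}\cdot)^T$ (the $-2\gamma_{\boldsymbol{\theta}}\Omega_{\boldsymbol{\theta}}$ term is negative semidefinite and only helps). The key device is the scalar bound $\max_{x \geq 0} x e^{-x} = e^{-1}$, applied with $x = \bm{z}^T\Omega_{\boldsymbol{\theta}}\bm{z}$: it converts the product of the exponential decay $\gamma_{\boldsymbol{\theta}}$ with the growing factor $\|\Omega_{\boldsymbol{\theta}}\bm{z}\|_2^2$ into the dimension-free constant $4/e$, which is exactly the coefficient in $\nabla^2_{\boldsymbol{\theta}}$, directed along the gradient axis $\Omega_{\boldsymbol{\theta}}\bm{z}'$ (for $\bm{z}'$ with $\Omega_{\boldsymbol{\theta}}\bm{z}' \neq \bm{0}$; the degenerate case is handled directly since $\gamma_{\boldsymbol{\theta}} \leq \gamma_{\boldsymbol{\theta}}(\bm{z}')$ there).

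The main obstacle, and where I would spend the most care, is verifying that this single-direction curvature genuinely dominates $\gamma_{\boldsymbol{\theta}}$ in \emph{every} direction, not merely along $\Omega_{\boldsymbol{\theta}}\bm{z}'$. I would handle this through the factorization $\gamma_{\boldsymbol{\theta}}(\bm{z}) = \gamma_{\boldsymbol{\theta}}(\bm{z}')\exp\big(-2(\Omega_{\boldsymbol{\theta}}\bm{z}')^T\bm{w} - \bm{w}^T\Omega_{\boldsymbol{\theta}}\bm{w}\big)$ with $\bm{w}=\bm{z}-\bm{z}'$: along the $\Omega_{\boldsymbol{\theta}}$-orthogonal complement of $\bm{z}'$ the paraboloid is flat at value $\gamma_{\boldsymbol{\theta}}(\bm{z}')$ while $\gamma_{\boldsymbol{\theta}}$ can only decay, so domination there is automatic, and the remaining content collapses to a one-variable inequality in the projection $u = (\Omega_{\boldsymbol{\theta}}\bm{z}')^T\bm{w}$ (the Gaussian decay term couples $u$ to $\bm{w}^T\Omega_{\boldsymbol{\theta}}\bm{w}$ via Cauchy--Schwarz, preventing blow-up as $u \to -\infty$). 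Closing this scalar inequality, while tracking the tangency at $u=0$ that makes $\bar{Q}_{\boldsymbol{\theta}}$ a genuine majorizer touching $\gamma_{\boldsymbol{\theta}}$ at $\bm{z}'$, is the technical crux of the argument.
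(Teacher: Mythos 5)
Your handling of the minorizer is correct and coincides with the paper's argument: after collapsing the POD sum to the quadratic form $\bm{z}^T\Omega_{\boldsymbol{\theta}}\bm{z}$, the tangent-line bound $e^{-t}\ge 1-t$ gives \eqref{eq:minor} at once. Your gradient/Hessian computation and the device $\max_{x\ge 0}xe^{-x}=e^{-1}$ likewise match the first half of the paper's proof of the majorization.

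The genuine gap is exactly the step you defer as ``the technical crux'': the one-variable inequality you would need cannot be closed, because the majorization claim with the rank-one matrix $\nabla^2_{\boldsymbol{\theta}}(\bm{z}')$ is false as written. Take $p=1$, $\Gamma_1^{(\theta)}=1$, $\theta_1=100$, $z'=0.1$, $z=0.11$: then $\Omega_{\boldsymbol{\theta}}=100$, $\gamma_{\boldsymbol{\theta}}(z')=e^{-1}$, $\nabla^2_{\boldsymbol{\theta}}(z')=4/e$, and $\bar{Q}_{\boldsymbol{\theta}}(z\,|\,z')=e^{-1}-2\cdot 10e^{-1}(0.01)+2(4/e)(0.01)^2\approx 0.2946$, whereas $\gamma_{\boldsymbol{\theta}}(z)=e^{-1.21}\approx 0.2982>\bar{Q}_{\boldsymbol{\theta}}(z\,|\,z')$. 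The failure is one of scale: dividing through by $\gamma_{\boldsymbol{\theta}}(\bm{z}')$ and setting $t=2(\Omega_{\boldsymbol{\theta}}\bm{z}')^T\bm{w}$, your scalar inequality has second-order coefficient $\tfrac{2}{e\,\gamma_{\boldsymbol{\theta}}(\bm{z}')\|\Omega_{\boldsymbol{\theta}}\bm{z}'\|_2^2}$ on the right, which is too small whenever $\gamma_{\boldsymbol{\theta}}(\bm{z}')\|\Omega_{\boldsymbol{\theta}}\bm{z}'\|_2^2$ is moderately large (it equals $100/e$ in the example), so the tangency at $t=0$ is violated to second order. For comparison, the paper's own proof never isolates the gradient direction: it bounds the Hessian \emph{globally} by the isotropic matrix $\tfrac{4}{e}\bigl(\max_l\Omega_{\boldsymbol{\theta},ll}\bigr)\bm{I}_p$, using $\|\Omega_{\boldsymbol{\theta}}\bm{z}\|_2^2\le\bigl(\max_l\Omega_{\boldsymbol{\theta},ll}\bigr)\bm{z}^T\Omega_{\boldsymbol{\theta}}\bm{z}$ together with $ve^{-v}\le e^{-1}$, and then applies Taylor's theorem; that route does produce a valid quadratic majorizer, but with curvature term $\tfrac{2}{e}\bigl(\max_l\Omega_{\boldsymbol{\theta},ll}\bigr)\|\bm{z}-\bm{z}'\|_2^2$ rather than the rank-one form in \eqref{eq:major}, and the paper's final substitution of the rank-one $\nabla^2_{\boldsymbol{\theta}}(\bm{z}')$ for that full-rank bound is itself unjustified (and, by the counterexample, unrepairable). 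So your instinct to flag this step as the crux was sound; the remedy is to replace the curvature matrix by the isotropic bound, not to push harder on the one-dimensional reduction.
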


Consider now the unbiased objective estimate in \eqref{eq:pspoptest}, which as a function of point $\bm{x}_i$, is proportional to:
\begin{equation}
\frac{1}{nR} \sum_{\substack{j=1 \\ j \neq i}}^n \sum_{r=1}^R \gamma_{\boldsymbol{\theta}_r} (\bm{x}_i,\bm{x}_j) - \frac{1}{N_sR} \sum_{m=1}^{N_s} \sum_{r=1}^R \gamma_{\boldsymbol{\theta}_r}(\bm{x}_i,\bm{y}_m).
\label{eq:pspoptest2}
\end{equation}
\noindent Using the majorizing and minorizing paraboloids $\bar{Q}$ and $\ubar{Q}$ in Lemma \ref{lem:majminexp}, one can then derive a quadratic majorizing function for the blockwise objective in \eqref{eq:pspoptest2}:
\begin{lemma}
For fixed $\mathcal{D}_{-i}$, the blockwise objective \eqref{eq:pspoptest2} is majorized at $\bm{x}' \in \mathbb{R}^p$ by:
\[h_i(\bm{x}|\bm{x}';\mathcal{Y}, \vartheta, \mathcal{D}_{-i}) = \frac{1}{nR}\sum_{\substack{j=1 \\ j \neq i}}^n \sum_{r=1}^R \bar{Q}_{\boldsymbol{\theta}_r}(\bm{x} - \bm{x}_j|\bm{x}' - \bm{x}_j) - \frac{1}{N_sR}\sum_{m=1}^{N_s} \sum_{r=1}^R \ubar{Q}_{\boldsymbol{\theta}_r}(\bm{x} - \bm{y}_m|\bm{x}' - \bm{y}_m),\]
which has the unique closed-form minimizer:
\small
\begin{align}
\begin{split}
\mathcal{M}_i(\bm{x}';\mathcal{Y}, \vartheta, \mathcal{D}_{-i}) &= \left( \frac{2}{N_s R} \sum_{m=1}^{N_s} \sum_{r=1}^R \gamma_{\boldsymbol{\theta}_r}(\bm{x}'-\bm{y}_m) \Omega_{\boldsymbol{\theta}_r} + \frac{4}{nR} \sum_{\substack{j=1\\ j \neq i}}^n \sum_{r=1}^R \nabla_{\boldsymbol{\theta}_r}^2(\bm{x}'-\bm{x}_j) \right)^{-1} \\
& \; \left[ \frac{2}{N_s R}\sum_{m=1}^{N_s} \left( \sum_{r=1}^R \gamma_{\boldsymbol{\theta}_r}(\bm{x}'-\bm{y}_m) \Omega_{\boldsymbol{\theta}_r} \right) \bm{y}_m + \frac{2}{nR} \sum_{\substack{j=1\\ j \neq i}}^n \left( \sum_{r=1}^R \gamma_{\boldsymbol{\theta}_r}(\bm{x}' - \bm{x}_j) \Omega_{\boldsymbol{\theta}_r} \right) (\bm{x}'-\bm{x}_j) \right. \\
& \; \quad  \left. + \frac{4}{nR} \sum_{\substack{j=1\\ j \neq i}}^n \sum_{r=1}^R  \nabla_{\boldsymbol{\theta}_r}^2(\bm{x}'-\bm{x}_j) \bm{x}'\right].
\label{eq:closedform2}
\end{split}
\end{align}
\normalsize
\label{lem:majmin}
\end{lemma}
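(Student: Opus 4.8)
The plan is to build the majorizer $h_i$ summand-by-summand directly from Lemma \ref{lem:majminexp}, then minimize the resulting quadratic in closed form. First I would rewrite the blockwise objective \eqref{eq:pspoptest2} in shift-invariant form, using $\gamma_{\boldsymbol{\theta}_r}(\bm{x}_i,\bm{x}_j) = \gamma_{\boldsymbol{\theta}_r}(\bm{x}_i - \bm{x}_j)$. The two sums enter with opposite signs, so to dominate the whole objective I majorize every summand with a positive coefficient and minorize every summand with a negative coefficient (so that its negation is majorized). Concretely, I apply the majorizing paraboloid $\bar{Q}_{\boldsymbol{\theta}_r}$ of \eqref{eq:major} to each $\gamma_{\boldsymbol{\theta}_r}(\bm{x}-\bm{x}_j)$ in the first sum, expanded at $\bm{x}'-\bm{x}_j$, and the minorizing paraboloid $\ubar{Q}_{\boldsymbol{\theta}_r}$ of \eqref{eq:minor} to each $\gamma_{\boldsymbol{\theta}_r}(\bm{x}-\bm{y}_m)$ in the second sum, expanded at $\bm{x}'-\bm{y}_m$. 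Because a finite nonnegative combination of majorizers majorizes the corresponding combination, the sum is exactly $h_i$, and the majorization property (equality at $\bm{x}=\bm{x}'$, domination elsewhere) is inherited termwise since each paraboloid touches $\gamma_{\boldsymbol{\theta}_r}$ at its own expansion point by Lemma \ref{lem:majminexp}.

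Next I would minimize $h_i$, which is a quadratic in $\bm{x}$. Collecting the quadratic part, the term $\bar{Q}$ contributes the matrices $\nabla^2_{\boldsymbol{\theta}_r}(\bm{x}'-\bm{x}_j)$, which are positive semidefinite as scaled outer products $(\Omega_{\boldsymbol{\theta}_r}\bm{z})(\Omega_{\boldsymbol{\theta}_r}\bm{z})^T$, while $-\ubar{Q}$ contributes $\gamma_{\boldsymbol{\theta}_r}(\bm{x}'-\bm{y}_m)\,\Omega_{\boldsymbol{\theta}_r}$, which is strictly positive definite since $\Omega_{\boldsymbol{\theta}_r}$ is diagonal with positive entries and $\gamma_{\boldsymbol{\theta}_r}>0$. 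Hence the combined Hessian $A$ — the matrix inverted in \eqref{eq:closedform2} — is positive definite whenever $N_s\geq 1$, so $h_i$ is strictly convex with a unique minimizer. Setting $\nabla_{\bm{x}} h_i = 0$ yields a linear system $A\bm{x}=b$, and solving gives $\mathcal{M}_i = A^{-1}b$. The right-hand side $b$ collects exactly the three contributions appearing in \eqref{eq:closedform2}: the $\bm{y}_m$-weighted term $\gamma_{\boldsymbol{\theta}_r}\Omega_{\boldsymbol{\theta}_r}\bm{y}_m$ from $-\ubar{Q}$, the linear coefficient $\gamma_{\boldsymbol{\theta}_r}\Omega_{\boldsymbol{\theta}_r}(\bm{x}'-\bm{x}_j)$ of $\bar{Q}$, and the cross-term $\nabla^2_{\boldsymbol{\theta}_r}(\bm{x}'-\bm{x}_j)\bm{x}'$ produced when the quadratic $(\bm{x}-\bm{x}')^T\nabla^2_{\boldsymbol{\theta}_r}(\bm{x}-\bm{x}')$ is differentiated.

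Since the conceptual content already resides in Lemma \ref{lem:majminexp}, the main work here is careful bookkeeping rather than any deep argument. The error-prone step is tracking the coefficients $2/(N_sR)$, $2/(nR)$, and $4/(nR)$ through the gradient, and correctly handling the shift $(\bm{z}-\bm{z}') = (\bm{x}-\bm{x}')$ in the quadratic term of $\bar{Q}$, whose differentiation splits into both an $A$-contribution and the $\nabla^2_{\boldsymbol{\theta}_r}\bm{x}'$ term in $b$. The one point to \emph{verify} rather than merely compute is the strict positive-definiteness of $A$, which simultaneously guarantees invertibility and uniqueness; this reduces to $\Omega_{\boldsymbol{\theta}_r}\succ 0$ and $\gamma_{\boldsymbol{\theta}_r}>0$, with the semidefinite $\nabla^2_{\boldsymbol{\theta}_r}$ terms only reinforcing positivity. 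With these in place, the closed form \eqref{eq:closedform2} follows, and the descent property of the overall MM update is then immediate from the majorization inequality established in the first step.
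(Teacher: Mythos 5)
Your proposal is correct and follows essentially the same route as the paper: apply the paraboloids of Lemma \ref{lem:majminexp} termwise (majorizing the positively-signed kernel terms and minorizing the negatively-signed ones), then set the gradient of the resulting quadratic to zero. Your explicit check that the Hessian matrix being inverted is positive definite is a worthwhile addition that the paper leaves implicit in its claim of uniqueness.
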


\noindent From \eqref{eq:closedform2}, one can show that the running time for map $\mathcal{M}_i$ is $\mathcal{O}(np)$, assuming subsample sizes $N$ and $R$ are independent of $n$ and $p$.

\subsection{Recursive computation of POD weights}
\label{app:rec}

A key computational problem for both the one-shot and sequential methods \texttt{psp.mm} and \texttt{psp.mm.seq} is the evaluation of the diagonal matrix $\Omega_{\boldsymbol{\theta}}$ in Lemma \ref{lem:majminexp}, which is required for computing the iterative map $\mathcal{M}_i$. Addressing this is particularly important for high-dimensions, because a brute-force evaluation of each entry in $\Omega_{\boldsymbol{\theta}}$ requires $\mathcal{O}(2^p)$ work -- this is infeasible even for moderate choices of $p$. Similar to the recursive component-by-component construction of POD-weighted shifted lattice rules (see Section 5.6 of \citealp{Dea2013}), the following theorem gives a recursive method for efficiently computing $\Omega_{\boldsymbol{\theta}}$:

\begin{theorem}
The $l$-th diagonal of $\Omega_{\boldsymbol{\theta}}$ can be computed as $\Omega_{\boldsymbol{\theta},ll} = \theta_l \sum_{k=1}^p \Gamma_{k}^{(\theta)} r_{p,k-1}^{(-l)}$. For each $l = 1, \cdots, p$, $r_{p,k-1}^{(-l)}$ can be computed recursively as:
\begin{equation}
r_{s,k}^{(-l)} = \theta_{s} r_{s-1,k-1}^{(-l)} + r_{s-1,k}^{(-l)}, \quad s \in [p] \setminus \{l\}, \quad r_{l,k}^{(-l)} = r_{l-1,k}^{(-l)},
\label{eq:rec}
\end{equation}
with initial values $r_{s,0}^{(-l)} = 1$ and $r_{s,k}^{(-l)} = 0$, $k>s$.
\label{thm:podcomp}
\end{theorem}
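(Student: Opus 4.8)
The plan is to recognize the auxiliary quantities $r_{s,k}^{(-l)}$ as elementary symmetric polynomials, verify the closed-form expression for $\Omega_{\boldsymbol{\theta},ll}$ via a bijective decomposition of the subsets indexing the defining sum, and then confirm that the recursion \eqref{eq:rec} correctly computes these polynomials. Throughout I use the definition of $\Omega_{\boldsymbol{\theta}}$ from Lemma \ref{lem:majminexp}, under which the $l$-th diagonal entry is the subset sum $\Omega_{\boldsymbol{\theta},ll} = \sum_{l \in \bm{u} \subseteq [p]} \Gamma_{|\bm{u}|}^{(\theta)} \prod_{j \in \bm{u}} \theta_j$, ranging over all subsets of $[p]$ that contain $l$.

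First I would establish the closed form. The key observation is that every subset $\bm{u}$ containing $l$ decomposes uniquely as $\bm{u} = \{l\} \cup \bm{v}$ with $\bm{v} \subseteq [p] \setminus \{l\}$, under which $|\bm{u}| = |\bm{v}| + 1$ and $\prod_{j \in \bm{u}} \theta_j = \theta_l \prod_{j \in \bm{v}} \theta_j$. Substituting this reparametrization and then grouping the resulting sum over $\bm{v}$ by cardinality $|\bm{v}| = k-1$ yields $\Omega_{\boldsymbol{\theta},ll} = \theta_l \sum_{k=1}^{p} \Gamma_{k}^{(\theta)} r_{p,k-1}^{(-l)}$, provided I \emph{define} $r_{p,k-1}^{(-l)} := \sum_{\bm{v} \subseteq [p] \setminus \{l\},\, |\bm{v}| = k-1} \prod_{j \in \bm{v}} \theta_j$, i.e. the $(k-1)$-th elementary symmetric polynomial in the variables $\{\theta_j : j \neq l\}$. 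This proves the stated formula for $\Omega_{\boldsymbol{\theta},ll}$.

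It then remains to show that the recursion \eqref{eq:rec} computes these polynomials. More generally I would set $r_{s,k}^{(-l)} := \sum_{\bm{v} \subseteq [s] \setminus \{l\},\, |\bm{v}| = k} \prod_{j \in \bm{v}} \theta_j$, the $k$-th elementary symmetric polynomial in $\{\theta_j : j \le s,\, j \neq l\}$, so that taking $s = p$ recovers the quantity above. For $s \neq l$, I would partition the $k$-subsets $\bm{v} \subseteq [s] \setminus \{l\}$ according to whether the index $s$ belongs to $\bm{v}$: those containing $s$ contribute $\theta_s\, r_{s-1,k-1}^{(-l)}$ and those omitting $s$ contribute $r_{s-1,k}^{(-l)}$, giving the first branch of \eqref{eq:rec}. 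For $s = l$, enlarging the admissible range from $[l-1]$ to $[l] \setminus \{l\}$ adds nothing, so $r_{l,k}^{(-l)} = r_{l-1,k}^{(-l)}$, the second branch. The initial conditions $r_{s,0}^{(-l)} = 1$ (empty product) and $r_{s,k}^{(-l)} = 0$ for $k > s$ (no subset of that size exists) are immediate.

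I do not anticipate a genuine obstacle: the result is a combinatorial identity combined with the standard Pascal-type recurrence for elementary symmetric polynomials. The only points needing care are the index shift $k \mapsto k-1$ linking the order weight $\Gamma_{k}^{(\theta)}$ to $r_{p,k-1}^{(-l)}$, and the correct treatment of the $s = l$ step so that $\theta_l$ is genuinely excluded from every $r_{s,k}^{(-l)}$ and reappears only as the explicit prefactor $\theta_l$ in the final expression.
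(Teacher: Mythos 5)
Your proof is correct and follows essentially the same route as the paper's: peel off the index $l$ from each subset to obtain the prefactor $\theta_l$ and the sum over $\bm{v} \subseteq [p]\setminus\{l\}$ grouped by cardinality, then derive the recursion by partitioning subsets according to whether they contain $s$. Your framing of $r_{s,k}^{(-l)}$ as elementary symmetric polynomials and your explicit treatment of the $s=l$ branch are minor presentational additions, not a different argument.
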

\noindent The appeal of the recursive scheme in Theorem \ref{thm:podcomp} is that each entry in $\Omega_{\boldsymbol{\theta}}$ can be computed in $\mathcal{O}(p^2)$ work, which is much faster than the $\mathcal{O}(2^p)$ work in a brute-force evaluation. For truncated order weights, i.e., $\Gamma_{|\bm{u}|}^{(\theta)} = 0, |\bm{u}| > K$ for some $K < p$, this running time can be further reduced to $\mathcal{O}(Kp)$. Even in the untruncated setting, one can perform a manual truncation for large $|\bm{u}|$ without sacrificing much accuracy in practice.


\section{Proofs of technical results}
\label{app:proofs}

\subsection{Proof of Theorem \ref{thm:rkhs}}
We require an important lemma to prove this theorem:
\begin{lemma}
\citep{Aro1950}
\label{lem:aro}
Suppose $\mathcal{H}$ is a separable Hilbert space of functions on $\mathcal{X}$ with orthonormal basis $\{\phi_k(\bm{x})\}_{k=0}^\infty$. Then $\mathcal{H}$ is a RKHS if and only if $\sum_{k=0}^\infty |\phi_k(\bm{x})|^2 < \infty$ for any $\bm{x} \in \mathcal{X}$, with unique kernel given by $k(\bm{x},\bm{y})=\sum_{k=0}^\infty \phi_k(\bm{x}) \phi_k(\bm{y})$.
\end{lemma}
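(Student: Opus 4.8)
The plan is to prove the equivalence through the standard characterization that a Hilbert space of functions is an RKHS precisely when every point-evaluation functional $L_{\bm{x}}: f \mapsto f(\bm{x})$ is bounded, and then to translate this boundedness into the stated summability condition on the orthonormal basis by way of the Riesz representation theorem and Parseval's identity. Throughout I would fix the notation $f = \sum_{k=0}^\infty c_k \phi_k$ with $c_k = \langle f, \phi_k \rangle$ and $\|f\|^2 = \sum_{k=0}^\infty |c_k|^2 < \infty$ for an arbitrary $f \in \mathcal{H}$.

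For the forward direction, assume $\mathcal{H}$ is an RKHS, so each $L_{\bm{x}}$ is bounded and Riesz supplies a representer $k_{\bm{x}} \in \mathcal{H}$ with $f(\bm{x}) = \langle f, k_{\bm{x}} \rangle$ for all $f$. Taking $f = \phi_k$ shows $\phi_k(\bm{x}) = \langle \phi_k, k_{\bm{x}} \rangle$ is exactly the $k$-th Fourier coefficient of $k_{\bm{x}}$, so Parseval's identity gives $\sum_{k=0}^\infty |\phi_k(\bm{x})|^2 = \|k_{\bm{x}}\|^2 < \infty$, which is the summability condition. The kernel formula then follows from the identity $\langle u, v \rangle = \sum_{k} \langle u, \phi_k \rangle \langle \phi_k, v \rangle$ applied to $u = k_{\bm{x}}$ and $v = k_{\bm{y}}$, yielding $k(\bm{x},\bm{y}) = \langle k_{\bm{x}}, k_{\bm{y}} \rangle = \sum_{k} \phi_k(\bm{x}) \phi_k(\bm{y})$.

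For the converse, assume $\sum_{k} |\phi_k(\bm{x})|^2 < \infty$ for every $\bm{x}$. I would define $k_{\bm{x}} := \sum_{k=0}^\infty \phi_k(\bm{x}) \phi_k$, which converges in $\mathcal{H}$ precisely because its coefficient sequence is square-summable by hypothesis. A Cauchy--Schwarz estimate, $|\sum_{k} c_k \phi_k(\bm{x})| \le \|f\| \, (\sum_{k} |\phi_k(\bm{x})|^2)^{1/2}$, then shows that the linear functional $f \mapsto \langle f, k_{\bm{x}} \rangle = \sum_{k} c_k \phi_k(\bm{x})$ is bounded. To conclude that $\mathcal{H}$ is an RKHS, I must identify this bounded functional with genuine point evaluation, i.e. verify $f(\bm{x}) = \sum_{k} c_k \phi_k(\bm{x})$.

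This last identification is the main obstacle, because norm convergence of $\sum_{k} c_k \phi_k$ to $f$ in $\mathcal{H}$ does not by itself force pointwise convergence of the series to the value $f(\bm{x})$. I would resolve it by invoking the standing hypothesis that elements of $\mathcal{H}$ are honest functions on $\mathcal{X}$ whose basis expansions represent them pointwise (which holds in the concrete setting of Theorem \ref{thm:rkhs}, where the $\phi_k$ are the explicit monomial-times-Gaussian functions); alternatively, one can construct the Moore--Aronszajn space $\mathcal{H}_K$ from the candidate kernel $K(\bm{x},\bm{y}) = \sum_{k} \phi_k(\bm{x}) \phi_k(\bm{y})$ (absolutely convergent by Cauchy--Schwarz), check that $\{\phi_k\}$ is an orthonormal basis of $\mathcal{H}_K$ with matching inner product, and identify $\mathcal{H}$ with $\mathcal{H}_K$ through the resulting isometry, so that continuity of evaluation transfers from $\mathcal{H}_K$ to $\mathcal{H}$. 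Either route delivers the reproducing property $f(\bm{x}) = \langle f, k_{\bm{x}} \rangle$, so $\mathcal{H}$ is an RKHS with kernel $k = K$. Uniqueness of the kernel is then immediate: if $k$ and $k'$ both reproduce, then $\langle f, k(\cdot,\bm{x}) - k'(\cdot,\bm{x}) \rangle = 0$ for all $f \in \mathcal{H}$, forcing $k(\cdot,\bm{x}) = k'(\cdot,\bm{x})$ for every $\bm{x}$.
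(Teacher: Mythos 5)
The paper never proves this lemma at all---it is quoted from \cite{Aro1950} and used as a black box in the proof of Theorem \ref{thm:rkhs}---so there is no in-paper argument to compare against; your job was to reconstruct the classical proof, and your main line does so correctly. The forward direction (Riesz representer $k_{\bm{x}}$, the identity $\phi_k(\bm{x}) = \langle \phi_k, k_{\bm{x}}\rangle$, Parseval for the summability, and expanding $\langle k_{\bm{x}}, k_{\bm{y}}\rangle$ in coordinates for the kernel formula) is complete, as is the uniqueness argument at the end. More importantly, you correctly isolated the one genuine subtlety of the converse: square-summability of $(\phi_k(\bm{x}))_k$ makes $f \mapsto \sum_k c_k \phi_k(\bm{x})$ a \emph{bounded} functional, but identifying it with honest point evaluation requires knowing that norm-convergent basis expansions converge pointwise to the correct values. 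Without some hypothesis of that kind the statement is in fact false in full generality: one can realize $\ell^2$ as a space of functions on $\mathbb{N}\cup\{\infty\}$ by appending the value of an everywhere-defined unbounded linear functional that vanishes on the standard basis, so that $\sum_k |\phi_k(\bm{x})|^2 < \infty$ at every point yet evaluation at $\infty$ is unbounded. Your first resolution---reading the lemma with the standing convention that elements of $\mathcal{H}$ are represented pointwise by their expansions---is the right one, and it is exactly the situation in which the paper applies the lemma: in Theorem \ref{thm:rkhs} the space $\mathcal{H}_{\gamma_{\boldsymbol{\theta}}}$ is \emph{defined} through explicit pointwise series $\exp(-\|\bm{x}\|_{\boldsymbol{\theta}}^2)\sum_{\boldsymbol{\alpha}} w_{\boldsymbol{\alpha}}\bm{x}^{\boldsymbol{\alpha}}$, following \cite{Min2010}.

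Your second proposed route, however, does not actually evade the obstacle. The coefficient-matching isometry between $\mathcal{H}$ and the Moore--Aronszajn space $\mathcal{H}_K$ transfers continuity of evaluation only if it sends each element of $\mathcal{H}$ to \emph{itself as a function on $\mathcal{X}$}, and that pointwise compatibility is precisely the assumption in question; asserting it is circular. So keep the first route as the proof (stating the pointwise-representation hypothesis explicitly as part of the lemma) and drop the Moore--Aronszajn alternative, or rework it to make clear where the pointwise identification is being assumed rather than derived.
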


\begin{proof}(Theorem \ref{thm:rkhs})
We adopt a similar approach as Minh (2010) to derive the RKHS for $\gamma_{\boldsymbol{\theta}}$. Note that:
\begin{align}
\begin{split}
\gamma_{\boldsymbol{\theta}}(\bm{x},\bm{y}) &= \exp(-\|\bm{x}-\bm{y}\|_{\boldsymbol{\theta}}^2)\\
& = \exp(-\|\bm{x}\|^2_{\boldsymbol{\theta}}) \exp(-\|\bm{y}\|^2_{\boldsymbol{\theta}}) \exp( 2 \langle \bm{x},\bm{y} \rangle_{\boldsymbol{\theta}} )\\
& = \exp(-\|\bm{x}\|^2_{\boldsymbol{\theta}}) \exp(-\|\bm{y}\|^2_{\boldsymbol{\theta}}) \sum_{k=0}^\infty \frac{2^k}{k!} \sum_{|{\boldsymbol{\alpha}}|=k} C_{\boldsymbol{\alpha}}^k \bm{x}^{\boldsymbol{\alpha}} \bm{y}^{\boldsymbol{\alpha}} \boldsymbol{\theta}^{\boldsymbol{\alpha}},
\end{split}
\label{eq:expn}
\end{align}
where the last step follows by the series expansion:
\[\exp(2 \langle \bm{x},\bm{y} \rangle_{\boldsymbol{\theta}}) = \sum_{k=0}^\infty \frac{2^k \langle \bm{x}, \bm{y} \rangle_{\boldsymbol{\theta}}^k}{k!} = \sum_{k=0}^\infty \frac{2^k}{k!} \sum_{|{\boldsymbol{\alpha}}|=k} C_{\boldsymbol{\alpha}}^k \bm{x}^{\boldsymbol{\alpha}} \bm{y}^{\boldsymbol{\alpha}} \boldsymbol{\theta}^{\boldsymbol{\alpha}}.\]

Now, assume $\mathcal{H}_{\gamma_{\boldsymbol{\theta}}}$ is the space in \eqref{eq:rkhs} with inner product \eqref{eq:inner}. The completeness of $\mathcal{H}_{\gamma_{\boldsymbol{\theta}}}$ can be shown using a similar argument in Minh (2010), so $(\mathcal{H}_{\gamma_{\boldsymbol{\theta}}}, \langle \cdot, \cdot \rangle_{\gamma_{\boldsymbol{\theta}}})$ is a valid Hilbert space. Define the basis $\phi_{\boldsymbol{\alpha}}(\bm{x}) = \sqrt{{2^k C_{\boldsymbol{\alpha}}^k \boldsymbol{\theta}^{\boldsymbol{\alpha}} } / {|\boldsymbol{\alpha}|!}} \exp(-\|\bm{x}\|^2_{\boldsymbol{\theta}}) \bm{x}^{\boldsymbol{\alpha}}$, $|{\boldsymbol{\alpha}}| \in \mathbb{N}_0$, and note that (a) this basis is orthonormal under the inner product in \eqref{eq:inner}, and (b) $\text{span}\{\phi_{{\boldsymbol{\alpha}}}(\bm{x})\}=\mathcal{H}_{\gamma_{\boldsymbol{\theta}}}$, which shows $\mathcal{H}_{\gamma_{\boldsymbol{\theta}}}$ is separable. Moreover, because $\sum_{k=0}^\infty \sum_{|{\boldsymbol{\alpha}}|=k} \phi_{{\boldsymbol{\alpha}}}^2(\bm{x}) < \infty$ and:
\[\sum_{k=0}^\infty \sum_{|{\boldsymbol{\alpha}}|=k} \phi_{{\boldsymbol{\alpha}}}(\bm{x}) \phi_{{\boldsymbol{\alpha}}}(\bm{y}) = \sum_{k=0}^\infty \sum_{|{\boldsymbol{\alpha}}|=k} {\frac{2^k C_{\boldsymbol{\alpha}}^k \boldsymbol{\theta}^{\boldsymbol{\alpha}} }{k!}} \exp(-\|\bm{x}\|^2_{\boldsymbol{\theta}}) \exp(-\|\bm{y}\|^2_{\boldsymbol{\theta}}) \bm{x}^{\boldsymbol{\alpha}} \bm{y}^{\boldsymbol{\alpha}} = \gamma_{\boldsymbol{\theta}}(\bm{x},\bm{y}),\]
it follows by Lemma \ref{lem:aro} that $(\mathcal{H}_{\gamma_{\boldsymbol{\theta}}}, \langle \cdot, \cdot \rangle_{\gamma_{\boldsymbol{\theta}}})$ is the RKHS corresponding to kernel $\gamma_{\boldsymbol{\theta}}$.
\end{proof}


\subsection{Proof of Theorem \ref{thm:dim}}
To prove this theorem, we require a lemma:
\begin{lemma}
For fixed $p$ and $\boldsymbol{\alpha} = (\alpha_1, \cdots, \alpha_p)$, $\alpha_l \in \mathbb{Z}_+$, $\lim_{k \rightarrow \infty} \sum_{|\boldsymbol{\alpha}| = k} 1/C^k_{\boldsymbol{\alpha}} = p$.
\label{lem:combn}
\end{lemma}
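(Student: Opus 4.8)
The plan is to prove the equivalent statement that the full sum
$S(k,p) := \sum_{|\boldsymbol{\alpha}| = k} 1/C_{\boldsymbol{\alpha}}^k$, taken over all $\boldsymbol{\alpha} \in \mathbb{Z}_{\geq 0}^p$ with $|\boldsymbol{\alpha}| = k$, converges to $p$. Rewriting $1/C_{\boldsymbol{\alpha}}^k = \prod_{l=1}^p \alpha_l!\,/\,k!$, the guiding intuition is that this reciprocal multinomial equals $1$ exactly at the $p$ ``corner'' compositions $\boldsymbol{\alpha} = k\,\bm{e}_l$ and is negligibly small at every ``balanced'' composition, so the limit $p$ is precisely the contribution of the $p$ corners. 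The cleanest way I would make this rigorous is by induction on the dimension $p$, driven by a recursion obtained by conditioning on the last coordinate.

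First I would set up the recursion. Writing $\alpha_p = m$ and summing the remaining coordinates over $|\boldsymbol{\alpha}'| = k - m$ yields
\begin{equation}
S(k,p) = \sum_{m=0}^{k} \frac{m!\,(k-m)!}{k!}\, S(k-m,p-1) = \sum_{m=0}^{k} \frac{S(k-m,p-1)}{\binom{k}{m}},
\nonumber
\end{equation}
since $\sum_{|\boldsymbol{\alpha}'| = k-m} \prod_{l=1}^{p-1}\alpha_l! = (k-m)!\,S(k-m,p-1)$. The base case is immediate, $S(k,1) = k!/k! = 1$ for all $k$, matching $p=1$; I would also record the boundary value $S(0,q) = 1$ for every $q$, coming from the all-zero composition.

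The induction step rests on two elementary facts. The first is the sub-lemma $\sum_{m=1}^{k-1} 1/\binom{k}{m} \to 0$: the end terms $m = 1$ and $m = k-1$ each equal $1/k$, while every remaining term is at most $1/\binom{k}{2} = 2/[k(k-1)]$ and there are fewer than $k$ of them, so the whole middle block is $O(1/k)$. Assuming inductively that $S(n,p-1) \to p-1$ and that $\sup_n S(n,p-1) =: B_{p-1} < \infty$, I would split the recursion into the $m=0$ term (equal to $S(k,p-1) \to p-1$), the $m=k$ term (equal to $S(0,p-1) = 1$), and the middle block $\sum_{m=1}^{k-1} S(k-m,p-1)/\binom{k}{m}$, which is bounded by $B_{p-1}\sum_{m=1}^{k-1}1/\binom{k}{m} \to 0$. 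Hence $S(k,p) \to (p-1) + 1 + 0 = p$. Uniform boundedness propagates along the same recursion, since $\sum_{m=0}^k 1/\binom{k}{m}$ is bounded by a universal constant, giving $\sup_k S(k,p) < \infty$ and closing the induction.

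The step I expect to be the main obstacle is not any single computation but choosing the right decomposition. A naive ``largest term times number of terms'' bound fails: for $p \geq 4$ the balanced compositions, though individually exponentially small in $k$, are too numerous to be controlled by a crude uniform bound on $1/C_{\boldsymbol{\alpha}}^k$. The recursion sidesteps this entirely by absorbing the exponential smallness of the balanced region into the factors $1/\binom{k}{m}$, so that the only quantitative input required is the harmless sub-lemma $\sum_{m=1}^{k-1}1/\binom{k}{m}\to 0$ together with the uniform boundedness carried through the induction.
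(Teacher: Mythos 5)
Your proof is correct, and it takes a genuinely different route from the paper's. The paper argues directly in dimension $p$: it stratifies the sum $\sum_{|\boldsymbol{\alpha}|=k} 1/C^k_{\boldsymbol{\alpha}}$ according to the size of the largest coordinate (the strata $\exists\,\alpha_l = k$, $\exists\,\alpha_l = k-1$, \dots, down to all $\alpha_l \leq k-p$), observes that the first stratum contributes exactly $p$, and bounds each subsequent stratum by (number of terms) $\times$ (largest term) -- e.g.\ the final ``balanced'' block has at most $\binom{k-1}{p-1}$ terms each at most $p!/(k(k-1)\cdots(k-p+1))$, giving $\mathcal{O}(p/k)$. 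So the ``largest term times count'' strategy you dismiss as naive does in fact succeed in the paper, but only because the stratification makes the per-term bound decay fast enough to beat the polynomial growth in the number of terms; your concern is valid for a single crude bound over the whole non-corner region. Your alternative -- the recursion $S(k,p) = \sum_{m=0}^{k} S(k-m,p-1)/\binom{k}{m}$ obtained by conditioning on the last coordinate, plus induction on $p$ with the elementary sub-lemma $\sum_{m=1}^{k-1} 1/\binom{k}{m} \to 0$ and propagation of uniform boundedness -- is clean and fully rigorous; all of its ingredients check out (the base case $S(k,1)=1$, the boundary value $S(0,q)=1$, and the uniform bound $\sup_k \sum_{m=0}^k 1/\binom{k}{m} < \infty$). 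What your approach buys is that the only delicate counting is one-dimensional and the balanced region never has to be enumerated; what the paper's approach buys is a single direct decomposition with explicit $\mathcal{O}(p^j/k^{j-1})$ error terms and no induction.
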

\begin{proof}(Lemma \ref{lem:combn})
Fix $p \in \mathbb{Z}_+$, and consider the following decomposition for sufficiently large $k \in \mathbb{Z}_+$:
\[\sum_{|\boldsymbol{\alpha}| = k} \frac{1}{C^k_{\boldsymbol{\alpha}}} = \sum_{|\boldsymbol{\alpha}| = k, \exists \alpha_l = k} \frac{1}{C^k_{\boldsymbol{\alpha}}} + \sum_{|\boldsymbol{\alpha}| = k, \exists \alpha_l = k-1} \frac{1}{C^k_{\boldsymbol{\alpha}}} + \cdots + \sum_{|\boldsymbol{\alpha}| = k, \exists \alpha_l = k-p+1} \frac{1}{C^k_{\boldsymbol{\alpha}}} + \sum_{|\boldsymbol{\alpha}| = k, \alpha_l \leq k-p} \frac{1}{C^k_{\boldsymbol{\alpha}}}.\]
For the first sum, it is easy to see that $\sum_{|\boldsymbol{\alpha}| = k, \exists \alpha_l = k} 1/C^k_{\boldsymbol{\alpha}} = p$, because there are $p$ terms in this sum, with each term equal to 1. For the second sum, one can similarly show that $\sum_{|\boldsymbol{\alpha}| = k, \exists \alpha_l = k-1} 1/C^k_{\boldsymbol{\alpha}} = \mathcal{O}(p^2/k)$, because there are $\mathcal{O}(p^2)$ terms in this sum, with each term bounded above by $1/k$. Extending the same argument for remaining terms, the above decomposition can be rewritten as:
\[ \sum_{|\boldsymbol{\alpha}| = k} \frac{1}{C^k_{\boldsymbol{\alpha}}} = p + \mathcal{O}\left(\frac{p^2}{k}\right) + \cdots + \mathcal{O}\left(\frac{p^{p+1}}{k(k-1) \cdots (k-p+1)}\right) + \sum_{|\boldsymbol{\alpha}| = k, \alpha_l \leq k-p} \frac{1}{C^k_{\boldsymbol{\alpha}}}.\]
Consider now the last sum $\sum_{|\boldsymbol{\alpha}| = k, \alpha_l \leq k-p} 1/C^k_{\boldsymbol{\alpha}}$. Note that $|\{ \boldsymbol{\alpha}: \sum_l \alpha_l = k\}| = {{k-1} \choose {p-1}}$ (this is the number of ways to put $k$ balls in $p$ containers), so there are at most ${{k-1} \choose {p-1}}$ terms in this term. Moreover, $1/C^k_{\boldsymbol{\alpha}} \leq p!/(k (k-1) \cdots (k-p+1))$ whenever $|\boldsymbol{\alpha}| = k$, $\alpha_l \leq k-p$. Combining these two facts, we get $\sum_{|\boldsymbol{\alpha}| = k, \alpha_l < k-p} 1/C^k_{\boldsymbol{\alpha}} \leq p/k$. Hence:
\[ \lim_{k\rightarrow \infty}\sum_{|\boldsymbol{\alpha}| = k} \frac{1}{C^k_{\boldsymbol{\alpha}}} = \lim_{k \rightarrow \infty} \left\{p + \mathcal{O}\left(\frac{p^2}{k}\right) + \cdots + \mathcal{O}\left(\frac{p^{p+1}}{k(k-1) \cdots (k-p+1)}\right)  + \frac{p}{k} \right\} = p.\]
\end{proof}

\begin{proof}(Theorem \ref{thm:dim})
Take first Lemma \ref{lem:kh}, which can be rewritten as the upper bound:
\begin{equation}
I(g;F,F_n) \leq \|g\|_{\gamma_{\boldsymbol{\theta}}} D_{\gamma_{\boldsymbol{\theta}}}(F,F_n).
\label{eq:kh2}
\end{equation}
This bound has two parts: the norm $\|g\|_{\gamma_{\boldsymbol{\theta}}}$, and the discrepancy $D_{\gamma_{\boldsymbol{\theta}}}(F,F_n)$. We will consider each part separately below.

Consider first the discrepancy $D_{\gamma_{\boldsymbol{\theta}}}(F,F_n)$. Let $\tilde{F}_n$ be the e.d.f. of the random point set $\{\bm{x}_i\}_{i=1}^n$, where $(\bm{x}_i)_{i=1}^\infty \distas{i.i.d.} F$. The expected discrepancy for this point sequence becomes:
\begin{align*}
\mathbb{E}[D_{\gamma_{\boldsymbol{\theta}}}^2(F,F_n)] &= \mathbb{E}_{\{\bm{x}_i\}}\left[ \mathbb{E}\{\gamma_{\boldsymbol{\theta}}(\bm{Y}, \bm{Y}')\} - \frac{2}{n} \sum_{i=1}^n \mathbb{E}\{\gamma_{\boldsymbol{\theta}}(\bm{x}_i, \bm{Y})\} + \frac{1}{n^2} \sum_{i=1}^n \sum_{j=1}^n \gamma_{\boldsymbol{\theta}}(\bm{x}_i, \bm{x}_j)\right]\\
&= \mathbb{E}\left[ \gamma_{\boldsymbol{\theta}}(\bm{Y}, \bm{Y}') - \frac{2}{n} \sum_{i=1}^n \mathbb{E}_{\{\bm{x}_i\}}\{\gamma_{\boldsymbol{\theta}}(\bm{x}_i, \bm{Y})\} + \frac{1}{n^2} \sum_{i=1}^n \sum_{j=1}^n \mathbb{E}_{\{\bm{x}_i\}}\{\gamma_{\boldsymbol{\theta}}(\bm{x}_i, \bm{x}_j)\}\right]\\
&= \mathbb{E} \{ \gamma_{\boldsymbol{\theta}}(\bm{Y}, \bm{Y}') \} - 2 \mathbb{E}\{\gamma_{\boldsymbol{\theta}}(\bm{Y}, \bm{Y}')\} + \frac{1}{n^2} \left[ n(n-1) \mathbb{E}\{\gamma_{\boldsymbol{\theta}}(\bm{Y}, \bm{Y}')\} + n \mathbb{E}\{\gamma_{\boldsymbol{\theta}}(\bm{Y}, \bm{Y}) \} \right]\\
&= \frac{1}{n} \left[ 1 -  \mathbb{E}\{\gamma_{\boldsymbol{\theta}}(\bm{Y}, \bm{Y}')\} \right] \leq \frac{1}{n}.
\end{align*}
Because PSPs are defined as the \textit{minimizer} of $D_{\gamma_{\boldsymbol{\theta}}}(F,F_n)$, it follows by the above averaging argument that $D_{\gamma_{\boldsymbol{\theta}}}(F,F_n) \leq 1/\sqrt{n}$.

Consider next the second term $\|g\|_{\gamma_{\boldsymbol{\theta}}}$. By Theorem \ref{thm:rkhs}, we have:
\begin{align*}
\|g\|_{\gamma_{\boldsymbol{\theta}}}^2 = \sum_{k=0}^\infty \frac{k!}{2^k} \sum_{|{\boldsymbol{\alpha}}|=k} \frac{w_{\boldsymbol{\alpha}}^2}{C_{\boldsymbol{\alpha}}^k {\boldsymbol{\theta}}^{\boldsymbol{\alpha}}} &= \sum_{k=0}^\infty \frac{k!}{2^k} \sum_{|{\boldsymbol{\alpha}}|=k} \left(\frac{1}{(C_{\boldsymbol{\alpha}}^k)^{3/2}}\right) \left(\frac{\sqrt{C_{\boldsymbol{\alpha}}^k}w_{\boldsymbol{\alpha}}^2} { {\boldsymbol{\theta}}^{\boldsymbol{\alpha}}}\right)\\
&\leq \sum_{k=0}^\infty \frac{k!}{2^k} \sqrt{\sum_{|{\boldsymbol{\alpha}}|=k} \frac{1}{(C_{\boldsymbol{\alpha}}^k)^3} } \sqrt{ \sum_{|{\boldsymbol{\alpha}}|=k} \frac{C_{\boldsymbol{\alpha}}^k w_{\boldsymbol{\alpha}}^4} { {\boldsymbol{\theta}}^{2\boldsymbol{\alpha}}} } \tag{Cauchy-Schwarz}\\
&= \sum_{k=0}^\infty \frac{C}{\sqrt{p}2^k} \sqrt{\sum_{|{\boldsymbol{\alpha}}|=k} \frac{1}{(C_{\boldsymbol{\alpha}}^k)^3} } \sqrt{ \sum_{|{\boldsymbol{\alpha}}|=k} C_{\boldsymbol{\alpha}}^k \prod_{l=1}^p \left( \frac{w_l^4} { \theta_l^2}\right)^{\alpha_l} } \tag{POD form of $w_{\boldsymbol{\alpha}}$ and $\Gamma_{|\boldsymbol{\alpha}|}^{(w)} \leq C/\{p^{-1/4}(|\boldsymbol{\alpha}|!)^{-1/2}\}$} \\
&\leq \sum_{k=0}^\infty \frac{C}{\sqrt{p}2^k} \sqrt{\sum_{|{\boldsymbol{\alpha}}|=k} \frac{1}{(C_{\boldsymbol{\alpha}}^k)} } \sqrt{ \sum_{|{\boldsymbol{\alpha}}|=k} C_{\boldsymbol{\alpha}}^k \prod_{l=1}^p \left(\frac{w_l^4} { \theta_l^2}\right)^{\alpha_l} } \\
&\leq \sum_{k=0}^\infty \frac{C}{\sqrt{p} 2^k} \sqrt{\sum_{|{\boldsymbol{\alpha}}|=k} \frac{1}{(C_{\boldsymbol{\alpha}}^k)} } \left(\sqrt{\sum_{l=1}^p \frac{w_l^4} { \theta_l^2}} \right)^{k} \tag{Binomial theorem}.
\end{align*}
Taking the limit as $k \rightarrow \infty$, Lemma \ref{lem:combn} gives $\sqrt{\sum_{|{\boldsymbol{\alpha}}|=k} {1}/{(C_{\boldsymbol{\alpha}}^k)} } \rightarrow \sqrt{p}$. Finally, if $\sum_{l=1}^\infty {w_l^4}/{ \theta_l^2} < 4$, the above series converges to a constant independent of $p$, as desired. Combining this with the upper bound in \eqref{eq:kh2}, the proof is complete.
\end{proof}

\subsection{Proof of Theorem \ref{thm:pspconv}}
\begin{proof}
This follows by a direct extension of Theorems 4 and 5 in \cite{MJ2016b}.
\end{proof}

\subsection{Proof of Proposition \ref{prop:closed}}
\begin{proof}
Let $\pi$ be the i.i.d. Gamma priors in \eqref{eq:prior}. The SpIn kernel $\mathbb{E}_{\boldsymbol{\theta} \sim \pi}\{\gamma_{\boldsymbol{\theta}} (\bm{x},\bm{y})\}$ can be rewritten as follows:
\begin{align*}
\begin{split}
\mathbb{E}_{\boldsymbol{\theta} \sim \pi}\{\gamma_{\boldsymbol{\theta}}(\bm{x},\bm{y})\} &= \mathbb{E}_{\boldsymbol{\theta} \sim \pi}\left[ \exp\left\{ -\sum_{l=1}^p \theta_l (x_l - y_l)^2 \right\}\right]\\
&= \prod_{l=1}^p \left[ \int_0^\infty \exp\{ - \theta_l (x_{l}-y_l)^2\} \cdot \left\{\frac{\lambda^\nu}{\Gamma(\nu)} \theta_l^{\nu-1} \exp(-\lambda \theta_l)\right\} \; d\theta_l \right]\\
&= \prod_{l=1}^p \left\{ \frac{\lambda}{(x_l - y_l)^2 + \lambda} \right\}^\nu,
\end{split}
\end{align*}
which completes the proof.
\end{proof}

\subsection{Proof of Lemma \ref{lem:majminexp}}
\begin{proof}
First consider the majorizing paraboloid $\bar{Q}$ in \eqref{eq:major}. It is easy to show that:
\[\nabla_{\bm{z}} \gamma_{\boldsymbol{\theta}}(\bm{z}) = - 2 \gamma_{\boldsymbol{\theta}}(\bm{z}) \Omega_{\boldsymbol{\theta}} \bm{z} \quad \text{and} \quad \nabla^2_{\bm{z}} \gamma_{\boldsymbol{\theta}}(\bm{z}) = 2 \gamma_{\boldsymbol{\theta}}(\bm{z}) \left[ 2 \Omega_{\boldsymbol{\theta}}\bm{z} (\Omega_{\boldsymbol{\theta}} \bm{z})^T - \Omega_{\boldsymbol{\theta}} \right].\]
Note that, for any $\bm{z} \in \mathbb{R}^p$:
\begin{align*}
\nabla^2 \gamma_{\boldsymbol{\theta}}(\bm{z}) \preceq 4 \gamma_{\boldsymbol{\theta}}(\bm{z}) (\Omega_{\boldsymbol{\theta}} \bm{z})(\Omega_{\boldsymbol{\theta}} \bm{z})^T &\preceq 4 \gamma_{\boldsymbol{\theta}}(\bm{z}) \|\Omega_{\boldsymbol{\theta}} \bm{z}\|_2^2 \bm{I}_p\\
& \preceq 4 \exp\left\{ - \sum_{l=1}^p \Omega_{\boldsymbol{\theta},ll} \|\bm{z}_l\|_2^2 \right\} \left( \sum_{l=1}^p \Omega_{\boldsymbol{\theta},ll} \|\bm{z}_l\|_2^2 \right) \left( \max_l \Omega_{\boldsymbol{\theta},ll} \right) \bm{I}_p\\
& \preceq \frac{4}{e} \left( \max_l \Omega_{\boldsymbol{\theta},ll} \right) \bm{I}_p = 4\Delta_{\boldsymbol{\theta}} \tag{$\displaystyle \min_z \exp\{- z^2\}z^2 = \frac{1}{ e}$}.
\end{align*}
Using a second-order Taylor expansion of $\gamma_{\boldsymbol{\theta}}(\bm{z})$ at $\bm{z} = \bm{z}'$, the following must hold for some $\boldsymbol{\xi} = (1-t)\bm{z} + t\bm{z}'$ with $t \in [0,1]$:
\[\gamma_{\boldsymbol{\theta}}(\bm{z}) = \gamma_{\boldsymbol{\theta}}(\bm{z}') - 2 [\gamma_{\boldsymbol{\theta}}(\bm{z}') \Omega_{\boldsymbol{\theta}} \bm{z}']^T (\bm{z} - \bm{z}') + \frac{1}{2} (\bm{z} - \bm{z}')^T [\nabla^2 \gamma_{\boldsymbol{\theta}}(\boldsymbol{\xi})] (\bm{z} - \bm{z}') \leq \bar{Q}(\bm{z}|\bm{z}').\]
By definition, $\bar{Q}(\bm{z}|\bm{z}')$ majorizes $\gamma_{\boldsymbol{\theta}}(\bm{z})$ at $\bm{z} = \bm{z}'$.

Next, consider the minorizing paraboloid $\ubar{Q}$ in \eqref{eq:minor}. Note that $\exp(t) \geq (1-t')\exp(t') + t\exp(t')$ by convexity. Hence:
\small
\begin{align*}
\gamma_{\boldsymbol{\theta}}(\bm{z}) &\geq \gamma_{\boldsymbol{\theta}}(\bm{z}') \left[ 1 + \sum_{\varnothing \neq \bm{u} \subseteq [p]} \theta_{\bm{u}} \|\bm{z}'_{\bm{u}}\|_2^2 \right] - \gamma_{\boldsymbol{\theta}}(\bm{z}') \sum_{\varnothing \neq \bm{u} \subseteq [p]} \theta_{\bm{u}} \|\bm{z}_{\bm{u}}\|_2^2 = \gamma_{\boldsymbol{\theta}}(\bm{z}') \left[ 1 + \bm{z}' \Omega_{\boldsymbol{\theta}} \bm{z}' \right] - \gamma_{\boldsymbol{\theta}}(\bm{z}') \bm{z}^T \Omega_{\boldsymbol{\theta}} \bm{z},
\end{align*}
\normalsize
which completes the proof.
\end{proof}

\subsection{Proof of Lemma \ref{lem:majmin}}
\begin{proof}
The majorization claim follows directly from Lemma \ref{lem:majminexp}, and the closed-form minimizer can be obtained by setting the gradient of $h_i$ to zero and solving for $\bm{x}$.
\end{proof}

\subsection{Proof of Theorem \ref{thm:algconv}}
\begin{proof}
Given the result in Lemma \ref{lem:majmin}, this theorem can be proven by Prop. 3.4 of Mairal (2013), under certain regularity conditions. These conditions are satisfied by the convexity and compactness of $\mathcal{X}$, and the differentiability of $\gamma_{\boldsymbol{\theta}}(\cdot)$.
\end{proof}

\subsection{Proof of Theorem \ref{thm:podcomp}}
\begin{proof}
Starting from the $i$-th entry of the diagonal of $\Omega_{\boldsymbol{\theta}}$, $i = 1, \cdots, p$, we get:
\begin{align*}
\Omega_{\boldsymbol{\theta},ii} &= \sum_{i \in \bm{u} \subseteq [p]} \Gamma_{|\bm{u}|}^{(\theta)} \prod_{l\in \bm{u}} \theta_l\\
&= \sum_{k=1}^p \sum_{i \in \bm{u} \subseteq [p], |\bm{u}| = k} \Gamma_{|\bm{u}|}^{(\theta)} \prod_{l\in \bm{u}} \theta_l\\
&= \theta_i \sum_{k=1}^p \Gamma_{k}^{(\theta)} \sum_{\bm{u} \subseteq [p] \setminus \{i\}, |\bm{u}| = k-1} \prod_{l\in \bm{u}}\theta_l\\
&= \theta_i \sum_{k=1}^p \Gamma_{k}^{(\theta)} r_{p,k-1}^{(-i)},
\end{align*}
where $\displaystyle r_{s,k}^{(-i)} = \sum_{\bm{u} \subseteq [s] \setminus \{i\}, |\bm{u}| = k} \prod_{l\in \bm{u}}\theta_l$ for $s = 0, \cdots, p$. For $s > 0$, $s \neq i$, note that:
\[r_{s,k}^{(-i)} = \sum_{s \in \bm{u} \subseteq [s] \setminus \{i\}, |\bm{u}| = k} \prod_{l\in \bm{u}}\theta_l + \sum_{s \notin \bm{u} \subseteq [s] \setminus \{i\}, |\bm{u}| = k} \prod_{l\in \bm{u}}\theta_l = \theta_{s} r_{s-1,k-1}^{(-i)} + r_{s-1,k}^{(-i)},\]
with initial values $r_{s,0}^{(-i)} = 1$ and $r_{s,k}^{(-i)} = 0$ for $k > s$. This proves the correctedness of the recursive procedure.
\end{proof}

\end{appendices}

\end{document}